\DeclareSymbolFont{largesymbols}{OMX}{zplm}{m}{n} 
\newcolumntype{C}{>{$}c<{$}} 
\numberwithin{equation}{section}
\renewcommand{\ge}{\geq}
\renewcommand{\le}{\leq}
\DeclareMathOperator{\ad}{ad}
\DeclareMathOperator{\vspn}{span}
\newcommand{\alg}[1]{\mathfrak{#1}} 
\newcommand{\grp}[1]{\mathsf{#1}} 
\newcommand{\func}[2]{#1 \left( #2 \right)} 
\newcommand{\tfunc}[2]{#1 \bigl( #2 \bigr)} 
\newcommand{\brac}[1]{\left( #1 \right)}
\newcommand{\tbrac}[1]{\bigl( #1 \bigr)}
\newcommand{\sqbrac}[1]{\left[ #1 \right]}
\newcommand{\set}[1]{\left\{ #1 \right\}}
\newcommand{\tset}[1]{\bigl\{ #1 \bigr\}}
\newcommand{\st}{\mspace{5mu} : \mspace{5mu}} 
\newcommand{\abs}[1]{\left\lvert #1 \right\rvert}
\newcommand{\ZZ}{\mathbb{Z}}
\newcommand{\NN}{\mathbb{N}}
\newcommand{\QQ}{\mathbb{Q}}
\newcommand{\RR}{\mathbb{R}}
\newcommand{\CC}{\mathbb{C}}
\newcommand{\pd}{\partial}     
\newcommand{\dd}{\mathrm{d}}   
\newcommand{\ii}{\mathfrak{i}} 
\newcommand{\ee}{\mathsf{e}}   
\newcommand{\wun}{\mathbf{1}}  
\newcommand{\bra}[1]{\bigl\langle #1 \bigr\rvert}
\newcommand{\ket}[1]{\bigl\lvert #1 \bigr\rangle}
\newcommand{\braket}[2]{\bigl\langle #1 \bigm\vert #2 \bigr\rangle}                  
\newcommand{\normord}[1]{\mbox{${} : #1 : {}$}} 
\newcommand{\comm}[2]{\bigl[ #1 , #2 \bigr]}
\newcommand{\inner}[2]{\bigl\langle #1 , #2 \bigr\rangle}
\newcommand{\corrfn}[1]{\bigl\langle #1 \bigr\rangle}
\newcommand{\inprod}[2]{\inner{#1}{#2}}                   
\newcommand{\intprod}[2]{\left\langle #1 , #2 \right\rangle}
\newcommand{\hv}[1]{u_{#1}}    
\newcommand{\gv}[1]{v_{#1}}    
\newcommand{\av}[1]{w_{#1}}    
\newcommand{\wv}[1]{\ket{#1}}  
\newcommand{\dwv}[1]{\bra{#1}} 
\newcommand{\gvac}{\gv{}}      
\newcommand{\Ra}{\Rightarrow}
\newcommand{\lra}{\longrightarrow}
\newcommand{\dses}[5]{0 \lra #1 \overset{#2}{\lra} #3 \overset{#4}{\lra} #5 \lra 0} 
\newcommand{\symfunc}{\Lambda}                            
\newcommand{\sympol}[1]{\Lambda_{#1}}                     
\newcommand{\monsym}[1]{\mathsf{m}_{#1}}                  
\newcommand{\fmonsym}[2]{\mathsf{m}_{#1}(#2)}             
\newcommand{\dmonsym}[2]{\mathsf{g}_{#1}^{#2}}            
\newcommand{\fdmonsym}[3]{\mathsf{g}_{#1}^{#2}\left(#3\right)}
\newcommand{\elsym}[1]{\mathsf{e}_{#1}}                   
\newcommand{\felsym}[2]{\mathsf{e}_{#1}\left(#2\right)}
\newcommand{\powsum}[1]{\mathsf{p}_{#1}}                  
\newcommand{\fpowsum}[2]{\mathsf{p}_{#1}\left(#2\right)}
\newcommand{\jack}[2]{\mathsf{P}_{#1}^{#2}}               
\newcommand{\fjack}[3]{\mathsf{P}_{#1}^{#2}\left(#3\right)}
\newcommand{\djack}[2]{\mathsf{Q}_{#1}^{#2}}              
\newcommand{\fdjack}[3]{\mathsf{Q}_{#1}^{#2}\left(#3\right)}
\newcommand{\categ}[1]{\mathscr{#1}}   
\newcommand{\affine}[1]{\widehat{#1}}
\newcommand{\SLG}[2]{\grp{#1} \bigl( #2 \bigr)}                             
\newcommand{\SLA}[2]{\alg{#1} \bigl( #2 \bigr)}                             
\newcommand{\AKMA}[2]{\affine{\alg{#1}} \left( #2 \right)}                  
\newcommand{\GA}{\alg{G}}              
\newcommand{\HA}{\alg{H}}              
\newcommand{\UEA}{\alg{U}}             
\newcommand{\VOA}[1]{\mathsf{#1}}
\newcommand{\MinMod}[2]{\VOA{M} \bigl( #1 , #2 \bigr)}    
\newcommand{\AdmMod}[2]{\VOA{A}_1 \bigl( #1 , #2 \bigr)}  
\newcommand{\UVOA}[1]{\VOA{V}_{#1}}                       
\newcommand{\zhu}[1]{\textup{Zhu} \bigl[ #1 \bigr]} 
\newcommand{\filtzhu}[2]{\textup{Zhu}_{#1} \bigl[ #2 \bigr]}
\newcommand{\Fin}[1]{\overline{#1}}
\newcommand{\conjaut}{\mathsf{w}}                     
\newcommand{\finconjaut}{\Fin{\mathsf{w}}}            
\newcommand{\conjmod}[1]{\tfunc{\conjaut}{#1}}        
\newcommand{\finconjmod}[1]{\tfunc{\finconjaut}{#1}}  
\newcommand{\ghconjaut}{\mathsf{c}}                   
\newcommand{\finghconjaut}{\Fin{\mathsf{c}}}
\newcommand{\ghconjmod}[1]{\tfunc{\ghconjaut}{#1}}    
\newcommand{\finghconjmod}[1]{\tfunc{\finghconjaut}{#1}}
\newcommand{\Mod}[1]{\mathcal{#1}}          
\newcommand{\FF}[1]{\Mod{F}_{#1}}                 
\newcommand{\GhostVac}{\Mod{G}}                   
\newcommand{\GhostRel}[1]{\Mod{G}_{#1}}           
\newcommand{\FinGhostVac}{\Fin{\Mod{G}}}          
\newcommand{\FinGhostRel}[1]{\Fin{\Mod{G}}_{#1}}  
\newcommand{\Wak}[1]{\Mod{W}_{#1}}                
\newcommand{\WakRel}[1]{\Mod{W}_{#1}}             
\newcommand{\FinIrr}[1]{\Fin{\Mod{L}}_{#1}}  
\newcommand{\FinDisc}[1]{\Fin{\Mod{D}}_{#1}} 
\newcommand{\FinRel}[1]{\Fin{\Mod{R}}_{#1}}  
\newcommand{\Ver}[1]{\Mod{V}_{#1}}          
\newcommand{\Irr}[1]{\Mod{L}_{#1}}          
\newcommand{\Disc}[1]{\Mod{D}_{#1}}         
\newcommand{\Rel}[1]{\Mod{E}_{#1}}          
\newcommand{\RelVer}[1]{\Mod{R}_{#1}}       
\newcommand{\vop}[1]{\mathscr{V}_{#1}}      
\newcommand{\scr}{\mathscr{Q}}              
\newcommand{\antiscr}{\mathscr{P}}          
\newcommand{\scrs}[1]{\scr^{[#1]}}          
\newcommand{\secref}[1]{Section~\ref{#1}}
\newcommand{\appref}[1]{Appendix~\ref{#1}}
\newcommand{\thmref}[1]{Theorem~\ref{#1}}
\newcommand{\propref}[1]{Proposition~\ref{#1}}
\newcommand{\corref}[1]{Corollary~\ref{#1}}
\newcommand{\cft}{conformal field theory}
\newcommand{\cfts}{conformal field theories}
\newcommand{\uea}{universal enveloping algebra}
\newcommand{\lcft}{logarithmic conformal field theory}
\newcommand{\lcfts}{logarithmic conformal field theories}
\newcommand{\WZW}{Wess-Zumino-Witten}
\newcommand{\ope}{operator product expansion}
\newcommand{\opes}{operator product expansions}
\newcommand{\hw}{highest weight}
\newcommand{\lw}{lowest weight}
\newcommand{\rhw}{relaxed \hw{}}
\newcommand{\hwv}{\hw{} vector}
\newcommand{\rhwv}{relaxed \hwv{}}
\newcommand{\hwvs}{\hw{} vectors}
\newcommand{\rhwvs}{relaxed \hwvs{}}
\newcommand{\sv}{singular vector}
\newcommand{\svs}{singular vectors}
\newcommand{\hwm}{\hw{} module}
\newcommand{\rhwm}{\rhw{} module}
\newcommand{\hwms}{\hw{} modules}
\newcommand{\lwms}{\lw{} modules}
\newcommand{\rhwms}{\rhw{} modules}
\newcommand{\voa}{vertex operator algebra}
\newcommand{\voas}{vertex operator algebras}
\newcommand{\rhs}{right-hand side}
\newtheoremstyle{cited}                                 
  {3pt}                                                 
  {3pt}                                                 
  {\itshape}                                            
  {}                                                    
  {\bfseries}                                           
  {.}                                                   
  {.5em}                                                
  {\thmname{#1} \thmnumber{#2} \thmnote{\normalfont#3}} 
\theoremstyle{plain}
\newtheorem{thm}{Theorem}[section]
\newtheorem{prop}[thm]{Proposition}
\newtheorem{cor}[thm]{Corollary}
\theoremstyle{definition} 
\newtheorem*{conj}{Conjecture}
\newtheorem*{defn}{Definition}
\theoremstyle{cited}
\newtheorem{cthm}[thm]{Theorem}
\newtheorem{cprop}[thm]{Proposition}
\begin{document}

\title[Jack polynomials and admissible $\AKMA{sl}{2}$]{Relaxed singular vectors, Jack symmetric functions \\ and fractional level $\AKMA{sl}{2}$ models}

\author[D Ridout]{David Ridout}

\address[David Ridout]{
Department of Theoretical Physics \\
Research School of Physics and Engineering;
and
Mathematical Sciences Institute;
Australian National University \\
Acton, ACT 2601 \\
Australia
}

\email{david.ridout@anu.edu.au}

\author[S Wood]{Simon Wood}

\address[Simon Wood]{
Department of Theoretical Physics \\
Research School of Physics and Engineering;
and
Mathematical Sciences Institute;
Australian National University \\
Acton, ACT 2601 \\
Australia
}

\email{simon.wood@anu.edu.au}

\thanks{\today}

\begin{abstract}
The fractional level models are (logarithmic) conformal field theories associated with affine Kac-Moody (super)algebras at certain levels $k \in \QQ$.  They are particularly noteworthy because of several longstanding difficulties that have only recently been resolved.  Here, Wakimoto's free field realisation is combined with the theory of Jack symmetric functions to analyse the fractional level $\AKMA{sl}{2}$ models.  The first main results are explicit formulae for the singular vectors of minimal grade in relaxed Wakimoto modules.  These are closely related to the minimal grade singular vectors in relaxed (parabolic) Verma modules.  Further results include an explicit presentation of Zhu's algebra and an elegant new proof of the classification of simple relaxed highest weight modules over the corresponding vertex operator algebra.  These results suggest that generalisations to higher rank fractional level models are now within reach.
\end{abstract}

\maketitle

\onehalfspacing

\section{Introduction} \label{sec:Intro}

We begin by briefly summarising some historical details regarding fractional level $\AKMA{sl}{2}$ models and the application of Jack symmetric functions to \cft{}.  The work presented here forms a part of of an ambitious project aimed at elucidating the properties of general fractional level models as fundamental examples of \lcfts{}.  A short digression on the notion of admissibility follows as we use the term in a non-standard manner as compared to much of the literature.  This also provides us with an opportunity to fix some notation.  Finally, we outline the main results of the research reported here.

\subsection{Fractional level models and Jack symmetric functions}

The fractional level $\AKMA{sl}{2}$ models are \cfts{} with a long and notorious history, originally proposed by Kent \cite{KenInf86} as non-unitary models whose existence would lead to a uniform coset construction for all the Virasoro minimal models.  This proposal received a significant boost from the subsequent announcement of Kac and Wakimoto \cite{KacMod88} that $\AKMA{sl}{2}$ has, for precisely the levels required for this coset construction, a finite set of simple \hwms{} whose characters close under modular transformations.  Moreover, they observed that this property persisted for higher rank affine Kac-Moody algebras.  However, computations \cite{KohFus88,BerFoc90,AwaFus92} of the fusion rules of the purported theories gave confusing and conflicting results.  In particular, substituting the modular S-matrix entries into the Verlinde formula resulted in negative fusion multiplicities.  Despite a flurry of subsequent work, no resolution was agreed upon; in their discussion, the authors of the textbook \cite{DiFCon97} suggested that the fractional level models may be ``intrinsically sick''.

The first steps towards curing this sickness were made by Gaberdiel \cite{GabFus01}, whose explicit fusion computations for the $\AKMA{sl}{2}$ model of level $k=-\tfrac{4}{3}$ demonstrated that one was forced to consider modules that were not \hw{}.  Indeed, he found that fusing the \hwms{} of Kac and Wakimoto resulted in infinite numbers of modules whose conformal weights were not bounded below, modules that were not \hw{} with respect to any Borel subalgebra (relaxed modules), and modules upon which the Virasoro mode $L_0$ acted non-semisimply.  This showed that the $k=-\tfrac{4}{3}$ model was not a rational \cft{}, as had been implicitly assumed in earlier studies, but was, in fact, logarithmic.  Subsequent works \cite{LesSU202,LesLog04,RidSL210,RidFus10} extended these results to $k=-\tfrac{1}{2}$ and thence to the closely related $\beta \gamma$ ghost systems \cite{RidSL208,RidFus10,RidBos14}.  Moreover, it was also shown \cite{RidSL208} that a misunderstanding concerning the role of convergence regions in the modular transformations of Kac and Wakimoto was to blame for the failure of the Verlinde formula.  To get a genuine action of the modular group and a working Verlinde formula, one needs to include all the simple relaxed modules and all their twists under spectral flow \cite{CreMod12,CreMod13}.

We regard these recent advances as demonstrating that the sickness of the $\AKMA{sl}{2}$ models has been cured and we expect that this cure will be just as effective for higher rank Kac-Moody algebras.  Given the logarithmic nature of the theories, it is now reasonable to expect that the fractional level models will play an important role in understanding \lcft{}, much as the non-negative integer level \WZW{} models did for rational \cft{}.  We therefore view the fractional level models as objects that deserve intense study.  In particular, one should at least determine the spectrum of simple modules, compute character formulae and verify that the modular properties of these characters lead to non-negative integers upon applying the Verlinde formula.  More ambitiously, one would like to understand the fusion ring generated by the simple modules, the structure of the indecomposable modules generated by fusion, and the corresponding three- and four-point correlation functions.  From a mathematical perspective, one can ask about homological properties of the spectrum (a category of modules over the corresponding \voa{}) including rigidity and the identification of the projective and injective modules.  We think that it is not unreasonable to expect that there are beautiful answers to all these questions, given that they concern structures built from affine Kac-Moody algebras.

While some, though not all, of these questions have been answered for $\AKMA{sl}{2}$ at levels $k=-\tfrac{4}{3}$ and $-\tfrac{1}{2}$, some of the machinery employed is clearly unfeasible for more general levels and ranks.  In our opinion, it is likely that the modular story will remain under control within the so-called standard module framework \cite{CreLog13,RidVer14}.  On the other hand, brute force methods such as the Nahm-Gaberdiel-Kausch algorithm \cite{NahQua94,GabInd96} for fusion products are already computationally-prohibitive in all but the simplest cases.  Further progress will instead require the development of more general alternatives and free field realisations seem to be the obvious candidates in this respect.  Here, we use a free field realisation to address one of the most basic questions of all, that of determining the simple modules of a fractional level theory.  We restrict ourselves to the $\AKMA{sl}{2}$ models as a testing ground, leaving the challenge of higher ranks for future publications.

The standard free field realisation of an affine Kac-Moody algebra is due to Wakimoto \cite{WakFoc86}, for $\AKMA{sl}{2}$, and Feigin and Frenkel \cite{FeiFam88} in general.  For $\AKMA{sl}{2}$, the Wakimoto realisation combines a free boson with a pair of bosonic ghosts of central charge $c=2$.  The virtue of free field theories such as these is that their underlying Lie algebras are almost abelian.  More precisely, the negative modes (creation operators) all commute among themselves, hence one can invoke symmetric group theory, in particular symmetric polynomials and functions, to analyse certain representation-theoretic questions.  For the question of classifying the simple modules of the fractional level $\AKMA{sl}{2}$ models, it turns out that the key lies in the Jack symmetric functions \cite{JacCla70}.

The relevance of Jack symmetric functions to \cft{} goes back to the work of Mimachi and Yamada \cite{MimSin95} who realised that they provide elegant expressions for the \svs{} of Verma modules over the Virasoro algebra.  Explicit \sv{} formulae are useful for many field-theoretic investigations including those of the spectrum of primary fields, the fusion rules and the correlation functions.  However, the general \sv{} formulae that were then known, for example those of \cite{BenDeg88,BauCov91}, are not particularly tractable for these purposes.  On the other hand, the Jack function formulae were derived directly from the Feigin-Fuchs free field realisation of the Virasoro algebra (also known as the Coulomb gas), as developed by Tsuchiya and Kanie \cite{TsuFoc86} and Felder \cite{FelBRST89}, which is far better suited to explicit computation.

Unfortunately, it appears that the power of symmetric function theory was not immediately exploited in \cft{} studies, perhaps because of an unfamiliarity with Jack symmetric functions (Macdonald's influential textbook \cite{MacSym95} did not appear until several years later).  However, in more recent times, symmetric function theory has been embraced by the community, particularly as a means to prove the AGT conjecture \cite{AldLio10} which relates Liouville \cfts{} to the instanton calculus of Yang-Mills theories (although it now appears that a generalisation of the Jack symmetric functions will be required \cite{MorTow14} to prove this conjecture).

The work reported here, using Jack symmetric functions to classify simple modules of the fractional level $\AKMA{sl}{2}$ models, has its genesis in \cite{TsuExt13} where this formalism was used to classify, among other things, the simple modules of a family \cite{FeiLog06} of (logarithmic) \cfts{} called the $(p_+,p_-)$ triplet models.  The methods developed for this purpose were also applied to give a far more elegant proof of the \sv{} formulae of Mimachi and Yamada.  More recently \cite{RidJac14}, it was shown that these methods lead to an elegant new proof of the classification of the Virasoro minimal model modules.  We recall that the original classification proof of Wang \cite{WanRat93} combined a projection formula for the \sv{} of the vacuum module, stated by Feigin and Fuchs and eventually proven (fifteen years later) by Astashkevich and Fuchs \cite{AstAsy97}, with some intricate cohomological arguments.

The content of this paper is then that these simplified methods generalise to the fractional level $\AKMA{sl}{2}$ models.  Specifically, we deduce explicit formulae for \svs{} in Wakimoto modules and classify the simple modules.  We take this as strong evidence that the symmetric function techniques developed here will further generalise to models over higher rank Kac-Moody algebras and superalgebras.  Moreover, it is clear that these techniques can be profitably exploited to investigate other important representation-theoretic questions including the structure theory of relaxed Verma and Wakimoto modules.  We hope to report on this in the future.

It should be emphasised that, as with the Virasoro case reported in \cite{RidJac14}, the $\AKMA{sl}{2}$ classification result is not new.  The \hw{} classification is due to Adamovi\'{c} and Milas \cite{AdaVer95} and, independently, Dong, Li and Mason \cite{DonVer97}.  However, their proofs mimic that of Wang, relying upon a projection formula for the \sv{} of the vacuum module stated (without proof) by Fuchs \cite{FucTwo89}.  From this projection formula, Adamovi\'{c} and Milas also derive a classification result for what we call, following \cite{FeiEqu98}, the \emph{relaxed} \hwms{}.\footnote{The existence of even more general weight modules follows directly from twisting by the spectral flow automorphisms of $\AKMA{sl}{2}$.  Well known in the physics literature, this twisting is an important ingredient in \cite{FeiEqu98,MalStr01} and first seems to have been explicitly noted for fractional level $\AKMA{sl}{2}$ models in \cite{GabFus01}.}  It is not clear to us if these projection formulae will generalise easily to higher ranks; the tedium of the proof in the Virasoro case alone warrants, in our opinion, the development of the rather more elegant symmetric function methods.  With this in mind, we remark that, to the best of our knowledge, there are no general classification results known for fractional level models of rank greater than $1$.

\subsection{Basic concepts and notation}

Suppose that we have a \cft{} whose chiral algebra is identified as a Lie (super)algebra $\alg{g}$, for example, the Virasoro algebra or an affine Kac-Moody (super)algebra.  More general chiral algebras can be accommodated within the formalism to follow, but this level of generality will suffice for the purposes of the article.  There is always a module of the chiral algebra, called the vacuum module, that carries the structure of a \voa{} $\VOA{V}$.  The elements of the chiral algebra are organised into fields that generate $\VOA{V}$ and the (anti)commutation relations of the chiral algebra are equivalent to the \opes{} of these generating fields.  We make the following definition:
\begin{defn}
Consider a \voa{} $\VOA{V}$ corresponding to a Lie (super)algebra $\alg{g}$ as above.  If the \opes{} of the generating fields constitute a complete set of algebraic relations, then $\VOA{V}$ is said to be \emph{universal}.
\end{defn}
\noindent The terminology comes from noting that a \voa{} is a quotient of $\VOA{V}$ if, and only if, it has a set of generating fields that satisfy the same \opes{} as those of $\VOA{V}$.

One way to understand universality is to consider what it means for the vacuum module.  When $\alg{g}$ is the Virasoro algebra, the vacuum module of the universal \voa{} is the quotient of the Verma module whose (generating) \hwv{} has conformal weight $0$ by the Verma submodule generated by the \sv{} of conformal weight $1$ (the vacuum must be translation-invariant).  For $\alg{g} = \AKMA{sl}{2}$, one quotients the Verma module generated by the \hwv{} of $\SLA{sl}{2}$-weight $0$ by the Verma submodule generated by the \sv{} of $\SLA{sl}{2}$-weight $-2$ (as required by the state-field correspondence).  In both cases, it may happen that this universal vacuum module is not simple.  But, quotienting by a non-trivial proper submodule amounts to imposing additional relations upon the \voa{}, so the result would no longer qualify as universal.

Suppose now that we have a parametrised family of universal \voas{}.  For example, the Virasoro algebra and $\AKMA{sl}{2}$ each define a one-parameter family of \voas{}, parametrised by the central charge $c \in \CC$ and the level $k \in \CC \setminus \set{-2}$, respectively.  We characterise the non-simple members of this family.
\begin{defn}
Suppose that one has a family of universal \voas{} $\set{\VOA{V}_i}_{i \in I}$, parametrised by some index set $I$.  A given value $i$ of the parameter is said to be \emph{admissible} if the corresponding universal \voa{} $\VOA{V}_i$ is not simple.
\end{defn}
\noindent For the Virasoro algebra, the admissible central charges $c$ are then precisely those for which there exist coprime integers $p,p' \in \ZZ_{\ge 2}$ satisfying $c = 13 - 6(t + t^{-1})$, where $t = \tfrac{p}{p'}$.  These central charges correspond, of course, to the Virasoro minimal models which are commonly denoted by $\MinMod{p}{p'}$.  We will also denote the simple Virasoro \voa{} of this (admissible) central charge by $\MinMod{p}{p'}$, for convenience.

For $\AKMA{sl}{2}$, the structure theory of its Verma modules \cite{KacStr79} leads to the following characterisation:
\begin{prop}
The admissible levels \(k\) of the universal \voas{} of $\AKMA{sl}{2}$ are precisely those for which there exist coprime integers \(u\in\ZZ_{\ge2}\) and \(v\in\ZZ_{\ge1}\) satisfying $k=-2+t$, where \(t=\frac{u}{v}\).
\end{prop}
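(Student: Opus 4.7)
The plan is to identify the universal VOA of $\AKMA{sl}{2}$ at level $k$ as a specific quotient of an affine Verma module, then use the Kac-Kazhdan theorem and the Verma module structure theory of \cite{KacStr79} to determine when this quotient fails to be simple.

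As an $\AKMA{sl}{2}$-module, the universal VOA at level $k$ is isomorphic to $\Ver{k\Lambda_0}/\Ver{k\Lambda_0 - \alpha}$, where $\Ver{k\Lambda_0 - \alpha}$ is the sub-Verma module generated by the singular vector $f_0 v_\lambda$ (with $v_\lambda$ the highest weight vector of $\Ver{k\Lambda_0}$ and $\alpha$ the finite positive root of $\SLA{sl}{2}$). This quotient arises because the vacuum $\wun$ of a vertex algebra is annihilated by the entire horizontal $\SLA{sl}{2}$ --- in particular $f_0 \wun = 0$ --- not merely its Borel. Hence $k$ is admissible iff $\Ver{k\Lambda_0}$ contains a singular vector outside $\Ver{k\Lambda_0 - \alpha}$.

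Invoking the Kac-Kazhdan theorem at non-critical level, singular vectors of $\Ver{k\Lambda_0}$ correspond to pairs $(\beta, m)$ with $\beta$ a positive real root and $m \in \ZZ_{\geq 1}$ satisfying $2(k\Lambda_0 + \rho, \beta) = m(\beta, \beta)$. The positive real roots are $\alpha + n\delta$ ($n \geq 0$) and $-\alpha + n\delta$ ($n \geq 1$), each of squared length $2$. Computing with $\rho = 2\Lambda_0 + \tfrac{1}{2}\alpha$ yields, respectively, the Diophantine conditions
\[
n(k+2) = m - 1 \qquad \text{and} \qquad n(k+2) = m + 1.
\]
The solution $(n, m) = (0, 1)$ of the first is always present and reproduces $f_0 v_\lambda$; imaginary roots only yield the excluded critical level. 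Any further solution requires $k + 2 \in \QQ_{>0}$, immediately excluding all irrational and non-positive rational values from admissibility.

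Writing $k + 2 = u/v$ in lowest terms with $u, v \geq 1$, coprimality forces the smallest further solutions to lie at $n = v$: namely $(v, u+1)$ in the first family (always valid) and $(v, u-1)$ in the second (valid iff $u \geq 2$). The second-family solution produces a singular vector of $\SLA{sl}{2}$-weight $+2(u-1)$ at grade $(u-1)v$. The main obstacle is then to verify that this singular vector does not lie in $\Ver{k\Lambda_0 - \alpha}$ when $u \geq 2$, while every first-family singular vector does lie in it when $u = 1$. For $v = 1$ a crude bound already suffices, since $\Ver{k\Lambda_0 - \alpha}$ has maximal $\SLA{sl}{2}$-weight $2(g - 1)$ at grade $g$, strictly less than $2(u - 1)$ at the relevant grade $g = u - 1$. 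For $v \geq 2$ this bound degenerates and one must appeal to the detailed Verma-module embedding diagrams of \cite{KacStr79} --- equivalently, to the dot-action of the affine Weyl group on $k\Lambda_0 + \rho$ --- to exhibit, when $u \geq 2$, a second Verma submodule of $\Ver{k\Lambda_0}$ not contained in $\Ver{k\Lambda_0 - \alpha}$, and to confirm its absence when $u = 1$.
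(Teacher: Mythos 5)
Your proposal is correct and takes essentially the same route as the paper, which offers no argument beyond appealing to the Verma-module structure theory of \cite{KacStr79}: you identify the vacuum module as $\Ver{0}/\Ver{-2}$, extract the Kac--Kazhdan reducibility conditions (correctly, including the weight $2(u-1)$, grade $(u-1)v$ singular vector that the paper later realises as $\scrs{u-1}\wv{p_{u-1,-v}}$), and defer the decisive containment questions to the known embedding diagrams --- which is exactly what the paper's citation does. The only point to tidy is that for $u=1$ the second family also contributes singular vectors (at $n=jv$, $m=ju-1$ with $j\ge 2$), so the $u=1$ check must confirm that these, too, lie in $\Ver{-2}$, as the embedding diagrams indeed show.
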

\noindent Needless to say, the admissible levels of $\AKMA{sl}{2}$ are precisely those of the fractional level $\AKMA{sl}{2}$ models (including those of non-negative integer level, for convenience).  To emphasise the analogy with the Virasoro minimal models $\MinMod{p}{p'}$, we shall denote by $\AdmMod{u}{v}$ both the fractional level $\AKMA{sl}{2}$ model with $k = -2 + \tfrac{u}{v}$ and the corresponding simple \voa{}.  We regard the $\AdmMod{u}{v}$ as the minimal models of $\AKMA{sl}{2}$.  Note that $\AdmMod{k+2}{1}$ requires \(k\in\ZZ_{\ge 0}\) and is therefore just the level $k$ \WZW{} model on $\SLG{SU}{2}$.

It should be clear now that the focus of our interest is not so much on the universal \voas{} themselves, but rather on their admissible level simple quotients.  The point is that these simple quotients will have constrained representation theories, due to their additional defining relations, about which we expect to be able to prove classification theorems.  The representation theory of the universal \voas{} is, on the other hand, unconstrained by additional relations so that (almost) every $\alg{g}$-module is allowed.\footnote{One should only exclude modules, such as the adjoint module, that lead to \opes{} with essential singularities.}

To complete the analogy between Virasoro minimal models and fractional level $\AKMA{sl}{2}$ models, we consider the modules of the simple quotient \voas{}.
\begin{defn}
Suppose that $i \in I$ is admissible for a given family $\set{\VOA{V}_i}$ of universal \voas{}.  Then, any module of the simple quotient of $\VOA{V}_i$ is said to be an \emph{admissible module} of $\VOA{V}_i$.
\end{defn}
\noindent Note that the admissible modules of the universal Virasoro \voa{} are precisely the central charge $c = 13 - 6(\tfrac{p}{p'} + \tfrac{p'}{p})$ modules of the Virasoro minimal model $\MinMod{p}{p'}$.  These are the \hw{} Virasoro modules corresponding to the entries in the Kac table.  The admissible modules of the universal $\AKMA{sl}{2}$ \voas{} are the subject of this article.  As noted above, they were first classified in \cite{AdaVer95,DonVer97}.  We remark that they can also be associated with the entries of a table with similar properties to the Virasoro Kac table, see \cite{CreMod13}.

Finally, note that the definition of admissible module given above is not the original definition of Kac and Wakimoto \cite{KacMod88}, who originally defined admissible \hwms{}, for arbitrary affine Kac-Moody algebras, in terms of criteria that guaranteed a character formula, generalising that of Weyl-Kac, and good modular properties.  The \voa{} definition given above is certainly more general and, in our opinion, more fundamental.  However, the two definitions of admissibility coincide for \hw{} $\AKMA{sl}{2}$-modules. A generalised version of this coincidence for higher rank affine Kac-Moody algebras has recently appeared in \cite{AraRat15}.

\subsection{Outline}

We close with a brief outline of the contents of this paper.  \secref{sec:GenHWTh} is a pedagogical introduction to the notion of \rhw{} theory, crucial for studying the $\AdmMod{u}{v}$ models.  The idea actually reduces to a special case of parabolic (also called parahoric or generalised) \hw{} theory, but the connection to \voas{} via Zhu's algebra is so important that we feel it warrants separate consideration.  This is then followed by a detailed discussion of the \rhwms{} of the three Lie algebras used in the remainder of the paper:  The Heisenberg algebra, the (bosonic) $\beta \gamma$ ghost algebra and, of course, $\AKMA{sl}{2}$.

\secref{sec:Wak} opens with a brief review of the Wakimoto free field realisation of $\AKMA{sl}{2}$ and the corresponding Wakimoto modules.  We pay particular attention to the screening fields and operators as a means to motivate the usage of symmetric function theory.  We then discuss the construction of certain \svs{} in \hw{} and \rhw{} $\AKMA{sl}{2}$-modules using Jack symmetric function technology to deduce explicit formulae for these (\thmref{thm:ConstructSVs}).  We believe that these formulae are new:  Even in the standard \hw{} case, the only similar result we are aware of is an old paper of Kato and Yamada \cite{KatMis92} where an integral formula is derived and evaluated in special cases using Schur polynomials.  For \svs{} of \rhwms{}, the only other formulae we know are of the Malikov-Feigin-Fuchs (complex power) form \cite{FeiEqu98}.

One consequence of this \sv{} study is that at admissible levels, Wakimoto's construction yields a free field realisation of the universal \voa{} of $\AKMA{sl}{2}$ rather than of its simple quotient $\AdmMod{u}{v}$.  This is important as it means that the \sv{} of the universal vacuum module is accessible in the Wakimoto realisation, hence it may be exploited to such ends as determining the spectrum of $\AdmMod{u}{v}$-modules.  We first obtain an upper bound on the \hw{} spectrum (\corref{cor:ClassHWMods}) using a surprisingly effortless calculation that combines the form of the vacuum \sv{} with the specialisation formula for Jack symmetric functions.  This bound turns out to be saturated, but to prove this we must address the more involved \rhw{} spectrum.  In this case, a few more symmetric function manipulations allow us to identify an explicit presentation for Zhu's algebra (\thmref{thm:ComputeZhu}) in terms of generators and relations.

It is now easy to classify the simple \rhwms{} of $\AdmMod{u}{v}$, reproducing in an elegant fashion the results of \cite{AdaVer95,DonVer97}.  We emphasise that our proofs are, to the best of our knowledge, independent of those which have appeared in the literature.  We also deduce that \hw{} $\AdmMod{u}{v}$-modules are semisimple, giving a new proof of an old result that (essentially) appeared in \cite{KacMod88}.  In contrast, $\AdmMod{u}{v}$ is shown to admit non-semisimple \rhwms{} with two composition factors and these are characterised using short exact sequences.  We moreover conjecture that these modules, together with their simple analogues, exhaust the indecomposable \rhw{} $\AdmMod{u}{v}$-modules, remarking that proving this would require, among other things, a more detailed knowledge of the submodule structure of \rhw{} $\AKMA{sl}{2}$-modules.  Finally, we prove that the Virasoro zero mode $L_0$ acts semisimply on all \rhw{} $\AdmMod{u}{v}$-modules, independent of our conjecture on the indecomposable spectrum, and discuss briefly how this is consistent with the expectation that, for $v>1$, the $\AdmMod{u}{v}$ model is a \lcft{}.

\appref{sec:SymmPoly} gives a brief, but thorough, introduction to the aspects of symmetric function theory, in particular, those relating to Jack symmetric functions, that are used in the text.  This material is all standard and may be found in \cite{MacSym95}.  We have also included a brief introduction to Zhu's algebra in \appref{sec:zhu}, concentrating on motivating it as an abstract version of the algebra of zero modes acting on the space of ``ground states'' of a \rhwm{}.  This appendix is aimed at physicists, in particular, it uses physics conventions for Fourier expansions, but we hope that it will also prove useful to mathematicians.

\section{Generalising highest weight theory} \label{sec:GenHWTh}

In this section, we consider a generalisation of \hw{} theory that we will qualify as \emph{relaxed}.  Originally introduced for $\AKMA{sl}{2}$ in order to study a correspondence relating $\AKMA{sl}{2}$-modules to those over the $N=2$ superconformal algebra \cite{FeiEqu98}, \rhwms{} have since appeared as necessary constituents of the $\SLG{SL}{2;\RR}$ \WZW{} model \cite{MalStr01}, in admissible level fusion rules \cite{GabFus01,RidFus10}, in relations to logarithmic minimal models \cite{AdaCon05,RidSL210,CreCos13}, in demonstrating the modular invariance of admissible level theories \cite{CreMod12,CreMod13}, and in the full description of bosonic $\beta \gamma$ ghosts \cite{RidFus10,RidBos14}.  Necessity aside, we feel that from some points of view, particularly that of Zhu's algebra \cite{ZhuMod96}, discussed in \appref{sec:zhu}, it is more natural to consider these relaxed modules instead of the standard \hwms{} that one typically encounters in rational \cft{}.

\subsection{Relaxed highest weight theory} \label{sec:Relaxed}

We recall that the formalism of \hw{} theory for a Lie algebra $\alg{g}$ is built from a triangular decomposition
\begin{equation} \label{eq:TriDec}
\alg{g} = \alg{g}_- \oplus \alg{h} \oplus \alg{g}_+.
\end{equation}
This is a vector space direct sum of subalgebras of $\alg{g}$ in which the Cartan subalgebra $\alg{h}$ is abelian and acts semisimply, through the adjoint action, on both $\alg{g}_-$ and $\alg{g}_+$.  In particular, $\comm{\alg{h}}{\alg{g}_{\pm}} \subseteq \alg{g}_{\pm}$.  Moreover, the subalgebras $\alg{g}_-$ and $\alg{g}_+$ are assumed to be antiequivalent in that there exists a (linear) order two antiautomorphism, the adjoint $\dag$, satisfying $\alg{g}_{\pm}^{\dag} = \alg{g}_{\mp}$.  The elements of $\alg{h}$ are supposed to be fixed by $\dag$, meaning that each $x \in \alg{h}$ is self-adjoint \cite{MooLie95}.  This is, however, a little too restrictive; we instead only demand that the adjoint preserves the Cartan subalgebra.  Since $\alg{h}$ is abelian, this implies that each $x \in \alg{h}$ is normal:  $\comm{x}{x^{\dag}} = 0$.  Note that being a linear antiautomorphism means that $\brac{ax}^{\dag} = a x^{\dag}$ and $\comm{x}{y}^{\dag} = \comm{y^{\dag}}{x^{\dag}}$, for all $a \in \CC$ and $x,y \in \alg{g}$.

Given such a triangular decomposition, one defines a \hwv{} in a given $\alg{g}$-module to be a simultaneous eigenvector of the elements of $\alg{h}$ which is annihilated by all the elements of $\alg{g}_+$.  Any module generated by a single \hwv{} is called a \hwm{}.  Conspicuous examples include the Verma modules $\Ver{\lambda} = \UEA \alg{g} \otimes_{\UEA \alg{b}} \CC_{\lambda}$, where $\UEA \alg{g}$ denotes the \uea{} of $\alg{g}$, $\UEA \alg{b}$ that of the Borel subalgebra $\alg{b} = \alg{h} \oplus \alg{g}_+$, $\lambda \in \alg{h}^*$ is the \hw{}, and $\CC_{\lambda}$ is the one-dimensional $\alg{b}$-module upon which $\alg{g}_+$ acts as $0$ and each $x \in \alg{h}$ acts as $\func{\lambda}{x} \in \CC$.  The \hwv{} generating $\Ver{\lambda}$ is $\wun \otimes_{\UEA \alg{b}} 1_{\lambda}$, where $\wun$ denotes the unit of $\UEA \alg{g}$ (and $1_{\lambda}$ that of $\CC_{\lambda} \cong \CC$).

The Lie algebras $\alg{g}$ that one typically encounters in \cft{} have triangular decompositions.  However, they also have more structure in that they are graded by the semisimple action of the Virasoro zero mode $L_0$ so that, if $\alg{g}_n$ denotes the $\func{\ad}{L_0}$-eigenspace of eigenvalue $-n$, then $\comm{\alg{g}_m}{\alg{g}_n} \subseteq \alg{g}_{m+n}$.  Moreover, the Cartan subalgebra $\alg{h}$ is generally chosen to include $L_0$, which we will always assume is self-adjoint, hence it follows that $\alg{g}_m^{\dag} = \alg{g}_{-m}$.  Given this structure, the following definition is natural:
\begin{defn}
Let $\alg{g}$ be a Lie algebra with triangular decomposition \eqref{eq:TriDec} (where the Cartan subalgebra may include elements that are not self-adjoint).  If there exists $L_0 \in \alg{h}$ such that $\alg{g} = \bigoplus_n \alg{g}_n$, where $\alg{g}_n$ is the eigenspace of $\func{\ad}{L_0}$ of eigenvalue $-n$, and $L_0$ is the zero mode of a subalgebra of $\alg{g}$ isomorphic to the Virasoro algebra, then we will say that $\alg{g}$ is \emph{conformally graded}.
\end{defn}
\noindent The most obvious example is the Virasoro algebra itself which is clearly conformally graded with $\alg{g}_n = \vspn \set{L_n}$, for $n \in \ZZ \setminus \set{0}$, $\alg{g}_0 = \vspn \set{L_0, C}$ and $\alg{g}_n = \set{0}$ otherwise.  This shows that the usual, but somewhat confusing, convention that $\comm{L_0}{L_n} = -L_n$ is responsible for $\alg{g}_n$ having $\func{\ad}{L_0}$-eigenvalue $-n$ in the above definition.
\begin{defn}
Given a conformally graded Lie algebra $\alg{g}$, its \emph{relaxed triangular decomposition} is
\begin{equation} \label{eq:RelTriDec}
\alg{g} = \alg{g}_< \oplus \alg{g}_0 \oplus \alg{g}_>,
\end{equation}
where $\alg{g}_< = \bigoplus_{n<0} \alg{g}_n$ and $\alg{g}_> = \bigoplus_{n>0} \alg{g}_n$.  A \emph{\rhwv{}} is then a simultaneous eigenvector of $\alg{h} \subseteq \alg{g}_0$ that is annihilated by $\alg{g}_>$.  A \emph{\rhwm{}} is a module that is generated by a single \rhwv{}.  The \emph{relaxed Borel subalgebra} is $\alg{g}_{\ge} = \alg{g}_0 \oplus \alg{g}_>$ and a \emph{relaxed Verma module} is a $\alg{g}$-module isomorphic to $\RelVer{\Mod{M}} = \UEA \alg{g} \otimes_{\UEA \alg{g}_{\ge}} \Mod{M}$, where $\Mod{M}$ is a simple weight module of $\alg{g}_0$ upon which $\alg{g}_>$ acts as $0$.
\end{defn}
\noindent These definitions have obvious analogues for Lie superalgebras and other more general structures, but we will not need this level of generality in what follows.

In this article, we will only consider triangular decompositions of a conformally graded Lie algebra $\alg{g}$ that satisfy $\alg{g}_< \subseteq \alg{g}_-$ and $\alg{g}_> \subseteq \alg{g}_+$.  Thus, positive modes are always in $\alg{g}_+$ and negative modes are always in $\alg{g}_-$.  When $\alg{g}_0 = \alg{h}$, the relaxed triangular decomposition then coincides with the (unique) triangular decomposition of this type.  We will shortly see examples of \rhwms{} which are not \hwms{} in the usual sense.  First, however, we mention that \rhwms{} may be identified as generalised \hwms{} with respect to the parabolic subalgebra $\alg{g}_{\ge} = \alg{g}_0 \oplus \alg{g}_>$.  We recall that a parabolic subalgebra is any subalgebra that contains a Borel subalgebra (see \cite{MooLie95,HumRep08} for a quick overview of parabolic subalgebras).

It will be occasionally convenient to take this a step further and introduce a category of modules, generalising the well known category $\categ{O}$, that contains the \rhwms{} of $\alg{g}$.  This is essentially (a variant of) the parabolic category $\categ{O}$ discussed, for example, in \cite{HumRep08}.  The explicit details of this category are not essential for understanding the results to follow, but we shall devote a few paragraphs to explaining their physical motivation.
\begin{defn}
Given a conformally graded Lie algebra $\alg{g} = \alg{g}_< \oplus \alg{g}_0 \oplus \alg{g}_>$, define the \emph{relaxed category $\categ{R}$} to consist of the $\alg{g}$-modules $\Mod{M}$ satisfying the following axioms:
\begin{itemize}
\item $\Mod{M}$ is finitely generated.
\item $\Mod{M}$ is a weight module (the action of the Cartan subalgebra $\alg{h}$ is semisimple).
\item The action of $\alg{g}_>$ is locally nilpotent:  For each $v \in \Mod{M}$, the space $\UEA \alg{g}_> \cdot v$ is finite-dimensional.
\end{itemize}
The morphisms are the $\alg{g}$-module homomorphisms between these modules, as usual.
\end{defn}
\noindent All \hw{} and \rhwms{} belong to category $\categ{R}$.  Moreover, if $\alg{g}$ has finite-dimensional root spaces, then it follows that each module in $\Mod{M}$ will have finite-dimensional weight spaces.  Another important consequence of these axioms is that every (non-zero) module in category $\categ{R}$ possesses a \rhwv{}, hence that every simple category $\categ{R}$ module is a \rhwm{}.

One can, and should, ask whether the mathematical axioms that we impose on category $\categ{R}$ will end up excluding modules relevant for applications.  This is an important question and the answer is that they do exclude relevant modules, but in a well-controlled manner.  Our motivation for introducing this category is that we want to classify the modules of the admissible level $k$ \voa{} $\AdmMod{u}{v}$ by identifying these modules as $\AKMA{sl}{2}$-modules.  For the physical application of investigating the corresponding \cfts{}, we must insist that the category of $\AdmMod{u}{v}$-modules be closed under the conjugation operation of $\AKMA{sl}{2}$ (see \secref{sec:RelEx}) and fusion.  Moreover, we want the characters to behave well under modular transformations so that one can identify modular invariant partition functions and compute (Grothendieck) fusion rules from a Verlinde-type formula.

Category $\categ{O}$ is not sufficient for these purposes, in particular, the conjugate of an $\AdmMod{u}{v}$-module from category $\categ{O}$ need not lie in category $\categ{O}$.  Relaxing to category $\categ{R}$ alleviates this problem and has recently been shown \cite{CreMod13} to lead to characters with excellent modular behaviour, provided that one extends the category again to take into account twists by the so-called spectral flow automorphisms.  The upshot is that these spectrally-flowed modules are not in category $\categ{R}$, but the twisting is very well understood, justifying the above statement that the exclusion of these physically relevant modules is under control.

Axiomatically, accounting for the spectrally-flowed modules would require weakening the local nilpotency axiom above.  However, this axiom has the advantage that category $\categ{R}$ provides a very natural setting for the important, and very useful, technology of Zhu's algebra, discussed in \appref{sec:zhu}.  We restrict to weight modules because the fusion coproduct formulae \cite{GabFus94} for $\AKMA{sl}{2}$ show that the fusion product of two weight modules will again be weight.  Similarly, the conjugate of a weight module is weight and omitting non-weight modules does not restrict the characters in any way.  Moreover, being weight does not preclude the Virasoro zero mode $L_0$ from acting non-semisimply as is required in \lcfts{}.

To summarise, the relaxed category $\categ{R}$ is a rich source of modules for affine Kac-Moody (super)algebras that appears to be even more relevant to \cft{} than the much more familiar category $\categ{O}$.

\subsection{Examples} \label{sec:RelEx}

In this paper, we will use the Wakimoto free field realisation to study the \rhwms{} of $\AKMA{sl}{2}$ and determine which of these are modules for the admissible level \voas{} $\AdmMod{u}{v}$, where $u \in \ZZ_{\ge 2}$ and $v \in \ZZ_{\ge 1}$ are coprime and the level $k$ is determined by $\frac{u}{v} = t = k+2$.  We will therefore need to investigate the (relaxed) \hw{} theory of the Heisenberg algebra $\HA$, the $c=2$ bosonic \(\beta\gamma\) ghost system $\GA$ and $\AKMA{sl}{2}$ itself.  This investigation constitutes the rest of the section.

\subsection*{The Heisenberg algebra $\HA$}

We will use the same notations and conventions for the Heisenberg algebra as in \cite{RidJac14}.  The free boson \voa{} is generated by a single bosonic field \(a(z)\), defined by the \ope{}
\begin{equation} \label{eq:HeisOPE}
  a(z)a(w)\sim\frac{\wun}{(z-w)^2}.
\end{equation}
With the standard Fourier expansion, $a(z)=\sum_{n\in\mathbb{Z}}a_n z^{-n-1}$, the \ope{} implies the following commutation relations:
\begin{equation} \label{eq:BosComm}
  \comm{a_m}{a_n}=m\delta_{m+n,0}\wun, \qquad m,n\in\mathbb{Z}.
\end{equation}
The Heisenberg algebra $\HA$ is then the infinite-dimensional Lie algebra spanned by the $a_n$ and $\wun$, the latter being identified with the unit of the \uea{} of \(\HA\), as usual.  We will assume that \(\wun\) acts as the identity operator on any \(\HA\)--module.  This is only a minor restriction since a simple rescaling of the generators lets this operator act as multiplication by any non-zero number.

As is well known, the free boson \voa{} admits a one-parameter family of conformal structures.  We will write the energy-momentum tensor and central charge in the form
\begin{equation} \label{eq:HeisT}
    T^{\textup{bos.}}(z)=\frac{1}{2}\normord{a(z)a(z)}-\frac{1}{\alpha}\partial a(z),\qquad c^{\textup{bos.}}=1-\frac{12}{\alpha^2},
\end{equation}
where $\alpha$ parametrises the conformal structure.  We note that $\alpha \to \infty$ reproduces the standard free boson central charge $c^{\textup{bos.}} = 1$.  In Wakimoto's construction, this would correspond to $k \to \infty$, so it is permissible to ignore this case.  It is worth recalling that $a(z)$ is not a Virasoro primary for $\alpha$ finite; instead we have
\begin{equation}
T^{\textup{bos.}}(z)a(w)\sim\frac{2/\alpha\:\wun}{(z-w)^3}+\frac{a(w)}{(z-w)^2}+\frac{\pd a(w)}{z-w}.
\end{equation}

The Fourier expansion $T^{\textup{bos.}}(z) = \sum_{n \in \ZZ} L_n^{\textup{bos.}} z^{-n-2}$ defines the Virasoro modes and it is easy to check that the Lie algebra $\alg{g}$ spanned by the $a_n$, $L_n^{\text{bos.}}$ and $\wun$ is conformally graded.  It is likewise easy to check that (for finite $\alpha$) the only adjoint on the Heisenberg algebra consistent with the standard Virasoro adjoint $(L_n^{\textup{bos.}})^{\dag} = L_{-n}^{\textup{bos.}}$ is
\begin{equation} \label{eq:HeisAdj}
a_n^{\dag} = -a_{-n} - \frac{2}{\alpha} \delta_{n,0} \wun.
\end{equation}
Since $\alg{g}_0 = \vspn \set{a_0, L_0, \wun}$ is abelian, it coincides with the Cartan subalgebra $\alg{h}$, hence \rhw{} theory reduces to ordinary \hw{} theory for the Heisenberg algebra.  We remark that this is one example where we cannot insist that the Cartan subalgebra consist of self-adjoint elements.

The \hw{} theory of the Heisenberg Lie algebra is well known.  The Verma modules are known as Fock spaces and are parametrised by the $a_0$-eigenvalue $p$ of the \hwv{}.  They are simple for all \(p\in\mathbb{C}\). We will denote the Fock spaces by \(\FF{p}\) and their corresponding \hwvs{} by \(\hv{p}\).  In accordance with \eqref{eq:HeisAdj}, the module conjugate to the Fock space $\FF{p}$ is $\FF{-p-2/\alpha}$.

\subsection*{The ghost algebra $\GA$}

For the $\beta \gamma$ ghost system, we follow the notations and conventions of \cite{RidBos14}.  The ghost \voa{} is generated by two bosonic fields, \(\beta(z)\) and \(\gamma(z)\), whose \opes{} are
\begin{equation} \label{eq:GhOPE}
  \beta(z)\beta(w)\sim 0, \qquad 
  \gamma(z)\beta(w)\sim\frac{\wun}{z-w}, \qquad 
  \gamma(z)\gamma(w)\sim 0.
\end{equation}
From these, one constructs a Heisenberg field $J(z)$ and an energy-momentum tensor $T^{\textup{gh.}}(z)$ by
\begin{equation}
J(z) = \normord{\beta(z)\gamma(z)}, \qquad 
T^{\textup{gh.}}(z) = -\normord{\beta(z)\pd\gamma(z)}.
\end{equation}
These give $\beta(z)$ and $\gamma(z)$ Heisenberg weights $+1$ and $-1$, and conformal weights $1$ and $0$, respectively.  We remark that $J(z)$ is not normalised as in \eqref{eq:HeisOPE}, nor is it primary with respect to $T^{\textup{gh.}}(z)$:
\begin{equation}
J(z)J(w)\sim\frac{-\wun}{(z-w)^2}, \qquad 
T^{\textup{gh.}}(z)J(w)\sim\frac{-\wun}{(z-w)^3}+\frac{J(w)}{(z-w)^2}+\frac{\pd J(w)}{z-w}.
\end{equation}
As with the free boson, there is actually a one-parameter family of conformal structures; $T^{\textup{gh.}}(z)$ has been chosen so that the ghost fields $\beta(z)$ and $\gamma(z)$ have the required conformal weights.  This choice also fixes the central charge of the ghost system to be $c^{\textup{gh.}}=2$.

The Fourier expansions $\beta(z)=\sum_{n\in\mathbb{Z}}\beta_n z^{-n-1}$ and $\gamma(z)=\sum_{n\in\mathbb{Z}}\gamma_nz^{-n}$ now yield the commutation relations
\begin{equation} \label{eq:BosGhComm}
  \comm{\beta_m}{\beta_n}=0,\quad
  \comm{\gamma_m}{\beta_n}=\delta_{m+n,0}\wun,\quad
  \comm{\gamma_m}{\gamma_n}=0;\qquad
  m,n\in\mathbb{Z}.
\end{equation}
The infinite-dimensional Lie algebra $\GA$, spanned by the $\beta_n$, $\gamma_n$ and $\wun$, is called the ghost Lie algebra.  Again, \(\wun\) is identified with the unit of $\UEA \GA$ and we assume that it acts as the identity on all \(\GA\)-modules.  As with the Heisenberg algebra, we will extend this algebra by the modes $J_n, L_n^{\textup{gh.}} \in \UEA \GA$, defined by $J(z) = \sum_{n \in \ZZ} J_n z^{-n-1}$ and $T^{\textup{gh.}}(z) = \sum_{n \in \ZZ} L_n^{\textup{gh.}} z^{-n-2}$.  The Lie algebra $\alg{g}$ spanned by the $\beta_n$, $\gamma_n$, $J_n$, $L_n^{\textup{gh.}}$ and $\wun$ is then conformally graded with relations including
\begin{subequations}
\begin{gather}
\comm{J_m}{\beta_n} = \beta_{m+n}, \qquad 
\comm{J_m}{\gamma_n} = -\gamma_{m+n}, \qquad 
\comm{L_m^{\textup{gh.}}}{\beta_n} = -n \beta_{m+n}, \qquad 
\comm{L_m^{\textup{gh.}}}{\gamma_n} = -(m+n) \gamma_{m+n}, \\
\comm{J_m}{J_n} = -m \delta_{m+n,0} \wun, \qquad 
\comm{L_m^{\textup{gh.}}}{J_n} = - n J_{m+n} - \frac{1}{2} m(m+1) \delta_{m+n,0} \wun.
\end{gather}
\end{subequations}
The Cartan subalgebra $\alg{h} = \vspn \tset{J_0, L_0^{\textup{gh.}}, \wun}$ is a proper subalgebra of $\alg{g}_0 = \vspn \tset{\beta_0, \gamma_0, J_0, L_0^{\textup{gh.}}, \wun}$, so the relaxed and ordinary \hw{} theories of the $\beta \gamma$ ghost system do not coincide.  In this case, the ghost adjoint
\begin{equation} \label{eq:GhAdj}
\beta_n^{\dag} = \gamma_{-n}
\end{equation}
implies that all the elements of $\alg{h}$ are self-adjoint.

We start with the ordinary \hw{} theory corresponding to the triangular decomposition in which $\beta_0 \in \alg{g}_+$ annihilates the \hwv{} $\gvac$ and $\gamma_0 \in \alg{g}_-$ need not.  It turns out that this yields a unique Verma module because these conditions imply that $J_0 \gvac = L_0^{\textup{gh.}} \gvac = 0$ (and $\wun$ acts as the identity, as always).  This module is simple; in fact, we may take it to be the ghost vacuum module $\GhostVac$ because these conditions also imply that $L_{-1}^{\textup{gh.}} \gvac = 0$.  Of course, one can also take the triangular decomposition in which $\gamma_0$ annihilates the \hwv{} and $\beta_0$ need not.\footnote{There are other triangular decompositions, but they will not concern us here.  Indeed, they may be obtained from those already mentioned by twisting with a so-called spectral flow automorphism, see \cite{RidBos14}.}  The resulting Verma module is the simple module $\ghconjmod{\GhostVac}$ conjugate to $\GhostVac$, obtained by twisting the action of $\GA$ by $\ghconjaut$, the (order $4$) conjugation automorphism that sends $\beta_n$ to $\gamma_n$ and $\gamma_n$ to $-\beta_n$.

The \rhw{} theory of the ghost system is significantly more interesting as we no longer require that $\beta_0$ (or $\gamma_0$) annihilates the \hwv{}.  In fact, we may induce from a fairly arbitrary simple weight module of $\alg{g}_0$.  Given that $J_0$ and $L_0^{\textup{gh.}}$, as abstract elements of $\alg{g}_0$, are going to be identified with elements of $\UEA \GA$ and that the $\alg{g}_0$-module weight vectors are going to be identified with the \rhwvs{} of the induced module, we may restrict to $\alg{g}_0$-modules on which the action of $J_0$ is identified with that of $\gamma_0 \beta_0$ and the action of $L_0^{\textup{gh.}}$ is always $0$.
\begin{cprop}[\protect{\cite[Prop.~1]{RidBos14}}] \label{prop:FinGhReps}
A simple weight module over $\alg{g}_0$ upon which $J_0 = \gamma_0 \beta_0$
and $L_0^{\textup{gh.}} = 0$ is isomorphic to one of the following:
\begin{itemize}
\item The module $\FinGhostVac$ generated by a vector \(\gv{}\) which is annihilated by \(\beta_0\) and thus also \(J_0\).  This module has a basis of weight vectors $\gv{j}$, $j \in \ZZ_{\le 0}$, where $J_0 \gv{j} = j \gv{j}$.
\item The module $\finghconjmod{\FinGhostVac}$, conjugate to $\FinGhostVac$, generated by a vector \(\gv{}\) which is annihilated by \(\gamma_0\) and thus \(J_0\gv{}=\gv{}\).  This module has a basis of weight vectors $\gv{j}$, $j \in \ZZ_{\ge 1}$, where $J_0 \gv{j} = j \gv{j}$.
\item The modules $\FinGhostRel{q}$, where $q\in\CC\setminus\ZZ$, each of which is generated by a vector \(\gv{}\) satisfying \(J_0\gv{}=q\gv{}\); it follows that no non-zero vector is annihilated by $\beta_0$ or $\gamma_0$.  The eigenvalues of $J_0 = \gamma_0 \beta_0$ all lie in $q + \ZZ$ and these modules have a basis of weight vectors $\gv{j}$, $j \in q + \ZZ$, satisfying $J_0 \gv{j} = j \gv{j}$.  The modules $\FinGhostRel{q}$ and $\FinGhostRel{q+1}$ are isomorphic.
\end{itemize}
\end{cprop}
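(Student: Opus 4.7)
The plan is to decompose a simple weight module $V$ into $J_0$-eigenspaces and exploit the Weyl-algebra-like relation $\comm{\gamma_0}{\beta_0} = \wun$, together with $J_0 = \gamma_0\beta_0$ and $L_0^{\textup{gh.}} = 0$, to pin down the structure. Write $V = \bigoplus_j V_j$ where $V_j = \ker(J_0 - j\,\id)$. Since $\comm{J_0}{\beta_0} = \beta_0$ and $\comm{J_0}{\gamma_0} = -\gamma_0$, we have $\beta_0(V_j) \subseteq V_{j+1}$ and $\gamma_0(V_j) \subseteq V_{j-1}$, so all weights of $V$ lie in a single coset $q + \ZZ$.

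Next I would exploit the key computation: on $V_j$, $\gamma_0\beta_0 = j\,\id$, and using $\beta_0\gamma_0 = \gamma_0\beta_0 - \wun = J_0 - \wun$, we get $\beta_0\gamma_0 = (j-1)\,\id$ on $V_j$, i.e.\ $\beta_0\gamma_0 = j\,\id$ on $V_{j+1}$. Consequently, for every $j \neq 0$, the maps $\beta_0 : V_j \to V_{j+1}$ and $\tfrac{1}{j}\gamma_0 : V_{j+1} \to V_j$ are mutually inverse isomorphisms. This at once forces all weight spaces linked by nonzero integers to have a common dimension, and a proper graded subspace that is preserved by these isomorphisms yields a proper submodule; simplicity thus forces each weight space to be one-dimensional.

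I would then split into cases. If $q \notin \ZZ$, no weight equals zero, every chain map is invertible, and the above argument produces a module with one-dimensional weight spaces on every $j \in q + \ZZ$ generated by any nonzero vector; this is $\FinGhostRel{q}$, with the relabelling $q \leftrightarrow q+1$ evident from $q + \ZZ = (q+1) + \ZZ$. If $q \in \ZZ$, the structure can break only between $V_0$ and $V_1$. Pick any nonzero $v \in V_0$ (if $V_0 \neq 0$): either $\beta_0 v = 0$, or $\beta_0 v \in V_1$ satisfies $\gamma_0(\beta_0 v) = J_0 v = 0$. Generating a submodule from such a singular vector, and using $\comm{\beta_0}{\gamma_0^n} = -n\gamma_0^{n-1}$ (resp.\ $\comm{\gamma_0}{\beta_0^n} = n\beta_0^{n-1}$), shows the submodule is precisely $\vspn\set{\gamma_0^n v : n \ge 0}$ (resp.\ $\vspn\set{\beta_0^n w : n \ge 0}$). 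Simplicity forces $V$ to equal this submodule, giving $\FinGhostVac$ or $\finghconjmod{\FinGhostVac}$.

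The only remaining subtlety is the case $V_0 = 0$ while $q \in \ZZ$: if some $V_j$ with $j \le -1$ is nonzero then $\beta_0 : V_{-1} \to V_0 = 0$ would violate $\gamma_0\beta_0 = -\id_{V_{-1}}$, so weights $\leq -1$ must be absent; symmetrically, if weights $\geq 1$ exist the argument from the previous paragraph applied to any $w \in V_1$ (noting $\gamma_0 w \in V_0 = 0$) still produces $\finghconjmod{\FinGhostVac}$. I expect the main subtlety to be this careful bookkeeping around $j=0$, where the chain of isomorphisms is severed; the rest is routine Weyl-algebra representation theory.
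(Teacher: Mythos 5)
Your argument is correct, and since the paper itself offers no proof of this proposition --- it is imported verbatim from \cite[Prop.~1]{RidBos14} --- there is no in-text proof to compare against; what you give is the standard weight-space analysis for the Weyl-algebra relation $\comm{\gamma_0}{\beta_0}=\wun$, which is also how the cited reference proceeds. Two minor tidy-ups: to identify the modules in the $q\in\ZZ$ case with $\FinGhostVac$ and $\finghconjmod{\FinGhostVac}$ you should note that $\gamma_0^n \gv{}\neq 0$ (respectively $\beta_0^n \gv{}\neq 0$) for all $n$, which follows immediately from the commutator identities you already quote, so the stated bases really are bases; and the sub-branch with $0\neq v\in V_0$ but $\beta_0 v\neq 0$ is in fact vacuous for a simple module, because the submodule $\vspn\set{\beta_0^{n}\beta_0 v \st n\ge 0}$ misses $V_0$ --- your conclusion there is harmless (the branch yields a contradiction, so anything follows), but it is worth observing that it cannot occur.
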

Additionally, one can consider the indecomposable $\alg{g}_0$-modules $\FinGhostRel{0}^+$ and $\FinGhostRel{0}^-$ that likewise have a basis of weight vectors $\gv{j}$, $j \in \ZZ$, satisfying $J_0 \gv{j} = j \gv{j}$, but they are not simple.  They are determined (up to isomorphism) by the following non-split short exact sequences:
\begin{equation}
\dses{\FinGhostVac}{}{\FinGhostRel{0}^+}{}{\finghconjmod{\FinGhostVac}}, \qquad 
\dses{\finghconjmod{\FinGhostVac}}{}{\FinGhostRel{0}^-}{}{\FinGhostVac}.
\end{equation}

Inducing the $\alg{g}_0$-modules $\FinGhostVac$, $\finghconjmod{\FinGhostVac}$ and $\FinGhostRel{q}$ ($q \notin \ZZ$) therefore results in relaxed Verma modules for $\GA$.  The first two give the simple ghost vacuum module $\GhostVac$ and its conjugate $\ghconjmod{\GhostVac}$, respectively; the last gives new modules $\GhostRel{q}$ which are also simple and satisfy $\GhostRel{q} \cong \GhostRel{q+1}$.  We may similarly induce the non-simple $\alg{g}_0$-modules $\FinGhostRel{0}^+$ and $\FinGhostRel{0}^-$ to obtain non-simple $\GA$-modules $\FinGhostRel{0}^+$ and $\FinGhostRel{0}^-$ which are likewise determined (up to isomorphism) by the following non-split short exact sequences:
\begin{equation}
\dses{\GhostVac}{}{\GhostRel{0}^+}{}{\ghconjmod{\GhostVac}}, \qquad 
\dses{\ghconjmod{\GhostVac}}{}{\GhostRel{0}^-}{}{\GhostVac}.
\end{equation}

\subsection*{The affine Kac-Moody algebra $\AKMA{sl}{2}$}

We consider the universal \voa{} of non-critical level $k \neq -2$ generated by three fields $e(z)$, $h(z)$ and $f(z)$ satisfying the \opes{}
\begin{equation} \label{eq:SL2OPE}
\begin{aligned}
\func{h}{z} \func{e}{w} &\sim \frac{+2 \func{e}{w}}{z-w} \vphantom{\frac{2k}{\brac{z-w}^2}}, \\
\func{h}{z} \func{f}{w} &\sim \frac{-2 \func{f}{w}}{z-w} \vphantom{\frac{-k}{\brac{z-w}^2}},
\end{aligned}
\qquad
\begin{aligned}
\func{h}{z} \func{h}{w} &\sim \frac{2k\:\wun}{\brac{z-w}^2}, \\
\func{e}{z} \func{f}{w} &\sim \frac{-k\:\wun}{\brac{z-w}^2} - \frac{\func{h}{w}}{z-w},
\end{aligned}
\qquad
\begin{aligned}
\func{e}{z} \func{e}{w} &\sim 0 \vphantom{\frac{2k}{\brac{z-w}^2}}, \\
\func{f}{z} \func{f}{w} &\sim 0 \vphantom{\frac{-k}{\brac{z-w}^2}},
\end{aligned}
\end{equation}
and no other (independent) relations. We denote this \voa{} by \(\UVOA{k}\).  The maximal proper ideal of \(\UVOA{k}\) is non-trivial if and only if \(k\) is admissible.  Moreover, this ideal is generated by a single primary field (singular vector) \cite{KacStr79}; we do not set this primary field to zero.  Note that we have chosen the $\SLA{sl}{2}$ basis $\set{e,h,f}$ to be consistent with previous work, \cite{RidSL208} in particular, where it was necessary to tailor the basis to the $\SLA{sl}{2;\RR}$ adjoint rather than the (more traditional) $\SLA{su}{2}$ adjoint.  This is reflected in the signs appearing in the formulae for $e(z)f(w)$ above.

With the usual Fourier expansions $g(z) = \sum_{n \in \ZZ} g_n z^{-n-1}$,
where $g=e,h,f$, the commutation relations are
\begin{equation} \label{eq:CommSL2}
\begin{aligned}
\comm{h_m}{e_n} &= +2 e_{m+n}, \\
\comm{h_m}{f_n} &= -2 f_{m+n},
\end{aligned}
\quad
\begin{aligned}
\comm{h_m}{h_n} &= 2m \delta_{m+n,0} k \: \wun, \\
\comm{e_m}{f_n} &= -h_{m+n} - m \delta_{m+n,0} k \: \wun,
\end{aligned}
\quad
\begin{aligned}
\comm{e_m}{e_n} &= 0, \\
\comm{f_m}{f_n} &= 0,
\end{aligned}
\qquad m,n \in \ZZ
\end{equation}
and these make $\vspn \set{e_n, h_n, f_n, \wun}$ into a Lie algebra which we denote by $\AKMA{sl}{2}$.  Once again, we assume that the unit $\wun \in \UEA \tbrac{\AKMA{sl}{2}}$ acts as the identity on each $\AKMA{sl}{2}$-module.

The standard conformal structure of \(\UVOA{k}\) 
is uniquely determined by requiring that $e(z)$, $h(z)$ and $f(z)$ are Virasoro primaries of conformal weight $1$ (this structure exists for all $k \neq -2$).  The Sugawara construction then gives the explicit form of the energy-momentum tensor as
\begin{equation} \label{eq:DefT}
\func{T}{z} = \frac{1}{2t} \brac{\frac{1}{2} \normord{\func{h}{z} \func{h}{z}} - \normord{\func{e}{z} \func{f}{z}} - \normord{\func{f}{z} \func{e}{z}}},
\end{equation}
where $t = k+2$.  With $\tfunc{T}{z} = \sum_{n \in \ZZ} L_n z^{-n-2}$, one finds that the modes $L_n$ generate a copy of the Virasoro algebra with central charge $c = 3 - 6/t$.  The Lie algebra $\alg{g}$ spanned by the $e_n$, $h_n$, $f_n$, $L_n$ and $\wun$ is then conformally graded with Cartan subalgebra $\alg{h} = \vspn \set{h_0, L_0, \wun}$.  We have chosen the $\SLA{sl}{2}$ basis so that the $\SLA{sl}{2;\RR}$ adjoint becomes
\begin{equation}
e_n^{\dag} = f_{-n}, \qquad h_n^{\dag} = h_{-n}.
\end{equation}
Again, the Cartan subalgebra consists of self-adjoint elements.

The \hw{} theory of $\AKMA{sl}{2}$ is well known.  The standard triangular decomposition splits the zero modes so that $e_0 \in \alg{g}_+$ annihilates \hwvs{} but $f_0 \in \alg{g}_-$ need not.  Then, the Verma modules $\Ver{\lambda}$ are parametrised by the $\SLA{sl}{2}$-weight ($h_0$-eigenvalue) $\lambda \in \CC$ of the \hwv{} because it follows from \eqref{eq:DefT} that its conformal weight is then given by
\begin{equation} \label{eq:ConfDimHWV}
\Delta_{\lambda} = \frac{\lambda \brac{\lambda + 2}}{4t}. 
\end{equation}
These Verma modules need not be simple.  The quotient module $\Ver{0} / \Ver{-2}$, where the submodule $\Ver{-2}$ is generated by acting with $f_0$ on the (generating) \hwv{} of $\Ver{0}$, is the vacuum module; it carries the structure of the universal \voa{} \(\UVOA{k}\) defined by \eqref{eq:SL2OPE}.  As with the ghosts, one can also consider the triangular decomposition in which $f_0$ annihilates \hwvs{} but $e_0$ need not.\footnote{And as with the ghosts, there are again other triangular decompositions that will not concern us here, being related to those discussed here by spectral flow automorphisms, see \cite{RidSL208,CreMod13}.}  The Verma modules with respect to this decomposition are then the conjugates $\conjmod{\Ver{\lambda}}$ of the Verma modules $\Ver{\lambda}$, where the (order $2$) conjugation automorphism $\conjaut$ sends $e_n$ to $-f_n$ and $h_n$ to $-h_n$.

Because $\alg{g}_0 = \vspn \set{e_0, h_0, f_0, L_0, \wun}$ is non-abelian, it strictly contains $\alg{h}$, so the \rhw{} theory of $\AKMA{sl}{2}$ is strictly more general.  We note that inducing from a $\alg{g}_0$-module reduces to choosing an $\SLA{sl}{2}$-module because we require that $4t L_0$ acts as $h_0^2 - 2 e_0 f_0 - 2 f_0 e_0$ (this is how $4t L_0$ acts on \rhwvs{}).  The analogue of \propref{prop:FinGhReps} is then the classification of weight modules for $\SLA{sl}{2}$ (see \cite{MazLec10}, for example):
\begin{prop} \label{prop:FinSL2WtMods}
The simple weight modules of $\SLA{sl}{2}$ are exhausted by the following:
\begin{itemize}
\item The $\brac{\lambda + 1}$-dimensional modules $\FinIrr{\lambda}$, with $\lambda \in \ZZ_{\ge 0}$.  The module $\FinIrr{\lambda}$ has a basis of weight vectors $\av{m}$, where $m = \lambda, \lambda -2 , \ldots, -\lambda$ and $h_0 \av{m} = m \av{m}$.  It is both highest and lowest weight.
\item The infinite-dimensional \hwms{} $\FinDisc{\lambda}$, with $\lambda \in\mathbb{C}\setminus \ZZ_{\ge 0}$.  The module $\FinDisc{\lambda}$ has a basis of weight vectors $\av{m}$, where $m = \lambda, \lambda -2, \lambda - 4, \ldots$ and $h_0 \av{m} = m \av{m}$.
\item The infinite-dimensional lowest weight modules $\finconjmod{\FinDisc{-\lambda}}$, with $\lambda \in\mathbb{C}\setminus \ZZ_{\le 0}$.  Here, $\finconjaut$ is the Weyl reflection of $\SLA{sl}{2}$ that sends $e_0$ to $-f_0$ and $h_0$ to $-h_0$. The module $\finconjmod{\FinDisc{-\lambda}}$ has a basis of weight vectors $\av{m}$, where $m = \lambda, \lambda + 2, \lambda + 4, \ldots$ and $h_0 \av{m} = m \av{m}$.
\item The infinite-dimensional weight modules $\FinRel{\lambda; \Delta}$, with $\lambda, \Delta \in \CC$ satisfying $4t \Delta \neq \mu \brac{\mu+2}$ for any $\mu \in \lambda + 2 \ZZ$.  The module $\FinRel{\lambda; \Delta}$ has a basis of weight vectors $\av{\mu}$, with $\mu \in \lambda + 2 \ZZ$, satisfying $h_0 \av{\mu} = \mu \av{\mu}$.  It is neither highest nor lowest weight.  Moreover, there are isomorphisms $\FinRel{\lambda; \Delta} \cong \FinRel{\lambda + 2; \Delta}$.
\end{itemize}
\end{prop}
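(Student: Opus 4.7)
The plan is to use the quadratic Casimir $\Omega \in \UEA \SLA{sl}{2}$, which in our conventions can be written as $\Omega = h_0^2 + 2 h_0 + 4 f_0 e_0 = h_0^2 - 2 h_0 + 4 e_0 f_0$. Since $\Omega$ is central and $M$ is simple, Schur's lemma implies that $\Omega$ acts as a scalar on $M$; this scalar is, up to a normalisation involving $t$, the eigenvalue $\Delta$ that will parametrise the relaxed modules. Since $e_0$ and $f_0$ shift $h_0$-eigenvalues by $+2$ and $-2$ respectively, and since any submodule generated by a single weight vector is supported on a single coset of $2 \ZZ$ in $\CC$, simplicity forces the support of $M$ to lie in one such coset $\lambda + 2 \ZZ$.

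The central observation is then that every weight space $M_\mu$ of a simple weight module is one-dimensional. Fix a non-zero $v \in M_\mu$. By simplicity and PBW, $M = \UEA \SLA{sl}{2} \cdot v$ is spanned by vectors of the form $f_0^a h_0^b e_0^c v$; intersecting with the $\mu$-weight space selects those with $a = c$. But each $f_0^a e_0^a$ lies in the subalgebra generated by $h_0$ and $\Omega$ (a routine induction using the $\SLA{sl}{2}$ commutation relations), and both act as scalars on $M_\mu$. Hence $M_\mu = \CC v$. Consequently, the actions of $e_0 \colon M_\mu \to M_{\mu+2}$ and $f_0 \colon M_\mu \to M_{\mu-2}$ are scalars whose product $f_0 e_0 \vert_{M_\mu}$ is an explicit quadratic in $\mu$ determined by the Casimir scalar, namely (proportional to) $4t\Delta - \mu(\mu + 2)$.

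With this in hand the classification becomes a case analysis on whether the equation $4t \Delta = \mu(\mu+2)$ has a solution $\mu \in \lambda + 2 \ZZ$. If $e_0$ annihilates some weight vector, the corresponding weight $\mu^*$ is a genuine \hw{}, and one-dimensionality of weight spaces plus simplicity identifies $M$ as $\FinIrr{\mu^*}$ when $f_0$ also has a kernel (forcing $\mu^* \in \ZZ_{\ge 0}$) and as $\FinDisc{\mu^*}$ otherwise. The dual possibility that $f_0$ has a kernel while $e_0$ does not yields the lowest weight modules $\finconjmod{\FinDisc{-\lambda}}$. If neither $e_0$ nor $f_0$ has any kernel, then $e_0$ and $f_0$ act invertibly throughout the support $\lambda + 2 \ZZ$, the non-vanishing condition $4t \Delta \neq \mu(\mu+2)$ for all $\mu \in \lambda + 2 \ZZ$ holds, and one recognises $M \cong \FinRel{\lambda; \Delta}$. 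The isomorphisms $\FinRel{\lambda; \Delta} \cong \FinRel{\lambda + 2; \Delta}$ are then immediate from $\lambda + 2 \ZZ = (\lambda + 2) + 2 \ZZ$. Conversely, one must check that each module in the list is actually simple, which follows routinely from the one-dimensionality of weight spaces and the fact that in each case the raising and lowering operators suffice to connect any weight vector to any other (with no invariant proper subspace).

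The main obstacle in this scheme is really a bookkeeping one: proving the one-dimensionality of weight spaces via the Casimir argument above is the only substantive step, and everything else is case analysis and sanity checks. A secondary subtlety, worth flagging, is keeping track of the sign conventions in \eqref{eq:CommSL2} (where $\comm{e_0}{f_0} = -h_0$ rather than the more common $+h_0$), which affects the precise form of $\Omega$ and of the scalar by which $f_0 e_0$ acts on $M_\mu$, but does not affect the structure of the argument.
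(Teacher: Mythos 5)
The paper itself offers no proof of this proposition: it is quoted as a known classification with a pointer to \cite{MazLec10}, so there is no in-paper argument to measure you against. Your proposal is essentially the standard argument found in that reference, and it is correct: the substantive step is exactly the one you isolate, namely that $f_0^a e_0^a$ lies in $\CC[h_0,\Omega]$ (the centraliser of the Cartan), which together with the scalar action of the Casimir on a simple module forces one-dimensional weight spaces, after which the case analysis on kernels of $e_0$ and $f_0$ delivers the four families and the isomorphisms $\FinRel{\lambda;\Delta}\cong\FinRel{\lambda+2;\Delta}$. Two small points to tighten if you wrote this out in full. First, with the conventions \eqref{eq:CommSL2}, where $\comm{e_0}{f_0}=-h_0$, the central element is $Q=\tfrac{1}{2}h_0^2-e_0f_0-f_0e_0$, i.e.\ $2Q=h_0^2+2h_0-4f_0e_0$; the explicit formula you wrote for $\Omega$ is the $\comm{e_0}{f_0}=+h_0$ version and needs the sign adjustment you yourself flagged, after which $f_0e_0\vert_{M_\mu}=\tfrac{1}{4}\bigl(\mu(\mu+2)-4t\Delta\bigr)$, consistent with your claimed proportionality. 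Second, the appeal to Schur's lemma should be the Dixmier (countable-dimension) version, which applies because a simple $\UEA\:\SLA{sl}{2}$-module is of at most countable dimension; and in the highest-weight case with $f_0$ injective you should note that $\mu^*\notin\ZZ_{\ge 0}$ is automatic (otherwise the simple quotient would be finite-dimensional and $f_0$ would have a kernel), so the module really is one of the $\FinDisc{\lambda}$ with $\lambda\in\CC\setminus\ZZ_{\ge 0}$. Neither point affects the structure of your argument.
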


\noindent As with \propref{prop:FinGhReps}, there are non-simple analogues of the $\FinRel{\lambda; \Delta}$ when $4t \Delta = \mu \brac{\mu+2}$ for some $\mu \in \lambda + 2 \ZZ$.  The structures of these indecomposables depend upon precisely how many $\mu \in \lambda + 2 \ZZ$ satisfy this constraint \cite{MazLec10} and we shall defer their consideration until they are needed (\secref{sec:HWRelAdmMod}).

Inducing each of the simple $\SLA{sl}{2}$-modules now yields relaxed Verma modules for $\AKMA{sl}{2}$.  More precisely:
\begin{itemize}
\item Inducing $\FinIrr{\lambda}$, where $\lambda \in \ZZ_{\ge 0}$, results in the \hwm{} $\Ver{\lambda} / \Ver{-\lambda - 2}$, where the \hwvs{} of $\Ver{\lambda}$ and $\Ver{-\lambda - 2}$ are related by $\av{-\lambda - 2} = f_0^{\lambda + 1} \av{\lambda}$.  This induced module need not be simple; its simple quotient will be denoted by $\Irr{\lambda}$.  These simple modules are self-conjugate:  $\conjmod{\Irr{\lambda}} = \Irr{\lambda}$.
\item Inducing $\FinDisc{\lambda}$, where $\lambda \in \CC \setminus \ZZ_{\ge 0}$, results in the Verma module $\Ver{\lambda}$; the simple quotient will be denoted by $\Disc{\lambda}$.
\item Inducing $\finconjmod{\FinDisc{-\lambda}}$, where $\lambda \in \CC \setminus \ZZ_{\le 0}$, results in $\conjmod{\Ver{-\lambda}}$; the simple quotient is $\conjmod{\Disc{-\lambda}}$.
\item Inducing $\FinRel{\lambda; \Delta}$, with $\lambda, \Delta \in \CC$ satisfying $4t \Delta \neq \mu \brac{\mu+2}$ for any $\mu \in \lambda + 2 \ZZ$, results in a new relaxed Verma module that we shall denote by $\RelVer{\lambda; \Delta}$; its simple quotient will be denoted by $\Rel{\lambda; \Delta}$.  As above, there are isomorphisms $\RelVer{\lambda; \Delta} \cong \RelVer{\lambda + 2; \Delta}$ and $\Rel{\lambda; \Delta} \cong \Rel{\lambda + 2; \Delta}$.  Finally, the module conjugate to $\Rel{\lambda; \Delta}$ is $\Rel{-\lambda; \Delta}$.
\end{itemize}
We emphasise that whereas the simple \hwms{} (and their conjugates) are characterised by a single parameter, the \hw{}, the $\Rel{\lambda; \Delta}$ require two parameters in general.  The three classes of simple $\AKMA{sl}{2}$-modules $\Irr{\lambda}$, $\Disc{\lambda}$ and $\conjmod{\Disc{-\lambda}}$, and $\Rel{\lambda; \Delta}$ are distinguished by their \rhwvs{}:  $\Irr{\lambda}$ has finitely many ($\lambda + 1$ in fact); $\Disc{\lambda}$ and $\conjmod{\Disc{-\lambda}}$ have infinitely many, but their $\SLA{sl}{2}$-weights are bounded above and below, respectively; $\Rel{\lambda; \Delta}$ has infinitely many with no bound on the $\SLA{sl}{2}$-weights.

\section{The Wakimoto Free Field Realisation} \label{sec:Wak}

A free field realisation of $\AKMA{sl}{2}$ for any level was constructed by Wakimoto in \cite{WakFoc86}.  It shows that fields $\tfunc{e}{z}$, $\tfunc{h}{z}$ and $\tfunc{f}{z}$, satisfying the \opes{} \eqref{eq:SL2OPE}, may be constructed in terms of a free boson and a pair of bosonic ghosts.  We review this and the screening operator formalism of free field theories, concluding by deriving explicit formulae, in terms of Jack symmetric polynomials, for certain (relaxed) singular vectors in (relaxed) Wakimoto modules over $\AKMA{sl}{2}$.  A corollary of this analysis is that for admissible levels, Wakimoto's construction describes the universal \voa{} $\UVOA{k}$ of $\AKMA{sl}{2}$, rather than its simple quotient $\AdmMod{u}{v}$.

\subsection{The Wakimoto construction}

There is a one-parameter family of realisations of \(\AKMA{sl}{2}\) in terms of tensor products of the free boson and ghost fields.   Given the \opes{} \eqref{eq:HeisOPE} and \eqref{eq:GhOPE}, it is straightforward to verify that defining (we omit the tensor products for notational simplicity)
\begin{equation}\label{eq:ffrealisation}
\begin{gathered}
e(z)=\beta(z),\qquad
h(z)=2\normord{\beta(z)\gamma(z)}+\alpha\:a(z), \\
f(z)=\normord{\beta(z)\gamma(z)\gamma(z)}+\alpha\:a(z)\gamma(z)+\Bigl(\frac{\alpha^2}{2}-2\Bigr)\pd\gamma(z)
\end{gathered}
\end{equation}
reproduces the $\AKMA{sl}{2}$ \opes{} \eqref{eq:SL2OPE} with the level $k = t-2$ being related to the parameter \(\alpha\) by $\alpha^2 = 2t$.  Moreover, the $\AKMA{sl}{2}$ energy-momentum tensor \eqref{eq:DefT} and central charge then decompose as
\begin{equation} \label{eq:WakT}
  \begin{split}
    T(z)&=T^{\text{bos.}}(z) + T^{\text{gh.}}(z) = \frac{1}{2}\normord{a(z)a(z)}-\frac{1}{\alpha}\pd
    a(z)-\normord{\beta(z)\pd\gamma(z)},\\
    c&=c^{\text{bos.}} + c^{\text{gh.}} = 1-\frac{12}{\alpha^2} + 2=3-\frac{6}{t},
  \end{split}
\end{equation}
identifying $\alpha$ with the deformation parameter in the free boson conformal structure \eqref{eq:HeisT}. 

\begin{defn}
We define the \emph{Wakimoto \voa{}} to be the tensor product of the Heisenberg and ghost \voas{}, equipped with the conformal structure given in \eqref{eq:WakT}.
\end{defn}

\noindent Recall that when $k$ is admissible, the universal $\AKMA{sl}{2}$ \voa{} is not simple, but has a unique maximal ideal that is generated by a single primary field (singular vector).  One of the results of this section (\corref{cor:WakVacIsUniversal}) is that this field is non-zero in the Wakimoto \voa{}.

The Wakimoto free field realisation endows the tensor product of a Heisenberg Fock space and a ghost module with the structure of an \(\AKMA{sl}{2}\)-module, by restriction.  We refer to such modules as Wakimoto modules, distinguishing at least four types:
\begin{itemize}
\item The \hw{} Wakimoto modules $\Wak{p} = \FF{p} \otimes \GhostVac$, recalling that $\GhostVac$ denotes the ghost vacuum module.
\item The conjugate \hw{} Wakimoto modules $\ghconjmod{\Wak{p}} = \FF{p} \otimes \ghconjmod{\GhostVac}$, obtained by conjugating the ghost module.
\item The \rhw{} Wakimoto modules $\WakRel{p; q} = \FF{p} \otimes \GhostRel{q}$, for $q \notin \ZZ$.
\item The \rhw{} Wakimoto modules $\WakRel{p; 0}^{\pm} = \FF{p} \otimes \GhostRel{0}^{\pm}$.
\end{itemize}
The structure of the \hw{} Wakimoto modules $\Wak{p}$ was determined in \cite{BerFoc90}.  To the best of our knowledge, the relaxed modules $\WakRel{p;q}$ have not previously been considered in the literature.

Consider now the tensor product of a Heisenberg \hwv{} $\hv{p}$ and the ghost vacuum $\gvac$, denoting it by $\wv{p} = \hv{p} \otimes \gvac$ for convenience.  The free field realisations \eqref{eq:ffrealisation} of the $\AKMA{sl}{2}$ fields imply that $\wv{p}$ is an $\AKMA{sl}{2}$ \hwv{} of $\SLA{sl}{2}$- and conformal weight
\begin{equation}
\lambda_p = \alpha p, \qquad \Delta_p = \frac{1}{2} p \brac{p + \frac{2}{\alpha}} = \frac{\lambda_p \brac{\lambda_p + 2}}{4t},
\end{equation}
where we recall that $\alpha^2 = 2t$.  Similarly, the tensor product $\wv{p;q} = \hv{p} \otimes \gv{q}$ of $\hv{p}$ with a \rhwv{} $\gv{q}$ for the ghosts is an $\AKMA{sl}{2}$ \rhwv{} of $\SLA{sl}{2}$- and conformal weight
\begin{equation} \label{eq:RelaxedWts}
\lambda_{p;q} = \alpha p + 2q, \qquad \Delta_p = \frac{1}{2} p \brac{p + \frac{2}{\alpha}} = \frac{\brac{\lambda_{p;q} - 2q} \brac{\lambda_{p;q} - 2q + 2}}{4t}.
\end{equation}
This shows that all the simple $\AKMA{sl}{2}$-modules, \hw{} and relaxed, may be realised as subquotients of Wakimoto modules.

\subsection{Screening fields and operators}

We begin by recalling the construction of vertex operators for the free boson. For this, one extends the Heisenberg algebra by introducing a generator \(\hat a\) satisfying
\begin{equation}
  \comm{a_m}{\hat a}=\delta_{m,0}\wun.
\end{equation}
The vertex operators $\vop{p}(z)$, parametrised by $p \in \CC$, are then defined by
\begin{equation}\label{eq:vertop}
  \vop{p}(z)=\ee^{p\hat a}z^{p a_0}
  \prod_{m\ge 1}\exp\brac{p\frac{\alpha_{-m}}{m}z^m}
  \exp\brac{-p\frac{\alpha_{m}}{m}z^{-m}}.
\end{equation}
A standard computation shows that the vertex operators are free boson primaries of Heisenberg weight $p$ and conformal weight $\frac{1}{2} p \brac{p + \frac{2}{\alpha}}$, by virtue of the \opes{}
\begin{equation}
  a(z)\vop{p}(w)\sim\frac{p \vop{p}(w)}{z-w}, \qquad
  T^{\text{bos.}}(z)\vop{p}(w)\sim \frac{\frac{1}{2} p \big( p + \frac{2}{\alpha} \bigr) \vop{p}(w)}{(z-w)^2} + \frac{\pd\vop{p}(w)}{z-w}.
\end{equation}
For later use, we record that the composition of \(k\) vertex operators is given by
\begin{equation} \label{eq:CompVerOps}
  \vop{p_1}(z_1)\cdots \vop{p_k}(z_k)=
  \prod_{i<j}(z_i-z_j)^{p_ip_j}\cdot
  \ee^{\sum_{i=1}^kp_i\hat a}
  \prod_{i=1}^k z_i^{p_i a_0}\cdot
  \prod_{m\ge1}\exp\left(\frac{a_{-m}}{m}\sum_{i=1}^k p_i z_i^m\right)
  \exp\left(-\frac{a_{m}}{m}\sum_{i=1}^kp_i z_{i}^{-m}\right).
\end{equation}

We now define the notion of a screening field.
\begin{defn}
A \emph{screening field} $\scr(w)$ for a \voa{} $\VOA{V}$ is a field, generally not belonging to $\VOA{V}$ itself, which has the property that the singular terms of the \opes{} of each of the generating fields of $\VOA{V}$ with $\scr(w)$ are total derivatives in $w$.
\end{defn}
\noindent Our definition of a \emph{screening operator}, in a moral sense at least, is then a \voa{} module homomorphism that can be constructed from screening fields.  This means that a screening operator commutes with the fields, and hence the mode algebra, of the \voa{}.  The standard way of constructing screening operators is as the residues (zero modes) of screening fields; these are guaranteed to commute with all the \voa{} fields, provided that the residues are well defined.  Their chief application stems from the fact that they map (relaxed) \hwvs{} to (relaxed) \hwvs{} and thereby explicitly construct (relaxed) singular vectors.  We remark that if the screening field is a Virasoro primary (excluding the identity field), then its \ope{} with the energy-momentum tensor forces its conformal weight to be $1$.

The \opes{} of the free field $\AKMA{sl}{2}$ fields \eqref{eq:ffrealisation} with a free boson vertex operator are easily computed to be
\begin{equation}
    e(z)\vop{p}(w)\sim 0, \qquad
    h(z)\vop{p}(w)\sim\frac{\alpha p\vop{p}(w)}{z-w},\qquad
    f(z)\vop{p}(w)\sim\frac{\alpha p\vop{p}(w)\gamma(w)}{z-w}.
\end{equation}
For a vertex operator to be a non-trivial screening field, its Heisenberg weight would have to satisfy $\frac{1}{2} p \tbrac{p + \frac{2}{\alpha}} = 1$. However, the above \opes{} show that these vertex operators are not screening fields for $\AKMA{sl}{2}$.  However, the field \(\scr(z)=\vop{-2/\alpha}(z)\beta(z)\) is a screening field \cite{BerFoc90}:
\begin{equation}\label{eq:scrope}
    e(z)\scr(w)\sim 0, \qquad
    h(z)\scr(w)\sim 0, \qquad
    f(z)\scr(w)\sim-t\pd_w\frac{V_{-2/\alpha}(w)}{z-w}.
\end{equation}

It follows that the zero mode
\begin{equation}
\scrs{1} = \oint_0 \scr(z) \: \frac{\dd z}{2 \pi \ii} = \oint_0 \vop{-2/\alpha}(z) \beta(z) \: \frac{\dd z}{2 \pi \ii}
\end{equation}
is a screening operator, whenever the contour around $0$ actually closes.  This will be the case when $\scrs{1}$ acts on a state for which the relevant Fourier expansion of $\scr(z)$ has only integer powers of $z$.  Equivalently, $\scrs{1}$ has a well defined action on a given state if and only if the \ope{} of $\scr(z)$ with the corresponding field is a Laurent series.  For example, $\scrs{1}$ only acts on $\wv{p} = \hv{p} \otimes \gvac$, the tensor product of a Heisenberg vacuum $\hv{p}$ with the ghost vacuum $\gvac$, when $p=\frac{1}{2} m \alpha$, $m \in \ZZ$, because
\begin{equation}
\scr(z) \vop{p}(w) = \vop{-2/\alpha}(z) \vop{p}(w) \beta(z) = \frac{\vop{p-2/\alpha}(w) \beta(w)}{(z-w)^{2p/\alpha}} + \cdots.
\end{equation}
This shows that the screening field $\scr(z)$ only defines module homomorphisms, hence constructs singular vectors, for certain \voa{} modules.

To construct more module homomorphisms, it is natural to consider products of screening fields.  To check that such products also yield screening operators, it is convenient to use the language of differential forms.  Suppose then that $j(z)$ is a \voa{} field and that $\scr_i(w_i)$ is a screening field, so that
\begin{equation}
\comm{j_n}{\scr_i(w_i) \: \dd w_i} = \pd_{w_i} \antiscr_i(w_i) \: \dd w_i = \dd \antiscr_i(w_i),
\end{equation}
for some ($n$-dependent) $\antiscr_i(w_i)$.  Then,
\begin{align}
\comm{j_n}{\scr_1(w_1) \: \dd w_1 \wedge \scr_2(w_2) \: \dd w_2} &= \dd \antiscr_1(w_1) \wedge \scr_2(w_2) \: \dd w_2 + \scr_1(w_1) \: \dd w_1 \wedge \dd \antiscr_2(w_2) \notag \\
&= \dd \tbrac{\antiscr_1(w_1) \: \scr_2(w_2) \: \dd w_2 - \scr_1(w_1) \: \dd w_1 \: \antiscr_2(w_2)}
\end{align}
is exact, hence the commutator vanishes upon integrating over a closed cycle.  The generalisation to more than two screening fields is immediate.

We therefore ask the question of when there exists a closed cycle over which some given product of the screening fields $\scr(z) = \vop{-2/\alpha}(z) \beta(z)$ can be integrated to define \(\AKMA{sl}{2}\)-homomorphisms.  We introduce the shorthand
\begin{equation}
\scrs{r}(z)=\scrs{r}(z_1,\dots,z_k)=\scr(z_1)\cdots\scr(z_r)
\end{equation}
and consider the action of this product of screening fields on the Wakimoto module \(\Wak{p}\) whose Heisenberg \hw{} is $p$ .  Using \eqref{eq:CompVerOps}, this action can be written in the form
\begin{multline}\label{eq:scrprod}
    \Bigl.\scrs{r}(z)\Bigr\rvert_{\Wak{p}}=\Bigl.\prod_{1\le i< j\le r}(z_i-z_j)^{4/\alpha^2}\cdot
    \prod_{i=1}^r z_i^{-2p/\alpha}\\
    \cdot\prod_{m\ge 1}\exp\brac{-\frac{2}{\alpha}\fpowsum{m}{z}\frac{a_{-m}}{m}}
    \exp\brac{\frac{2}{\alpha}\overline{\fpowsum{m}{z}}\frac{a_{m}}{m}}\cdot
    \prod_{i=1}^r\beta(z_i)\Bigr\rvert_{\Wak{p-2r/\alpha}},
\end{multline}
where \(\fpowsum{m}{z}\) denotes a power sum and the overline \(\overline{\fpowsum{m}{z}}=\fpowsum{m}{z_1^{-1},\dots,z_r^{-1}}\) denotes variable inversion (see \appref{sec:SymmPoly} for our conventions for power sums and other symmetric polynomials).  The action of $\scrs{r}(z)$ on the other Wakimoto modules with Heisenberg \hw{} $p$, for example the \rhwm{} $\WakRel{p;q}$, is identical --- the ghost weight $q$ is not changed.

Up to an unimportant phase factor, which we suppress, the first two factors on the \rhs{} of \eqref{eq:scrprod} are
\begin{equation} \label{eq:scrprod'}
  \prod_{1\le i< j\le r}(z_i-z_j)^{4/\alpha^2}\cdot\prod_{i=1}^r z_i^{-2p/\alpha}
  =\prod_{1\le i\neq j\le r}\Bigl(1-\frac{z_i}{z_j}\Bigr)^{1/t}\cdot
  \prod_{i=1}^rz_i^{(r-1)2/\alpha^2-2p/\alpha},
\end{equation}
where we recall that $\alpha^2=2t$.  The second factor on the \rhs{} of this expression therefore isolates all the (potential) non-integer powers of the $z_i$ in \eqref{eq:scrprod}, hence a closed cycle over which $\scrs{r}(z)$ may be integrated will exist precisely when the common exponent of the \(z_i\) in this factor is an integer, \(s \in \ZZ\) say.  This requires the Heisenberg weight \(p\) of the Wakimoto module $\Wak{}$ to have the form
\begin{equation} \label{eq:p_rs}
  p_{r,s}=\frac{r-1}{\alpha}-\frac{s\alpha}{2}, \qquad r\in\mathbb{Z}_{\ge 1},\quad s\in\mathbb{Z}.
\end{equation}
We remark that the multivalued function
\begin{equation} \label{eq:DefG}
  G_r(z;t)=\prod_{1\le i\neq j\le r}\Bigl(1-\frac{z_i}{z_j}\Bigr)^{1/t},
\end{equation}
appearing on the \rhs{} of \eqref{eq:scrprod'}, is just the integration kernel of the inner product \eqref{eq:innerprod} with respect to which the Jack symmetric polynomials are orthogonal.  Setting $p=p_{r,s}$, \eqref{eq:scrprod} now takes the form
\begin{equation} \label{eq:ScrProd}
    \Bigl.\scrs{r}(z)\Bigr\rvert_{\Wak{p_{r,s}}}
    =\Bigl.G_r(z;t)\prod_{i=1}^r z_i^s\cdot
    \prod_{m\ge 1}\exp\brac{-\frac{2}{\alpha}\fpowsum{m}{z}\frac{a_{-m}}{m}}
    \exp\brac{\frac{2}{\alpha}\overline{\fpowsum{m}{z}}\frac{a_{m}}{m}}\cdot
    \prod_{i=1}^r\beta(z_i)\Bigr\rvert_{\Wak{p_{-r,s}}}.
\end{equation}
Again, the action on $\WakRel{p_{r,s};q}$ is identical except that the \rhs{} now acts on $\WakRel{p_{-r,s};q}$.

\subsection{Singular vectors}

The existence of cycles over which the product $\scrs{r}(z)$ of screening fields may be integrated follows from the same arguments used in the analogous question for the free field realisation of the universal Virasoro \voa{}, because the multivalued function \(G_r(z;t)\) is the same in both cases.  This question was answered for the Virasoro case (see \thmref{thm:tkthm}) by Tsuchiya and Kanie \cite{TsuFoc86} to whom we refer for further details.  We will use the cycles $[\Delta_r]$ that they construct in what follows, but normalised so that
\begin{equation}
  \int_{[\Delta_r]}G_r(z;t)\:\frac{\dd z_1\cdots\dd z_r}{z_1\cdots z_r}=1.
\end{equation}

We mention that there are various explicit constructions of cycles, over which screening operators can be integrated, in the \cft{} literature, but that the symmetric function literature uses different constructions again. However, Cohen and Varchenko \cite{CohCyc03} showed that, up to normalisation, there is only one non-trivial homology class of cycles when the integrand is $G_r(z;t)$ times a symmetric function, bar some restrictions on \(t\), and thus the various constructions in the literature are all essentially equivalent.\footnote{It is because of these minor restrictions on \(t\) that we state our choice of class of cycle explicitly. These restrictions are only relevant for the precise statement of \thmref{thm:ConstructSVs}. A more detailed understanding of these cycles is not required for reading the remainder of this article.}

The normalised cycles \([\Delta_r]\) let us construct $\AKMA{sl}{2}$-homomorphisms (screening operators) from the products $\scrs{r}(z)$ of screening fields.  Our aim is to use these homomorphisms to explicitly construct singular vectors in (relaxed) \hwms{} over $\AKMA{sl}{2}$ using symmetric function technology (we refer to \appref{sec:SymmPoly} for a primer on what is needed).  Of course, we need to verify that the screening operators are not just zero maps.

To this purpose, we introduce an algebra isomorphism \(\rho_\delta\), for each \(\delta\in\mathbb{C}^\ast\), from the algebra $\symfunc$ of symmetric polynomials in infinitely many variables to the universal enveloping algebra $\UEA \HA_-$ of the negative subalgebra of the Heisenberg algebra.  This isomorphism is given, on the power sum generators, by
\begin{equation} \label{eq:DefRho}
\rho_{\delta}(\fpowsum{m}{y})=\delta a_{-m}.
\end{equation}
For convenience, we will always take $\rho_{\delta}$ to act only on symmetric polynomials of the variables $y_i$.  For the ghosts, we analogously define injective linear maps $\sigma_r$ and $\widetilde{\sigma}_r$ from the algebra $\sympol{r}$ of symmetric polynomials in $r$ variables to the ghost \uea{} $\UEA \GA$.  These maps are defined on the basis $\set{\dmonsym{\nu}{t} \st \ell(\nu)\le r}$ dual to the symmetric monomials $\monsym{\nu}$ by
\begin{equation} \label{eq:DefSigmas}
\sigma_r(\fdmonsym{\nu}{t}{x})=\beta_{-\nu_1-1}\cdots \beta_{-\nu_r-1},\qquad
\widetilde{\sigma}_r(\fdmonsym{\nu}{t}{x})=\beta_{-\nu_1}\cdots\beta_{-\nu_r}.
\end{equation}
We will only take these maps to act on symmetric polynomials in the $x_i$.  Here, the partition $\nu$ may have length $\ell(\nu)$ strictly less than $r$; in this case, we pad the partition with zeroes so that the images $\sigma_r(\dmonsym{\nu}{t})$ and $\widetilde{\sigma}_r(\dmonsym{\nu}{t})$ are padded by $\beta_{-1}$ or $\beta_0$, respectively, so that the right hand sides of \eqref{eq:DefSigmas} each consist of \(r\) factors.

With these maps, we can prove that the $\scrs{r}(z)$ yield non-trivial screening operators and derive explicit formulae for certain (relaxed) singular vectors of the (relaxed) Verma modules over $\AKMA{sl}{2}$.

\begin{thm} \label{thm:ConstructSVs}
  Let \(r\in\mathbb{Z}_{\ge 1}\), \(s\in\mathbb{Z}\) and \(t\in\mathbb{C}^\ast\) and suppose that \(d(d+1)/t\notin\mathbb{Z}\) and \(d(r-d)/t\notin\mathbb{Z}\), for all integers \(d\) satisfying \(1\le d\le r-1\).
  Then,
  \begin{equation}
    \scrs{r}=\int_{[\Delta_r]}\scrs{r}(z_1,\dots,z_r) \: \dd z_1 \cdots \dd z_r
  \end{equation}
  defines non-trivial \(\AKMA{sl}{2}\)-module homomorphisms between (relaxed) Wakimoto modules:
  \begin{equation}
      \scrs{r}\colon\Wak{p_{r,s}}\rightarrow\Wak{p_{-r,s}},\qquad
      \scrs{r}\colon\WakRel{p_{r,s};q}\rightarrow\WakRel{p_{-r,s};q}.
  \end{equation}
  In particular, if $\wv{p_{r,s}} = \hv{p_{r,s}} \otimes \gvac$ and $\wv{p_{r,s}; q} = \hv{p_{r,s}} \otimes \gv{q}$, where $\hv{p}$ denotes the Heisenberg \hwv{} of weight $p$, $\gvac$ denotes the ghost vacuum and $\gv{q}$ denotes a ghost \rhwv{} of weight $q \notin \ZZ$, then
  \begin{subequations}
    \begin{align}
      \scrs{r}\wv{p_{r,s}}&=
      \begin{cases}
        (\rho_{-\alpha}\circ\sigma_r)
        \left(\fdjack{[(-s-1)^r]}{t}{x,y}\right)\wv{p_{-r,s}} \neq 0 & \text{if \(s\le-1\),} \\
        0 & \text{if \(s\ge 0\),}
      \end{cases}\label{eq:SV}\\
      \scrs{r}\wv{p_{r,s};q}&=
      \begin{cases}
        (\rho_{-\alpha}\circ\widetilde{\sigma}_r)
        \left(\fdjack{[-s^r]}{t}{x,y}\right)\wv{p_{-r,s};q} \neq 0 & \text{if \(s\le0\),} \\
        0 \hphantom{(\rho_{-\alpha}\circ\sigma_r)
        \left(\fdjack{[(-s-1)^r]}{t}{x,y}\right)\wv{p_{-r,s}} \neq {}} & \text{if \(s\ge 1\),}
      \end{cases}\label{eq:RSV}
    \end{align}
  \end{subequations}
  where $\djack{\nu}{t}$ denotes the symmetric polynomials dual to the Jack polynomials (see \appref{sec:SymmPoly}).  The $\scrs{r}$-images of $\wv{p_{r,s}}$ ($\wv{p_{r,s};q}$) in $\Wak{p_{-r,s}}$ ($\WakRel{p_{-r,s};q}$) are therefore (relaxed) singular vectors, for $s \le -1$ ($s \le 0$).
\end{thm}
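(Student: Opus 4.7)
My plan is to proceed in three stages: verify the homomorphism property via the screening-form argument; expand the integrand into a single Jack-Cauchy kernel in two variable sets; and use orthogonality of Jack polynomials in $r$ variables to extract the rectangular component. For the homomorphism property I would appeal to the differential-form calculation sketched just before the theorem: since each $\scr(z_i)\,\dd z_i$ has already been shown to commute with the $\AKMA{sl}{2}$ currents up to an exact $1$-form, the wedge product commutes up to an exact $r$-form whose integral over the closed cycle $[\Delta_r]$ vanishes. Well-definedness of the integration is precisely the congruence \eqref{eq:p_rs}; it holds verbatim on $\WakRel{p_{r,s};q}$ since tensoring with $\gv{q}$ does not alter the $z_i$-monodromy.

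The core computation applies \eqref{eq:ScrProd} to $\wv{p_{r,s}}=\hv{p_{r,s}}\otimes\gvac$. Because $a_m\hv{p_{r,s}}=0$ for $m>0$, the annihilation exponential acts trivially; rewriting the creation exponential through $\rho_{-\alpha}$ identifies it with the Heisenberg half of a Jack-Cauchy kernel $\rho_{-\alpha}\bigl(\sum_\lambda\fjack{\lambda}{t}{z}\fdjack{\lambda}{t}{y}\bigr)$. For the ghost factor, expanding $\beta(z_i)$ in modes, using $\beta_n\gvac=0$ for $n\ge0$, and grouping by partition produces $\sigma_r\bigl(\sum_\mu\fmonsym{\mu}{z}\fdmonsym{\mu}{t}{x}\bigr)\gvac$, which is the ghost half of the same Cauchy kernel under monomial/dual-monomial duality. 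Multiplying the two and interpreting the combined action of $\rho_{-\alpha}\circ\sigma_r$ on the $x,y$ variables collapses everything to a single two-variable Jack-Cauchy kernel $\sum_\lambda\fjack{\lambda}{t}{z}\fdjack{\lambda}{t}{x,y}$.

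It remains to integrate against $G_r(z;t)\prod_i z_i^s$. In $r$ variables the rectangular Jack polynomial obeys $\fjack{[(-s-1)^r]}{t}{z_1,\dots,z_r}=(z_1\cdots z_r)^{-s-1}$, so for $s\le-1$ the measure $\prod_i z_i^s\,\dd z_i=\prod_i z_i^{s+1}\,\dd z_i/z_i$ is exactly the contour form of pairing against this rectangle in the Jack inner product; the normalised cycles $[\Delta_r]$ of \thmref{thm:tkthm} identify the integration with that inner product on symmetric polynomials in $r$ variables. Jack orthogonality then collapses the $\lambda$-sum to $\lambda=[(-s-1)^r]$, yielding \eqref{eq:SV}. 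For $s\ge0$ the corresponding shape has negative entries, no Jack polynomial pairs with the integrand, and the integral vanishes. The relaxed case is identical except that for $q\notin\ZZ$ no ghost mode annihilates $\gv{q}$; the full expansion $\beta(z_i)=\sum_{k\in\ZZ}\beta_{-k}z_i^{k-1}$ then gives $\prod_i\beta(z_i)\gv{q}$ an extra factor of $(z_1\cdots z_r)^{-1}$, and $\sigma_r$ is replaced by $\widetilde{\sigma}_r$. This shifts $s\mapsto s-1$, producing the rectangle $[(-s)^r]$ and the non-vanishing condition $s\le0$.

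The main obstacle I foresee is twofold. First is the precise identification of $\int_{[\Delta_r]}G_r(z;t)\prod_i z_i^s(\cdot)\,\dd z$ with the Jack inner product on symmetric polynomials in exactly $r$ variables, including how the rectangular shape is forced by the matched number of variables; the hypotheses $d(d+1)/t,\,d(r-d)/t\notin\ZZ$ enter precisely to secure both the Tsuchiya--Kanie cycle and the non-degeneracy of the Jack pairing. Second is the non-vanishing claim: for this I would invoke the addition formula $\fdjack{\lambda}{t}{x,y}=\sum_{\mu\subseteq\lambda}\fdjack{\lambda/\mu}{t}{y}\fdjack{\mu}{t}{x}$, observe that distinct $\mu$-summands sit in distinct bi-graded components of $\UEA\HA_-\otimes\UEA\GA_-$ and so cannot cancel, and note that the $\mu=\emptyset$ contribution $\rho_{-\alpha}\bigl(\fdjack{[(-s-1)^r]}{t}{y}\bigr)\beta_{-1}^r\wv{p_{-r,s}}$ is manifestly non-zero since $\rho_{-\alpha}$ is an algebra isomorphism and $\beta_{-1}^r\gvac\ne0$ in $\GhostVac$.
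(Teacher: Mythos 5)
Your stages 2 and 3 reproduce, in essentially the same way as the paper, the derivation of \eqref{eq:SV} and \eqref{eq:RSV}: expand \eqref{eq:ScrProd} on the generating vector, assemble the Heisenberg exponential and the ghost mode expansion into a single Cauchy kernel $\sum_{\lambda}\fjack{\lambda}{t}{z}\,\fdjack{\lambda}{t}{x,y}$ via $\rho_{-\alpha}$ and $\sigma_r$ (or $\widetilde{\sigma}_r$), identify the $[\Delta_r]$-integral against $G_r(z;t)\prod_i z_i^{s+1}$ with the $r$-variable Jack pairing against the rectangle, and collapse by orthogonality. Your vanishing argument for $s\ge0$ (resp.\ $s\ge1$) by degree counting is fine and even slightly more uniform than the paper's, which uses a conformal-weight argument for $s\ge1$ and the positivity of the powers of $z_i$ only at $s=0$; your non-vanishing argument via the skew expansion of $\djack{\lambda}{t}$ is a legitimate fleshing-out of the paper's terse appeal to the injectivity of $\rho_{-\alpha}$ and $\sigma_r$ (note only that distinct $\mu$ with $\abs{\mu}$ equal share a bidegree, so it is really the isolated $\mu=\emptyset$ term, the unique one of ghost degree $r$, that clinches it).

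There is, however, a genuine gap: the theorem claims that $\scrs{r}\colon\Wak{p_{r,s}}\to\Wak{p_{-r,s}}$ and $\scrs{r}\colon\WakRel{p_{r,s};q}\to\WakRel{p_{-r,s};q}$ are \emph{non-trivial} for every $s\in\ZZ$ allowed by the hypotheses, but your only handle on non-triviality is the image of the generating (relaxed) \hwv{}, which you yourself show vanishes for $s\ge0$ (resp.\ $s\ge1$). In those regimes your proposal proves nothing about non-triviality, and the conclusion does not follow from the homomorphism property alone. The paper closes exactly this case by evaluating $\scrs{r}$ on a different vector of the source module: using the ghost correlator $\braket{\gvac}{\prod_{i=1}^r\beta(z_i)\,\gamma_{-s}^r\,\gvac}=(-1)^r r!\,\braket{\gvac}{\gvac}\prod_{i=1}^r z_i^{-s-1}$ and the normalisation of $[\Delta_r]$, it computes $\dwv{p_{-r,s}}\scrs{r}\gamma_{-s}^r\wv{p_{r,s}}=(-1)^r r!\,\braket{p_{-r,s}}{p_{-r,s}}\neq0$ for $s\ge0$, with the obvious relaxed analogue for $s\ge1$. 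You need some supplementary computation of this kind (a matrix element off the top weight space) to recover the full statement. A minor imprecision in the same vein: in the relaxed case the positive modes $\beta_n,\gamma_n$ with $n\ge1$ do annihilate $\gv{q}$; it is only the zero modes that do not, though your conclusion (the extra factor $\prod_i z_i^{-1}$ and the replacement $\sigma_r\to\widetilde{\sigma}_r$) is correct.
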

\begin{proof}
  We first show the non-triviality of $\scrs{r}$. Let \(\dwv{p_{-r,s}}\) and \(\dwv{p_{-r,s};q}\) denote the functionals dual to \(\wv{p_{-r,s}}\) and \(\wv{p_{-r,s};q}\), respectively.  Then, it follows from the ghost commutation relations \eqref{eq:BosGhComm} and adjoint \eqref{eq:GhAdj} that
  \begin{subequations}
  \begin{equation}
    \braket{\gv{q}}{\prod_{i=1}^r \beta(z_i) \gamma_{-s}^r\gv{q}} = (-1)^r r! \braket{\gv{q}}{\gv{q}} \prod_{i=1}^r z_i^{-s-1},\qquad \text{\(s\ge1\), \(q\in\mathbb{C}\).}
  \end{equation}
  Replacing the \rhwv{} $\gv{q}$ by the ghost vacuum $\gvac$ gives the same result, but for $s \ge 0$ (because $\beta_0$ annihilates $\gvac$):
  \begin{equation}
    \braket{\gvac}{\prod_{i=1}^r \beta(z_i) \gamma_{-s}^r\gvac} = (-1)^r r! \braket{\gvac}{\gvac} \prod_{i=1}^r z_i^{-s-1},\qquad \text{\(s\ge0\).}
  \end{equation}
  \end{subequations}
  The non-triviality of \(\scrs{r}\) on $\Wak{p_{r,s}}$, for \(s\ge 0\), now follows from \eqref{eq:ScrProd} and the normalisation of \([\Delta_r]\): 
  \begin{subequations}
  \begin{align}
    \dwv{p_{-r,s}}\scrs{r}\gamma_{-s}^r\wv{p_{r,s}} &=\int_{[\Delta_r]}G_r(z;t)\prod_{i=1}^r z_i^s\cdot
    \dwv{p_{-r,s}}\prod_{i=1}^r \beta(z_i) \gamma_{-s}^r\wv{p_{-r,s}}\:\dd z_1\cdots\dd z_r\notag\\
    &=(-1)^rr!\braket{p_{-r,s}}{p_{-r,s}}\int_{[\Delta_r]}G_r(z;t)\:\frac{\dd z_1\cdots z_r}{z_1\cdots z_r}\notag\\
    &=(-1)^r r!\braket{p_{-r,s}}{p_{-r,s}}\neq0.
  \end{align}
  Here, we note that the $a_{-m}$ and $a_{m}$, with $m \ge 1$, appearing in \eqref{eq:ScrProd} annihilate $\dwv{p_{-r,s}}$ and $\wv{p_{r,s}}$, respectively.  A similar calculation gives the relaxed version for $s\ge 1$:
  \begin{equation}
    \dwv{p_{-r,s};q}\scrs{r}\gamma_{-s}^r\wv{p_{r,s};q} = (-1)^r r!\braket{p_{-r,s};q}{p_{-r,s};q} \neq 0.
  \end{equation}
  \end{subequations}
  We note that the conformal weight of $\wv{p_{-r,s}}$ ($\wv{p_{-r,s}; q}$) is greater than that of $\wv{p_{r,s}}$ ($\wv{p_{r,s}; q}$) by $rs$.  It follows that $\scrs{r}\wv{p_{r,s}}=0$ ($\scrs{r}\wv{p_{r,s};q}=0$) for all $s\ge 1$ because $\scrs{r}$ is an $\AKMA{sl}{2}$-module homomorphism and hence it preserves conformal weights.
  
  We settle the non-triviality of $\scrs{r}$ for the remaining values of \(s\) by explicitly computing the image of the (relaxed) \hwvs{} \(\wv{p_{r,s}}\) and \(\wv{p_{r,s};q}\).  In the former case, we obtain
  \begin{subequations}
    \begin{align}
      \scrs{r}\wv{p_{r,s}}
      &= \int_{[\Delta_r]}G_r(z;t)\prod_{i=1}^r z_i^s\cdot
      \prod_{m\ge1}\exp\left(\frac{-2}{\alpha}\fpowsum{m}{z}\frac{a_{-m}}{m}\right)\cdot
      \prod_{i=1}^r\beta(z_i)\wv{p_{-r,s}}\:\dd z_1\cdots \dd z_r\notag\\
      &=\int_{[\Delta_r]}G_r(z;t)\prod_{i=1}^r z_i^s\cdot
      \prod_{m\ge1}\exp\left(\frac{-2}{\alpha}\frac{\fpowsum{m}{z}a_{-m}}{m}\right)
      \sum_{\nu\st\ell(\nu)\le r}\beta_{-\nu_1-1}\cdots\beta_{-\nu_r-1}
      \fmonsym{\nu}{z}\wv{p_{-r,s}}\:\dd z_1\cdots \dd z_r\notag\\
      &=\int_{[\Delta_r]}G_r(z;t)\prod_{i=1}^r z_i^s\cdot
      \rho_{-\alpha}\biggl(\prod_{m\ge1}\exp\left(\frac{1}{t}\frac{\fpowsum{m}{y}\fpowsum{m}{z}}{m}\right)\biggr)\cdot
      \sigma_r\biggl(\sum_{\nu\st\ell(\nu)\le r}\fdmonsym{\nu}{t}{x}\fmonsym{\nu}{z}\biggr)\wv{p_{-r,s}}\:\dd
      z_1\cdots \dd z_r\notag\\
      &=\int_{[\Delta_r]}G_r(z;t)\prod_{i=1}^r z_i^{s+1}\cdot
      (\rho_{-\alpha}\circ\sigma_r)\left(\prod_{m\ge1}\exp\left(\frac{1}{t}\frac{\left(\fpowsum{m}{x}+\fpowsum{m}{y}\right)\fpowsum{m}{z}}{m}\right)\right)
      \wv{p_{-r,s}}\:\frac{\dd z_1\cdots \dd z_r}{z_1\cdots z_r} \notag
    \intertext{(if $s=0$, this vanishes, in agreement with \eqref{eq:SV}, as the powers of the $z_i$ in the integrand are all positive)}
      &=\sum_{\nu\st\ell(\nu)\le r} \intprod{\fjack{[(-1-s)^r]}{t}{z}}{\fjack{\nu}{t}{z}}_t^r
      \:(\rho_{-\alpha}\circ\sigma_r)\left(\fdjack{\nu}{t}{x,y}\right) \wv{p_{-r,s}}\notag\\
      &=(\rho_{-\alpha}\circ\sigma_r)\left(\fdjack{[(-1-s)^r]}{t}{x,y}\right)\wv{p_{-r,s}}.
    \end{align}
    Here, we have used the finite-variable version of \eqref{eq:Cauchy} to rewrite the sum over the symmetric monomials $\monsym{\nu}$ and their duals $\dmonsym{\nu}{t}$ as a product of exponentials of power sums.  Then, we note that $\fpowsum{m}{x}+\fpowsum{m}{y}$ is the power sum $\fpowsum{m}{x,y}$ in both \(x_i\) and \(y_j\) variables and use \eqref{eq:Cauchy} again to expand the product of exponentials in terms of Jack polynomials and their duals.  Finally, we use \propref{prop:jackprops} to identify $\prod_i z_i^{-1-s}$ as a Jack polynomial in the $z_i^{-1}$, then apply the orthogonality of Jack polynomials, and lastly note that the Jack polynomial norm follows from the normalisation of \([\Delta_r]\).  For $s\le -1$, the result is non-vanishing since $\rho_{-\alpha}$ and $\sigma_r$ are injective.  This proves the non-triviality of $\scrs{r}$ on $\Wak{p_{r,s}}$ for all $s \in \ZZ$, as well as \eqref{eq:SV}.

    A similar computation in the relaxed sector results in
    \begin{align}
      \scrs{r}\wv{p_{r,s};q}
      &=\int_{[\Delta_r]}G_r(z;t)\prod_{i=1}^r z_i^s\cdot
      \prod_{m\ge1}\exp\left(\frac{-2}{\alpha}\frac{\fpowsum{m}{z}a_{-m}}{m}\right)\cdot
      \prod_{i=1}^r\beta(z_i)\wv{p_{-r,s};q}\:\dd z_1\cdots \dd z_r\notag\\
      &=\int_{[\Delta_r]}G_r(z;t)\prod_{i=1}^r z_i^s\cdot
      \prod_{m\ge1}\exp\left(\frac{-2}{\alpha}\frac{\fpowsum{m}{z}a_{-m}}{m}\right)\cdot
      \sum_{\nu\st\ell(\nu)\le r}\beta_{-\nu_1}\cdots\beta_{-\nu_r}\fmonsym{\nu}{z}\wv{p_{-r,s};q}\:      \frac{\dd z_1\cdots \dd z_r}{z_1\cdots z_r}\notag\\
      &=(\rho_{-\alpha}\circ\widetilde{\sigma_r})\left(\fdjack{[-s^r]}{t}{x,y}\right)
      \wv{p_{-r,s};q},
    \end{align}
  \end{subequations}
  which is likewise non-vanishing for $s\le 0$ as $\rho_{-\alpha}$ and $\widetilde{\sigma_r}$ are injective.  This proves \eqref{eq:RSV} and the non-triviality of $\scrs{r}$ on $\WakRel{p_{r,s};q}$, for all $s \in \ZZ$.
\end{proof}

Suppose now that the $\AKMA{sl}{2}$ level $k$ is admissible:  $t=k+2=\frac{u}{v}$, where $u \in \ZZ_{\ge 2}$ and $v \in \ZZ_{\ge 1}$ are coprime.  The $\AKMA{sl}{2}$ vacuum may be identified with the Wakimoto \hwv{} $\wv{0} = \wv{p_{1,0}} = \wv{p_{-u+1,-v}} \in \Wak{p_{-u+1,-v}}$ (we note the symmetry $p_{r,s} = p_{r+u,s+v}$).  \thmref{thm:ConstructSVs} then guarantees that the $\AKMA{sl}{2}$-module homomorphism $\scrs{u-1}$ acts non-trivially on $\wv{p_{u-1,-v}}$ to give a non-trivial singular vector in the vacuum module.  The corresponding field then generates the non-trivial proper ideal of the universal \voa{}, proving the following result:
\begin{cor} \label{cor:WakVacIsUniversal}
The universal \voa{} \(\UVOA{k}\) of $\AKMA{sl}{2}$ at non-critical level $k\neq-2$ may be realised as a subalgebra of the Wakimoto \voa{}.
\end{cor}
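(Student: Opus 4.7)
The plan is to invoke universality: the Wakimoto realisation \eqref{eq:ffrealisation} yields a surjective VOA homomorphism \(\pi\colon\UVOA{k}\twoheadrightarrow W\), where \(W\) denotes the subVOA of the Wakimoto VOA generated by the images of \(e(z)\), \(h(z)\) and \(f(z)\). The corollary amounts to showing that \(\pi\) is injective. For non-admissible \(k\), this is automatic, since \(\UVOA{k}\) is simple by the definition of admissibility. For admissible \(k = u/v - 2\), the unique maximal proper ideal of \(\UVOA{k}\) is generated by a single primary field corresponding to a singular vector \(\chi\) of the vacuum module, so \(\ker\pi\) is either zero or contains \(\chi\); the work is thus to show that \(\pi(\chi) \neq 0\) in \(W \subseteq \Wak{p_{1,0}}\).

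For this, I would use \thmref{thm:ConstructSVs} to exhibit an explicit non-zero singular vector at the appropriate weight inside \(\Wak{p_{1,0}}\). The admissibility condition \(t = u/v\) yields the symmetry \(p_{r,s} = p_{r+u,s+v}\), so that the Wakimoto vacuum may equivalently be written as \(\wv{p_{1,0}} = \wv{p_{-(u-1),-v}} \in \Wak{p_{-(u-1),-v}} = \Wak{p_{1,0}}\). Applying \thmref{thm:ConstructSVs} with \(r = u-1 \in \ZZ_{\ge 1}\) and \(s = -v \le -1\), the screening operator \(\scrs{u-1}\colon \Wak{p_{u-1,-v}} \to \Wak{p_{1,0}}\) sends \(\wv{p_{u-1,-v}}\) to a non-zero singular vector in \(\Wak{p_{1,0}}\), given explicitly by the Jack polynomial formula \eqref{eq:SV}. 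A short computation using \eqref{eq:RelaxedWts} confirms that its \(\SLA{sl}{2}\)- and conformal weights coincide with those that \(\pi(\chi)\) must carry.

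The main obstacle is to identify this screening image as (a scalar multiple of) \(\pi(\chi)\), rather than as a singular vector of \(\Wak{p_{1,0}}\) lying outside the image \(W\). The Jack polynomial expression \eqref{eq:SV} presents the singular vector as a polynomial in the modes \(\beta_{-n} = e_{-n}\) and \(a_{-n}\); the latter can be reabsorbed using the Wakimoto formulae for \(h(z)\) and the normally-ordered \(\beta\gamma\) current, so the singular vector in fact lies in the cyclic \(\AKMA{sl}{2}\)-submodule of \(\Wak{p_{1,0}}\) generated by \(\wv{p_{1,0}}\), which is precisely \(W\). Since \(\pi(\chi)\) is the unique (up to scalar) singular vector of \(W\) at the specified weight, non-vanishing of the screening image forces \(\pi(\chi) \neq 0\). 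Hence \(\ker\pi = 0\), and \(\UVOA{k}\) embeds as a subalgebra of the Wakimoto VOA.
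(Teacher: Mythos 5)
Your route is the paper's own: the non-admissible case is immediate from simplicity of \(\UVOA{k}\), and for admissible \(k\) one uses the periodicity \(p_{r,s}=p_{r+u,s+v}\) to write the Wakimoto vacuum as \(\wv{p_{1,0}}=\wv{p_{-(u-1),-v}}\) and applies \thmref{thm:ConstructSVs} with \((r,s)=(u-1,-v)\) to produce the non-zero singular vector \(\scrs{u-1}\wv{p_{u-1,-v}}\) of \(\SLA{sl}{2}\)-weight \(2(u-1)\) and conformal weight \((u-1)v\) in \(\Wak{p_{1,0}}\). The paper then simply identifies this screening image with the (image of the) vacuum singular vector and concludes; it does not argue the identification further, so in substance your proposal and the paper's proof coincide.

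The place where you go beyond the paper is precisely the step you call the main obstacle, and your resolution of it does not hold up. The claim that the \(a_{-n}\) appearing in \((\rho_{-\alpha}\circ\sigma_{u-1})\bigl(\djack{[(v-1)^{u-1}]}{t}\bigr)\wv{p_{1,0}}\) ``can be reabsorbed'' is circular: from \eqref{eq:ffrealisation} one has \(h_{-n}=2J_{-n}+\alpha a_{-n}\), so trading \(a_{-n}\) for \(h_{-n}\) necessarily reintroduces the ghost current modes \(J_{-n}\) (equivalently \(\beta\)- and \(\gamma\)-modes), which are not modes of \(\AKMA{sl}{2}\); individual monomials in the \(\beta_{-n}\) and \(a_{-n}\) do not lie in \(\UEA\tbrac{\AKMA{sl}{2}}\,\wv{p_{1,0}}\), and the substitution does not show that this particular combination does. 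The supporting claims are also shakier than stated: uniqueness of the maximal ideal of \(\UVOA{k}\) does not by itself give ``\(\ker\pi\) is zero or contains \(\chi\)'' (proper ideals generated by deeper singular vectors of the vacuum module need not contain \(\chi\)), and ``\(\pi(\chi)\) is the unique singular vector of \(W\) at that weight'' is not automatic, since a singular vector of a quotient need not be the image of a singular vector. So while your proposal mirrors the paper's argument, the extra justification you supply for identifying the screening image with \(\pi(\chi)\) is a genuine gap as written; closing it requires input the paper also leaves implicit, such as the structure of \(\Wak{0}\) determined in \cite{BerFoc90} or a direct pairing computation locating the screening image inside the submodule generated by \(\wv{p_{1,0}}\).
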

\noindent This is, of course, obvious if $k$ is not admissible.  What it means
in the admissible level case is that calculations requiring the singular
vector of the vacuum module of $\AKMA{sl}{2}$ may be equivalently carried out
in the free field realisation using Jack symmetric polynomials.

\section{Classifying admissible modules} \label{sec:AdmMod}

In this section, we specialise to admissible levels $k=t-2$, where $t=\frac{u}{v}$ and $u\in\ZZ_{\ge2}$ and $v\in\ZZ_{\ge1}$ are coprime.  Then, the universal \voa{} $\UVOA{k}$ of $\AKMA{sl}{2}$ is not simple and the unique maximal proper ideal is generated by the field that corresponds to the \sv{} of the vacuum module (see \secref{sec:RelEx}).  As we saw in \thmref{thm:ConstructSVs} and \corref{cor:WakVacIsUniversal}, the vacuum module may be constructed as a submodule of the Wakimoto vacuum module $\Wak{0}$ and its singular vector is then explicitly given by $\scrs{u-1} \wv{p_{u-1,-v}}$.

Quotienting by this maximal proper ideal, that is, setting the \sv{} $\scrs{u-1} \wv{p_{u-1,-v}}$ to zero, amounts to replacing the universal \voa{} $\UVOA{k}$ by its simple counterpart \(\AdmMod{u}{v}\).  Our aim in this section is to use the explicit expression for the vacuum \sv{} to classify the possible (relaxed) highest weights and thereby determine the spectrum of $\AdmMod{u}{v}$-modules.  By \corref{cor:WakVacIsUniversal}, these calculations may be performed in Wakimoto's free field realisation.  More specifically, we will use symmetric polynomial technology to compute a generator of the annihilating ideal in Zhu's algebra. For readers unfamiliar with Zhu's algebra, we refer to Appendix \ref{sec:zhu} for motivation, basic definitions and a very short primer.

An old result of Frenkel and Zhu \cite{FreVer92} states that Zhu's algebra \(\zhu{\UVOA{k}}\) for \(\UVOA{k}\) is nothing but the \uea{} $\UEA\:\SLA{sl}{2}$ of (non-affine) \(\SLA{sl}{2}\) (see \propref{prop:zhusl2}).  By \propref{prop:zhuideal}, Zhu's algebra \(\zhu{\AdmMod{u}{v}}\) for the quotient \(\AdmMod{u}{v}\) is then the quotient of \(\UEA\:\SLA{sl}{2}\) by the annihilating ideal generated by the representative of the singular vector \(\scrs{u-1}\wv{p_{u-1,-v}}\). Since Zhu's algebra is filtered by conformal weight, whereas the conformal weight of \(e(z)\), \(h(z)\) and \(f(z)\) is 1 and the conformal weight of the singular vector is \((u-1)v\), it follows that the image of the singular vector in \(\zhu{\UVOA{k}}\) is a polynomial in the \(\SLA{sl}{2}\) generators of total degree at most \((u-1)v\). Furthermore, as Zhu's algebra is just the algebra of zero modes acting on \rhwvs{}, the polynomial corresponding to the singular vector can be determined by evaluating the zero mode of the singular vector on general \rhwvs{}, as in \cite{FeiAnn92}, since this is equivalent to evaluating the polynomial at infinitely many points.

However, the $\SLA{sl}{2}$-weight of the vacuum \sv{} $\scrs{u-1}\wv{p_{u-1,-v}}$ is $\lambda_{u-1,-v} = \lambda_{p_{u-1,-v}} = 2(u-1)$, which means that the corresponding field and its zero mode shifts the \(\SLA{sl}{2}\)-weight of any \rhwv{} upon which it acts by this amount.  It is far more convenient to work with a field that does not shift \(\SLA{sl}{2}\)-weights and so we instead consider the field corresponding to the $\SLA{sl}{2}$-weight $0$ vector
\begin{align}\label{eq:idealgenerator}
  f_0^{u-1}\scrs{u-1}\wv{p_{u-1,-v}} &= \scrs{u-1}f_0^{u-1}\wv{p_{u-1,-v}} \notag \\
  &= \lambda_{u-1,-v} \brac{\lambda_{u-1,-v} - 1} \cdots \brac{\lambda_{u-1,-v} - u+2} \scrs{u-1}\gamma_0^{u-1}\wv{p_{u-1,-v}}.
\end{align}
Since $u \ge 2$, we may renormalise this vector by dividing by the non-zero $\lambda_{u-1,-v}$-dependent factors on the \rhs{}.  The field corresponding to this renormalised vector is then
\begin{align} \label{eq:DefChi}
  \chi(w)&=\int_{[\Delta_r]}\scrs{u-1}(z_1+w,\dots,z_{u-1}+w)\vop{p_{u-1,-v}}(w)\gamma(w)^{u-1} \:\dd z_1\cdots\dd z_{u-1} \notag \\
  &= \int_{[\Delta_r]} \vop{-2/\alpha}(z_1+w)\cdots\vop{-2/\alpha}(z_{u-1}+w)\vop{p_{u-1,-v}}(w) \beta(z_1+w)\cdots\beta(z_{u-1}+w)\gamma(w)^{u-1} \:\dd z_1\cdots\dd z_{u-1}.
\end{align}
We note that $e_0^{u-1}$ acting on the vector $f_0^{u-1}\scrs{u-1}\wv{p_{u-1,-v}}$ gives a non-zero multiple of the \sv{} $\scrs{u-1}\wv{p_{u-1,-v}}$.  It follows that the annihilating ideal that we obtain from $\chi(w)$ will be the same as that which we would have obtained if we had instead worked with the \sv{} directly.

\subsection{Admissible \hwms{}} \label{sec:HWAdmMod}

In this section, we will evaluate the action of the zero mode of \(\chi(w)\) on a general \hwv{} before moving on to the evaluation of \(\chi(w)\) on \rhwvs{}. This \hw{} computation will not yet yield sufficient information to determine the image of \(\chi(w)\) in Zhu's algebra \(\zhu{\UVOA{k}}\) as a polynomial in the \(\SLA{sl}{2}\) generators \(e,h\) and \(f\), but it will constrain the weights of admissible \hwvs{} to a finite set.
 
Every \hwv{} of $\UVOA{k}$ may be realised as a \hwv{} of some Wakimoto module $\Wak{p}$.  It follows that the Heisenberg weights $p$ of the admissible \hwvs{} of $\AKMA{sl}{2}$, that is, the \hwvs{} of the $\AdmMod{u}{v}$-modules, are zeroes of
\begin{multline}
  \dwv{p}\chi(w)\wv{p} = \int_{[\Delta_r]} \inner{\hv{p}}{\vop{-2/\alpha}(z_1+w) \cdots \vop{-2/\alpha}(z_{u-1}+w) \vop{p_{u-1,-v}}(w) \: \hv{p}} \\
  \cdot \inner{\gvac}{\beta(z_1+w) \cdots \beta(z_{u-1}+w) \gamma(w)^{u-1} \: \gvac} \: \dd z_1\cdots\dd z_{u-1}.
\end{multline}
The ghost contribution is easily evaluated by computing the \ope{} of the ghost fields.  Since $\gvac$ is the ghost vacuum, only the fully contracted part of the \ope{} contributes:
\begin{align}
\inner{\gvac}{\beta(z_1+w) \cdots \beta(z_{u-1}+w) \gamma(w)^{u-1} \: \gvac} &= \inner{\gvac}{\normord{\beta(z_1+w) \cdots \beta(z_{u-1}+w)} \normord{\gamma(w) \cdots \gamma(w)} \: \gvac} \notag \\
&= \frac{(-1)^{u-1} (u-1)!}{z_1 \cdots z_{u-1}} \inner{\gvac}{\gvac}.
\end{align}
The contribution from the free boson part of the Wakimoto realisation is likewise easily determined.  Up to non-zero constant factors, which obviously do not affect the zeroes of $\dwv{p}\chi(w)\wv{p}$, this contribution is
\begin{align} \label{eq:FBContribution}
&\inner{\hv{p}}{\vop{-2/\alpha}(z_1+w) \cdots \vop{-2/\alpha}(z_{u-1}+w) \vop{2(u-1)/\alpha}(w) \: \hv{p}} \notag \\&\mspace{200mu} = \prod_{1 \le i \neq j \le u-1} (z_i-z_j)^{1/t} \cdot \prod_{i=1}^{u-1} (z_i+w)^{-2p/\alpha} z_i^{-2(u-1)/t} \cdot w^{2(u-1)p/\alpha} \inner{\hv{p}}{\hv{p}} \notag \\
&\mspace{200mu} = G_{u-1}(z;t) \prod_{i=1}^{u-1} z_i^{-v} \cdot \prod_{i=1}^{u-1} \brac{1 + \frac{z_i}{w}}^{-2p/\alpha} \inner{\hv{p}}{\hv{p}},
\end{align}
where we have used \eqref{eq:CompVerOps}, noting that all the terms involving the $a_n$ with $n \neq 0$ either annihilate the Heisenberg vacuum $\hv{p}$ or its dual, and recognised the kernel of the symmetric polynomial inner product from \eqref{eq:DefG}.

Putting these contributions together, we find that the admissible Heisenberg weights of an $\AKMA{sl}{2}$ \hwv{} (in the Wakimoto free field realisation) must satisfy
\begin{equation} \label{eq:HWConstraint}
0 = \dwv{p}\chi(w)\wv{p} = \int_{[\Delta_r]} G_{u-1}(z;t) \: \overline{\fjack{[v^{u-1}]}{t}{z}} \: \prod_{i=1}^{u-1} \brac{1 + \frac{z_i}{w}}^{-2p/\alpha} \: \frac{\dd z_1\cdots\dd z_{u-1}}{z_1 \cdots z_{u-1}} \braket{p}{p},
\end{equation}
where we have recognised the product of the $z_i$ as a Jack polynomial using \propref{prop:jackprops} (recall that the overline indicates a symmetric polynomial in the inverse variables $z_i^{-1}$).  We have also, again, neglected an overall non-zero constant factor.  This expression has the form of an inner product for symmetric polynomials; to evaluate it, we only need to decompose the product over $i$ into Jack polynomials.  For this, we use specialisation in the form given in \eqref{eq:SpecialisingToJacks} with $x_i = -z_i/w$ for $i \le u-1$, $x_i=0$ for $i>u-1$, and $X=\alpha p$:
\begin{equation} \label{eq:FirstSpecialisation}
\prod_{i=1}^{u-1} \brac{1 + \frac{z_i}{w}}^{-2p/\alpha} = \sum_{\tau} \fjack{\tau}{t}{-z_1/w, \ldots, -z_{u-1}/w, 0, 0, \ldots} \: \func{\Xi_{\alpha p}}{\fdjack{\tau}{t}{y}} = \sum_{\tau} \frac{(-1)^{\abs{\tau}}}{w^{\abs{\tau}}} \fjack{\tau}{t}{z} \: \func{\Xi_{\alpha p}}{\fdjack{\tau}{t}{y}}.
\end{equation}
Here, we have also used the homogeneity of the Jack polynomials ($\abs{\tau}$ is the sum of the parts of the partition $\tau$). 

Using the orthogonality of Jack polynomials, \eqref{eq:HWConstraint} now becomes
\begin{align}
0 &= \sum_{\tau} w^{-\abs{\tau}} \intprod{\jack{[v^{u-1}]}{t}}{\jack{\tau}{t}}_t^{u-1} \func{\Xi_{\alpha p}}{\fdjack{\tau}{t}{y}} = w^{-\abs{[v^{u-1}]}} \intprod{\jack{[v^{u-1}]}{t}}{\jack{[v^{u-1}]}{t}}_t^{u-1} \func{\Xi_{\alpha p}}{\fdjack{[v^{u-1}]}{t}{y}} \notag \\
&= w^{-(u-1)v} \: \func{\Xi_{\alpha p}}{\fdjack{[v^{u-1}]}{t}{y}} = w^{-(u-1)v} \prod_{b \in [v^{u-1}]} \frac{\alpha p + t a'(b) - l'(b)}{t \tbrac{a(b)+1} + l(b)} \notag \\
&= w^{-(u-1)v} \prod_{i=1}^{u-1} \prod_{j=1}^{v} \frac{\alpha p + t (j-1) - (i-1)}{t \tbrac{v-j+1} + u-1-i},
\end{align}
where the result of specialising the dual Jack polynomials $\djack{\tau}{t}$ is given in \propref{prop:jackprops} (or \eqref{eq:SpecialisingToJacks}) and the arm and leg (co)lengths of \eqref{eq:ArmLeg} are easy to determine for the rectangular partition $[v^{u-1}]$.  As the denominators of the factors appearing in this expression are all strictly positive, we arrive at the constraint
\begin{equation}
\prod_{i=1}^{u-1} \prod_{j=0}^{v-1} \tbrac{\alpha p + tj - (i-1)} = 
\prod_{i=1}^{u-1} \prod_{j=0}^{v-1} \tbrac{\alpha p - \alpha p_{i,j}} = 0.
\end{equation}
This proves the following result:
\begin{prop} \label{prop:ClassHWMods}
Let $\lambda_{r,s} = \lambda_{p_{r,s}} = \alpha p_{r,s} = r-1-ts$.  Then, every \hwv{} of an $\AdmMod{u}{v}$-module has $\SLA{sl}{2}$-weight of the form $\lambda_{r,s}$, where $r=1, 2, \ldots, u-1$ and $s=0, 1, \ldots, v-1$.
\end{prop}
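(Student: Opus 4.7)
The plan is to exploit \corref{cor:WakVacIsUniversal} so that every admissible \hwv{} can be realised inside some Wakimoto module $\Wak{p}$. Because $\AdmMod{u}{v}$ is the simple quotient of $\UVOA{k}$ by the ideal generated by the vacuum \sv{}, and the field $\chi(w)$ of \eqref{eq:DefChi} is (a renormalisation of) its $\SLA{sl}{2}$-weight-zero descendant, the admissibility condition on a Wakimoto \hwv{} $\wv{p}$ is equivalent to the vanishing of the matrix coefficient $\dwv{p}\chi(w)\wv{p}$. My first step is therefore to reduce this vanishing condition to an explicit constraint on $p$.

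I would factor the matrix element into independent ghost and boson pieces. On the ghost side, only the fully-contracted piece of $\beta(z_1+w)\cdots\beta(z_{u-1}+w)\gamma(w)^{u-1}$ contributes between ghost vacua, yielding $(-1)^{u-1}(u-1)!/(z_1\cdots z_{u-1})$ up to overall scale. On the boson side, I apply \eqref{eq:CompVerOps} and note that all nonzero Heisenberg modes annihilate either $\hv{p}$ or its dual; what survives assembles, after extracting the Jack inner-product kernel $G_{u-1}(z;t)$, into $\prod_i z_i^{-v}\cdot\prod_i(1+z_i/w)^{-2p/\alpha}$.

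The crux is now a symmetric-function manoeuvre. Using \propref{prop:jackprops}, the prefactor $\prod_{i=1}^{u-1}z_i^{-v}$ is exactly $\overline{\fjack{[v^{u-1}]}{t}{z}}$, so the integral becomes the Jack inner product $\intprod{\jack{[v^{u-1}]}{t}}{\prod_i(1+z_i/w)^{-2p/\alpha}}_t^{u-1}$. I would then expand the second factor via the specialisation formula \eqref{eq:SpecialisingToJacks} with substitutions $x_i=-z_i/w$ and $X=\alpha p$, giving $\sum_\tau (-w)^{-\abs{\tau}}\fjack{\tau}{t}{z}\func{\Xi_{\alpha p}}{\fdjack{\tau}{t}{y}}$. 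Orthogonality of Jack polynomials collapses the sum to the single $\tau=[v^{u-1}]$ term, leaving a nonzero scalar multiple of $\func{\Xi_{\alpha p}}{\fdjack{[v^{u-1}]}{t}{y}}$.

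Finally, the evaluation formula in \propref{prop:jackprops}, together with the transparent arm and leg data of the rectangular partition $[v^{u-1}]$, writes this scalar as a product over $1\le i\le u-1$, $0\le j\le v-1$ of linear factors $\alpha p + tj-(i-1)$, divided by strictly positive denominators. The vanishing condition therefore forces $\alpha p=\alpha p_{i,j}$ for some such $(i,j)$, and translating via $\lambda_{r,s}=\alpha p_{r,s}$ gives the stated classification. The main obstacle lies in the symmetric-function step: recognising the rectangular Jack polynomial inside the boson contribution and applying specialisation with the correct normalisation so that only the $[v^{u-1}]$ term survives. Once this identification is made, the arm/leg evaluation delivers the result automatically.
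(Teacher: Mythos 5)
Your proposal reproduces the paper's own argument: realise the \hwv{} in a Wakimoto module, evaluate $\dwv{p}\chi(w)\wv{p}$ by splitting off the fully-contracted ghost part, recognise $\prod_i z_i^{-v}$ as $\overline{\fjack{[v^{u-1}]}{t}{z}}$, expand $\prod_i(1+z_i/w)^{-2p/\alpha}$ via specialisation with $X=\alpha p$, and let orthogonality plus the arm/leg evaluation of the rectangular partition deliver the product of linear factors in $\alpha p$. This is correct and is essentially identical to the proof given in \secref{sec:HWAdmMod}.
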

\noindent Note that when $v=1$, so that $k=t-2=u-2 \in \ZZ_{\ge 0}$, the allowed $\SLA{sl}{2}$-weights $\lambda_{r,s} = \lambda_{r,0} = r-1$ belong to the set $\set{0, 1, \ldots, k}$.  These are, of course, the \hw{}s of the integrable $\AKMA{sl}{2}$-modules and are well known to be the $\AdmMod{k+2}{1}$-modules that arise in the \WZW{} models defined on the Lie group $\SLG{SU}{2}$.

At this point, we cannot say whether all these \hwvs{} do actually appear in an $\AdmMod{u}{v}$-module.  For this, we need to work out the constraints on an arbitrary \rhwv{} because it is these constraints which allow us to write down the generator of the annihilating ideal of Zhu's algebra.  It is, however, well known that the Verma module $\Ver{\lambda_{r,s}}$, with $r=1, 2, \ldots, u-1$ and $s=0, 1, \ldots, v-1$, has infinitely many linearly independent \svs{}; however, none have $\SLA{sl}{2}$-weights belonging to the allowed set except the generator of $\SLA{sl}{2}$-weight $\lambda_{r,s}$.  It follows that there are only finitely many \hw{} $\AdmMod{u}{v}$-modules:
\begin{cor} \label{cor:ClassHWMods}
Every \hw{} $\AdmMod{u}{v}$-module is isomorphic to one of the simple $\AKMA{sl}{2}$-modules $\Irr{\lambda_{r,0}}$ or $\Disc{\lambda_{r,s}}$, where $r=1, 2, \ldots, u-1$ and $s=1, 2, \ldots, v-1$.
\end{cor}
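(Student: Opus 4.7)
The plan is to combine \propref{prop:ClassHWMods} with the known structure of $\AKMA{sl}{2}$ Verma modules at admissible level.  Let $\Mod{M}$ be a \hw{} $\AdmMod{u}{v}$-module, so by \propref{prop:ClassHWMods} its generating \hwv{} carries $\SLA{sl}{2}$-weight $\lambda_{r,s}$ for some $r \in \set{1,\ldots,u-1}$ and $s \in \set{0,1,\ldots,v-1}$, and $\Mod{M}$ is a quotient of the Verma module $\Ver{\lambda_{r,s}}$.  The task thus reduces to showing that this quotient must in fact be the unique simple quotient.

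The crucial structural input --- stated immediately before the corollary --- is that at admissible level $k = u/v - 2$, the Verma module $\Ver{\lambda_{r,s}}$ carries infinitely many linearly independent \svs{} whose $\SLA{sl}{2}$-weights form the shifted affine Weyl orbit of $\lambda_{r,s}$, and that none of these weights other than $\lambda_{r,s}$ itself lies in the admissible set $\set{\lambda_{r',s'} : 1 \le r' \le u-1, \, 0 \le s' \le v-1}$.  Since subquotients of $\AdmMod{u}{v}$-modules are themselves $\AdmMod{u}{v}$-modules, and since the composition factors of $\Ver{\lambda_{r,s}}$ are precisely the simple \hwms{} with highest weights in this orbit, any composition factor of $\Mod{M}$ must carry a highest weight in the admissible set.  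This forces $\Mod{M}$ to have a single composition factor:  any longer composition series would produce a simple subquotient, itself a \hw{} $\AdmMod{u}{v}$-module, whose highest weight falls outside the admissible set, contradicting \propref{prop:ClassHWMods}.  So $\Mod{M}$ is simple and hence equals the simple quotient of $\Ver{\lambda_{r,s}}$.

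Finally, \propref{prop:FinSL2WtMods} and the list of induced modules following it identify this simple quotient:  for $s = 0$ the weight $\lambda_{r,0} = r-1$ lies in $\set{0,1,\ldots,u-2} \subseteq \ZZ_{\ge 0}$, yielding $\Irr{\lambda_{r,0}}$; for $s \ge 1$, coprimality of $u$ and $v$ together with $1 \le s \le v-1$ forces $ts = us/v \notin \ZZ$, so $\lambda_{r,s} \notin \ZZ_{\ge 0}$ and the simple quotient is $\Disc{\lambda_{r,s}}$.  The main technical obstacle behind this argument is really the cited weight statement for \svs{} of $\Ver{\lambda_{r,s}}$:  one enumerates the shifted affine Weyl orbit explicitly and checks via the Kac-Kazhdan conditions that each non-trivial orbit representative produces a parameter pair $(r',s')$ falling outside $\set{1,\ldots,u-1} \times \set{0,\ldots,v-1}$; coprimality of $u$ and $v$ is what prevents the orbit from folding back into the alcove.
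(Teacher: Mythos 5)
Your argument is correct and follows essentially the same route as the paper: combine the weight constraint of \propref{prop:ClassHWMods} with the known singular vector (composition factor) structure of $\Ver{\lambda_{r,s}}$ at admissible level to force simplicity, then identify the simple quotient as $\Irr{\lambda_{r,0}}$ or $\Disc{\lambda_{r,s}}$ using coprimality of $u$ and $v$. The paper's own proof is just a terser version of this, citing the same ``well known'' fact about the weights of the Verma module's singular vectors; your write-up merely fills in the intermediate steps.
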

\noindent Again, we have not yet proven that all these $\UVOA{k}$-modules are actually $\AdmMod{u}{v}$-modules.  However, it is germane to point out, at this point, that these \hwms{} are precisely the admissible modules first discussed by Kac and Wakimoto \cite{KacMod88}.

\subsection{Admissible \rhwms{}} \label{sec:HWRelAdmMod}

We now turn to the more intricate, but ultimately more rewarding, analysis of the \rhwms{} of $\AdmMod{u}{v}$.  The goal is to determine the image of the field \(\chi(w)\) in Zhu's algebra \(\zhu{\UVOA{k}} \cong \UEA\:\SLA{sl}{2}\) as a polynomial \(I_{u,v}(e,h,f)\) in the \(\SLA{sl}{2}\) generators. Here, we identify the $\SLA{sl}{2}$ generators $e$, $h$ and $f$ with the images of $e(z)$, $h(z)$ and $f(z)$, respectively, in $\zhu{\UVOA{k}}$. Since Zhu's algebra is nothing but the algebra of zero modes acting on \rhwvs{} (\appref{sec:zhu}), $I_{u,v}(e,h,f)$ is identical to the polynomial \(I_{u,v}(e_0,h_0,f_0)\) that describes the action of the zero mode \(\chi_0\) on \rhwvs{}.  We remark that as the $\SLA{sl}{2}$-weight of $\chi(w)$ is $0$, the polynomial $I_{u,v}$ may be expressed as a polynomial in $h_0$ and the Virasoro zero mode $L_0$.

So as in the previous section, we evaluate matrix elements containing the field \(\chi(w)\), but this time the ``bra'' and the ``ket'' will be \rhwvs{} from a general relaxed Wakimoto module:
\begin{multline} \label{eq:RelaxedConstraint}
  \dwv{p;q}\chi(w)\wv{p;q} = \int_{[\Delta_r]} \inner{\hv{p}}{\vop{-2/\alpha}(z_1+w) \cdots \vop{-2/\alpha}(z_{u-1}+w) \vop{p_{u-1,-v}}(w) \: \hv{p}} \\
  \cdot \inner{\gv{q}}{\normord{\beta(z_1+w) \cdots \beta(z_{u-1}+w)} \normord{\gamma(w)^{u-1}} \: \gv{q}} \: \dd z_1\cdots\dd z_{u-1}.
\end{multline}
We recall that $\gv{q}$ satisfies $J_0 \gv{q} = \gamma_0 \beta_0 \gv{q} = q \gv{q}$.

The contribution from the free boson is exactly the same as in the non-relaxed case and was given in \eqref{eq:FBContribution}.  The ghost contribution, however, requires more work.  Wick's theorem lets us write this contribution in terms of contractions and normally-ordered products:
\begin{multline}
\inner{\gv{q}}{\normord{\beta(z_1+w) \cdots \beta(z_{u-1}+w)} \normord{\gamma(w)^{u-1}} \: \gv{q}} \\
= \sum_{I\subseteq\{1,\dots,u-1\}} \frac{(-1)^{\abs{I}} (u-1)!}{(u-1-\abs{I})!} \prod_{i \in I} z_i^{-1} \cdot \inner{\gv{q}}{\normord{\displaystyle \prod_{i \notin I} \beta(z_i+w) \cdot \gamma(w)^{u-1-\abs{I}}} \: \gv{q}}
\end{multline}
Here, each factor of $-z_i^{-1}$ is the contraction of $\beta(z_i+w)$ and $\gamma(w)$ and the factorials count how many such contractions are needed.  As \rhwvs{} are not necessarily annihilated by $\beta_0$, the normally-ordered factor is quite non-trivial:
\begin{align}
\inner{\gv{q}}{\normord{\displaystyle \prod_{i \notin I} \beta(z_i+w) \cdot \gamma(w)^{u-1-\abs{I}}} \: \gv{q}} &= \inner{\gv{q}}{\gamma_0^{u-1-\abs{I}} \beta_0^{u-1-\abs{I}} \: \gv{q}} \prod_{i \notin I} (z_i+w)^{-1} \notag \\
&= (u-1-\abs{I})! \binom{u-2-\abs{I}+q}{u-1-\abs{I}} \prod_{i \notin I} (z_i+w)^{-1} \inner{\gv{q}}{\gv{q}}.
\end{align}
Up to an overall non-zero constant factor, the total ghost contribution is therefore
\begin{align}
&\sum_{I\subseteq\{1,\dots,u-1\}} (-1)^{\abs{I}} \binom{u-2-\abs{I}+q}{u-1-\abs{I}} \prod_{i \in I} z_i^{-1} \cdot \prod_{i \notin I} (z_i+w)^{-1} \notag \\
&\mspace{100mu} = \prod_{i=1}^{u-1} (z_i+w)^{-1} \cdot \sum_{n=0}^{u-1} (-1)^n \binom{u-2-n+q}{u-1-n}
\felsym{n}{1 + \frac{w}{z_1},\dots,1 + \frac{w}{z_{u-1}}} \notag \\
&\mspace{100mu} = \prod_{i=1}^{u-1} \Bigl( 1 + \frac{z_i}{w} \Bigr)^{-1} \cdot \sum_{n=0}^{u-1} (-1)^n \binom{u-2-n+q}{u-1-n} \sum_{m=0}^n \binom{u-1-m}{n-m} \: \overline{\felsym{m}{z}} \: w^{m-(u-1)},
\end{align}
where $\elsym{m}$ denotes the $m$-th elementary symmetric polynomial and we have used the identity
\begin{equation}
  \felsym{n}{1+x_1,\dots,1+x_{u-1}}=\sum_{m=0}^n\binom{u-1-m}{n-m}\felsym{m}{x}
\end{equation}
to get from the second to the third line.

Combining this with the free boson contribution \eqref{eq:FBContribution}, the matrix element \eqref{eq:RelaxedConstraint} is thus proportional to
\begin{align}
&\int_{[\Delta_r]} G_{u-1}(z;t) \prod_{i=1}^{u-1} z_i^{-(v-1)} \cdot \prod_{i=1}^{u-1} \brac{1 + \frac{z_i}{w}}^{-2p/\alpha - 1} \notag \\
&\mspace{100mu} \cdot \sum_{n=0}^{u-1} (-1)^n \binom{u-2-n+q}{u-1-n} \sum_{m=0}^n \binom{u-1-m}{n-m} \: \overline{\felsym{m}{z}} \: w^m \: \frac{\dd z_1\cdots\dd z_{u-1}}{z_1 \cdots z_{u-1}} \notag \\
&= \sum_{n=0}^{u-1} (-1)^n \binom{u-2-n+q}{u-1-n} \sum_{m=0}^n \binom{u-1-m}{n-m} w^m \intprod{\fjack{[(v-1)^{u-1}]}{t}{z} \fjack{[1^m]}{t}{z}}{\prod_{i=1}^{u-1} \brac{1 + \frac{z_i}{w}}^{-2p/\alpha - 1}}_t^{u-1},
\end{align}
where we recall that elementary symmetric polynomials are examples of Jack polynomials (\propref{prop:jackprops}).  Using \eqref{eq:JackByRectJack} and specialisation as in \eqref{eq:FirstSpecialisation}, but with $X = \alpha p + t$, this reduces to
\begin{align} \label{eq:RelaxedConstraint'}
&\sum_{n=0}^{u-1} (-1)^n \binom{u-2-n+q}{u-1-n} \sum_{m=0}^n \binom{u-1-m}{n-m} w^m \sum_{\tau} \frac{(-1)^{\abs{\tau}}}{w^{\abs{\tau}}} \intprod{\fjack{[(v-1)^{u-1}] + [1^m]}{t}{z}}{\fjack{\tau}{t}{z}}_t^{u-1} \func{\Xi_{\alpha p + t}}{\fdjack{\tau}{t}{y}} \notag \\
&= \sum_{n=0}^{u-1} (-1)^n \binom{u-2-n+q}{u-1-n} \sum_{m=0}^n \binom{u-1-m}{n-m} \frac{(-1)^{\abs{\mu}}}{w^{\abs{\mu} - m}} \intprod{\fjack{\mu}{t}{z}}{\fjack{\mu}{t}{z}}_t^{u-1} \func{\Xi_{\alpha p + t}}{\fdjack{\mu}{t}{y}} \notag \\
&= w^{-(u-1)(v-1)} \sum_{n=0}^{u-1} \binom{u-2-n+q}{u-1-n} \sum_{m=0}^n (-1)^{m+n} \binom{u-1-m}{n-m} \intprod{\fjack{[1^m]}{t}{z}}{\fjack{[1^m]}{t}{z}}_t^{u-1} \func{\Xi_{\alpha p + t}}{\fdjack{\mu}{t}{y}},
\end{align}
where $\mu = [(v-1)^{u-1}] + [1^m] = [v^m, (v-1)^{u-1-m}]$ and $\abs{\mu} = (u-1)(v-1)+m$.  In the last step, we have used \eqref{eq:JackByRectJack} again and the definition \eqref{eq:innerprod} of the symmetric polynomial inner product.

\propref{prop:jackprops} gives the norm squared of $\jack{[1^m]}{t} = \elsym{m}$ and the specialisation of $\djack{\mu}{t}$ as
\begin{subequations}
\begin{align}
\intprod{\fjack{[1^m]}{t}{z}}{\fjack{[1^m]}{t}{z}}_t^{u-1} &= \prod_{i=1}^m \frac{(u-i)(t+m-i)}{(m-i+1)(t+u-1-i)} = \binom{u-1}{m} \prod_{i=1}^m \frac{t+m-i}{t+u-1-i}, \\
\func{\Xi_{\alpha p + t}}{\fdjack{\mu}{t}{y}} &= \prod_{j=1}^{v-1} \sqbrac{\prod_{i=1}^m \frac{\alpha p + tj-i+1}{2u-1-i-t(j-1)} \cdot \prod_{i=m+1}^{u-1} \frac{\alpha p + tj-i+1}{2u-1-i-tj}} \cdot \prod_{i=1}^m \frac{\alpha p + u-i+1}{t+m-i} \notag \\
&= \frac{\displaystyle \prod_{r=1}^{u-1} \prod_{s=1}^{v-1} (\lambda_p - \lambda_{r,s}) \cdot \prod_{i=0}^{m-1} (\alpha p +u-i)}{\displaystyle \prod_{i=1}^{u-1} \prod_{j=2}^{v-1} \tbrac{2u-1-i-t(j-1)} \cdot \prod_{i=0}^{m-1} \tbrac{2(u-1)-i} \cdot \prod_{i=m+1}^{u-1} (t+u-1-i) \cdot \prod_{i=1}^m (t+m-i)}. \label{eq:RelaxedSpecialisation}
\end{align}
\end{subequations}
Here, we have assumed that $v>1$; if $v=1$, then the denominator of \eqref{eq:RelaxedSpecialisation} is just $\prod_{i=1}^m (t+m-i)$.  Noting that the double product in the denominator of \eqref{eq:RelaxedSpecialisation} is a constant, independent of $m$, $n$, $p$ and $q$, the matrix element \eqref{eq:RelaxedConstraint} is thus proportional to
\begin{multline}
\prod_{r=1}^{u-1} \prod_{s=1}^{v-1} (\lambda_p - \lambda_{r,s}) \cdot \sum_{n=0}^{u-1} \binom{u-2-n+q}{u-1-n} \sum_{m=0}^n (-1)^{m+n} \binom{u-1-m}{n-m} \binom{u-1}{m} \prod_{i=0}^{m-1} \frac{\alpha p + u-i}{2(u-1)-i} \\
= \prod_{r=1}^{u-1} \prod_{s=1}^{v-1} (\lambda_p - \lambda_{r,s}) \cdot \sum_{\ell = 0}^{u-1} \binom{q-1}{\ell} \binom{u-1+\ell}{u-1} \binom{\alpha p + u}{u-1-\ell},
\end{multline}
where in addition to suppressing the denominator we have also suppressed an overall non-zero constant factor that arises when simplifying the binomial expressions.

The double product in this expression can be interpreted by noting that
\begin{equation}
(\lambda_p - \lambda_{r,s}) (\lambda_p - \lambda_{u-r,v-s}) = 4t(\Delta_p - \Delta_{r,s}), \qquad \Delta_{r,s} = \Delta_{p_{r,s}} = \frac{(r-ts)^2-1}{4t} = \frac{(vr-us)^2-v^2}{4uv},
\end{equation}
by \eqref{eq:RelaxedWts}.  If we define $K(u,v)$ to be the set of pairs $(r,s) \in \set{1, \ldots, u-1} \times \set{1, \ldots, v-1}$ with $(r,s)$ and $(u-r,v-s)$ identified, 
then the matrix element \eqref{eq:RelaxedConstraint} is given by
\begin{equation} \label{eq:RelaxedConstraint''}
\dwv{p;q}\chi(w)\wv{p;q}=\text{const}\cdot
\prod_{(r,s) \in K(u,v)} (\Delta_p - \Delta_{r,s}) \cdot \sum_{\ell = 0}^{u-1} \binom{q-1}{\ell} \binom{u-1+\ell}{u-1} \binom{\alpha p + u}{u-1-\ell}.
\end{equation}

In order to use this to find a generator of the annihilating ideal in Zhu's algebra, the sum factor in \eqref{eq:RelaxedConstraint''} needs to be expressible in terms of \(\SLA{sl}{2}\) data. To demonstrate this, we define a function $f$ of $u$, $p$ and $q$ by
\begin{equation} \label{eq:DefF}
f_{p;q}(u) = \sum_{\ell = 0}^{u-1} \binom{q-1}{\ell} \binom{u-1+\ell}{u-1} \binom{\alpha p + u}{u-1-\ell}.
\end{equation}
For small values of $u$, it gives polynomials in $\lambda_{p;q}$ and $\Delta_p$:
\begin{equation} \label{eq:SumFactorExs}
f_{p;q}(2) = \lambda_{p;q}, \qquad 
f_{p;q}(3) = \frac{3}{4} \lambda_{p;q}^2 - t \Delta_p, \qquad 
f_{p;q}(4) = \frac{5}{12} \lambda_{p;q}^3 + \brac{\frac{1}{3} - t \Delta_p} \lambda_{p;q}.
\end{equation}
Of course, $\alpha = \sqrt{2t} = \sqrt{2u/v}$ also depends upon $u$, but may be regarded as an independent variable for the following analysis because of its $v$-dependence.
\begin{prop}
For each $u \in \ZZ_{\ge 0}$, $f_{p;q}(u)$ is a polynomial in $\lambda_{p;q}$ and $\Delta_p$ that satisfies the recursion relation
\begin{equation} \label{eq:Recurse}
f_{p;q}(u+2) = \frac{(2u+1) \lambda_{p;q}}{(u+1)^2} f_{p;q}(u+1) - \frac{4t \Delta_p - (u-1)(u+1)}{(u+1)^2} f_{p;q}(u).
\end{equation}
\end{prop}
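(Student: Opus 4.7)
My plan is first to establish the recurrence \eqref{eq:Recurse} by a hypergeometric manipulation, then to deduce polynomiality in $\lambda_{p;q}$ and $\Delta_p$ by induction on $u$. For the recurrence, I would begin by recasting $f_{p;q}(u)$ as a terminating hypergeometric series. Using the identities $\binom{q-1}{\ell} = (-1)^\ell (1-q)_\ell / \ell!$, $\binom{u-1+\ell}{u-1} = (u)_\ell/\ell!$ and $\binom{\alpha p + u}{u-1-\ell} = \binom{\alpha p + u}{u-1} \cdot (-1)^\ell (1-u)_\ell / (\alpha p + 2)_\ell$, direct substitution gives
\[
f_{p;q}(u) = \binom{\alpha p + u}{u-1} \; {}_3F_2 \! \left[ \begin{matrix} 1-q, \; u, \; 1-u \\ 1, \; \alpha p + 2 \end{matrix} ; 1 \right].
\]
Because $u$ appears in both numerator parameters $u$ and $1-u$ as well as in the prefactor, this is not a plain contiguous relation. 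Instead, I would apply creative telescoping (Zeilberger's algorithm) directly to the summand
\[
F_u(\ell) = \binom{q-1}{\ell}\binom{u-1+\ell}{u-1}\binom{\alpha p + u}{u-1-\ell},
\]
seeking polynomial multipliers $A(u), B(u), C(u)$ — whose shape is dictated by the coefficients of \eqref{eq:Recurse} after clearing denominators — and a rational certificate $R(u,\ell)$ such that
\[
A(u) F_u(\ell) + B(u) F_{u+1}(\ell) + C(u) F_{u+2}(\ell) = R(u, \ell+1) F_u(\ell+1) - R(u, \ell) F_u(\ell).
\]
Summing over $\ell \in \ZZ_{\ge 0}$ telescopes the right-hand side to zero, since the summands are compactly supported in $\ell$, which yields the recurrence.

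With \eqref{eq:Recurse} in hand, polynomiality in $\lambda_{p;q}$ and $\Delta_p$ follows by induction on $u$. The base cases $f_{p;q}(1) = 1$ and $f_{p;q}(2) = \lambda_{p;q}$ are manifestly polynomial — the former by direct inspection, the latter recorded in \eqref{eq:SumFactorExs} — and the coefficients appearing in \eqref{eq:Recurse} depend on $p$ and $q$ only through $\lambda_{p;q}$ and $4t\Delta_p$, so the inductive step is immediate.

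The main obstacle lies in producing and checking the telescoping certificate: while its existence is guaranteed by the algorithm, the resulting polynomial identity in $\ell$ (with $u$, $\alpha p$ and $q$ as parameters) is somewhat intricate to verify by hand. A potential shortcut is to identify $f_{p;q}(u)$, up to its prefactor, with a suitably shifted Hahn polynomial in which $\alpha p$ plays the role of an analytically continued parameter, so that \eqref{eq:Recurse} is obtained by specialising the standard Hahn three-term recurrence; this route merely shifts the difficulty into matching the parameter dictionary.
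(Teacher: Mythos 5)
Your proposal is correct and takes essentially the same route as the paper: both apply Zeilberger's creative telescoping to the summand $F_{p;q}(u,\ell)=\binom{q-1}{\ell}\binom{u-1+\ell}{u-1}\binom{\alpha p+u}{u-1-\ell}$ to obtain the three-term recursion, and then deduce polynomiality in $\lambda_{p;q}$ and $\Delta_p$ by induction from \eqref{eq:Recurse}. The only differences are cosmetic: the paper records the explicit certificate $R(u,\ell)$ (so the telescoping identity is a finite check rather than an appeal to the algorithm) and starts the induction from $f_{p;q}(0)=0$, $f_{p;q}(1)=1$, while you use $u=1,2$; your ${}_3F_2$/Hahn remarks are an optional reformulation of the same computation.
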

\begin{proof}
We apply Zeilberger's creative telescoping algorithm, see \cite{ZeiA=B96} for background.  If we set
\begin{equation}
F_{p;q}(u,\ell) = \binom{q-1}{\ell} \binom{u-1+\ell}{u-1} \binom{\alpha p + u}{u-1-\ell},
\end{equation}
then the algorithm constructs the recursion relation
\begin{equation}
\tbrac{(\alpha p + 1)^2 - u^2} F_{p;q}(u,\ell) - (2u+1) \lambda_{p;q} F_{p;q}(u+1,\ell) + (u+1)^2 F_{p;q}(u+2,\ell) = G(u,\ell+1) - G(u,\ell),
\end{equation}
where $G(u,\ell) = R(u,\ell) F_{p;q}(u,\ell)$ and
\begin{equation}
R(u,\ell) = -\frac{2 \ell^2 (\alpha p + 1 + \ell) \tbrac{2u^2 + (2 \alpha p + 3) u + \alpha p + 1}}{u (u-\ell) (u-\ell+1)}.
\end{equation}
Summing this recursion relation over $\ell$ then yields \eqref{eq:Recurse}, upon noting that $(\alpha p + 1)^2 = 4t \Delta_p + 1$.  Since $f_{p;q}(0) = 0$ and $f_{p;q}(1) = 1$, it follows from \eqref{eq:Recurse} that $f_{p;q}(u)$ is a polynomial in $\lambda_{p;q}$ and $\Delta_p$, as claimed.
\end{proof}

Let $g_{u,v}(\lambda,\Delta)$ denote the polynomial for which $f_{p;q}(u) = g_{u,v}(\lambda_{p;q}, \Delta_p)$ and let
\begin{equation}
  I_{u,v}(\lambda,\Delta)=\prod_{(r,s)\in K(u,v)}\left(\Delta-\Delta_{r,s}\right)\cdot g_{u,v}(\lambda,\Delta).
\end{equation}
It is a simple corollary of \eqref{eq:SumFactorExs} and \eqref{eq:Recurse} that $g_{u,v}$ has degree $u-1$ as a polynomial in $\lambda$.  If we regard $\Delta$ as having degree $2$, then the total degree of $g_{u,v}$ is also $u-1$ and that of $I_{u,v}$ is therefore $(u-1)v$.
\begin{thm} \label{thm:ComputeZhu}
  Zhu's algebra of \(\AdmMod{u}{v}\) is given by the quotient
  \begin{align}
    \zhu{\AdmMod{u}{v}}=\frac{\UEA\:\SLA{sl}{2}}{\corrfn{I_{u,v}(h,T)}},
  \end{align}
  where $T = \frac{1}{2t} (\frac{1}{2} h^2 - ef - fe)$ denotes the image of $T(z)$ in $\zhu{\UVOA{k}} \cong \UEA\:\SLA{sl}{2}$.
\end{thm}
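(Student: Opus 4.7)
The plan is to identify the image $[\chi] \in \zhu{\UVOA{k}} \cong \UEA\:\SLA{sl}{2}$ of the field $\chi(w)$ from \eqref{eq:DefChi} with a nonzero scalar multiple of $I_{u,v}(h,T)$, and then appeal to the Zhu-theoretic correspondence (\appref{sec:zhu}) which identifies $\zhu{\AdmMod{u}{v}}$ as the quotient of $\UEA\:\SLA{sl}{2}$ by the two-sided ideal generated by $[\chi]$. As noted in the discussion preceding \eqref{eq:DefChi}, this ideal coincides with the one generated by the vacuum singular vector $\scrs{u-1}\wv{p_{u-1,-v}}$, since the two vectors are related by applications of $f_0^{u-1}$ and $e_0^{u-1}$ up to nonzero scalars.

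First I would record two structural facts. Since $\chi(w)$ has $\SLA{sl}{2}$-weight $0$ and conformal weight $(u-1)v$, the image $[\chi]$ lies in the weight-zero subalgebra $\UEA\:\SLA{sl}{2}^{0}$ and has PBW degree at most $(u-1)v$, matching the corresponding bound for $I_{u,v}(h,T)$. By Zhu's theorem, the zero mode $\chi_0$ acts on the top of any $\UVOA{k}$-module as $[\chi]$ does via the natural $\SLA{sl}{2}$-structure of that top. For the relaxed Wakimoto module $\WakRel{p;q}$ with $q \notin \ZZ$, the $h_0$-eigenspace of eigenvalue $\lambda_{p;q}$ at minimal conformal grade is one-dimensional, spanned by $\wv{p;q}$; hence $[\chi]\wv{p;q}$ is a scalar multiple of $\wv{p;q}$, and this scalar equals the matrix element $\dwv{p;q}\chi_0\wv{p;q}$. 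The computation culminating in \eqref{eq:RelaxedConstraint''} identifies this matrix element as $I_{u,v}(\lambda_{p;q},\Delta_p)$ up to a nonzero constant $c$ independent of $p$ and $q$. Since $I_{u,v}(h,T)$ manifestly acts on $\wv{p;q}$ as $I_{u,v}(\lambda_{p;q},\Delta_p)$, the difference $Y := [\chi] - c\,I_{u,v}(h,T) \in \UEA\:\SLA{sl}{2}^{0}$ annihilates $\wv{p;q}$ for every $p \in \CC$ and every $q \in \CC \setminus \ZZ$.

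To deduce $Y = 0$, I would invoke the following injectivity statement: writing a general weight-zero $X \in \UEA\:\SLA{sl}{2}^{0}$ in the PBW basis $\{f^a h^b e^a\}$, its scalar action on a relaxed HWV of parameters $(\lambda, \Delta)$ equals $\sum_{a,b} c_{a,b} (\lambda+2a)^b P_a(\lambda,\Delta)$, where $P_a(\lambda,\Delta)$ is the polynomial by which $f^a e^a$ acts (obtained inductively from the identity $fe = \tfrac{1}{4}h^2 + \tfrac{1}{2}h - tT$ that follows from the theorem's formula for $T$). A leading-monomial analysis --- $P_a$ has top term proportional to $(-t)^a \Delta^a$, so $(\lambda+2a)^b P_a$ contributes the distinct monomial $\Delta^a \lambda^b$ --- shows that $X \mapsto Q_X(\lambda,\Delta)$ is an injective map from $\UEA\:\SLA{sl}{2}^{0}$ into $\CC[\lambda,\Delta]$. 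Since $Q_Y$ vanishes on the Zariski-dense set $\{(\lambda_{p;q}, \Delta_p) : p \in \CC,\, q \in \CC \setminus \ZZ\} \subset \CC^2$, the polynomial $Q_Y$ vanishes identically, and therefore $Y = 0$.

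The main obstacle is this final injectivity claim, which is the only ingredient not supplied directly by the preceding Jack-polynomial calculation. The argument is elementary --- an induction on the PBW degree, combined with the explicit scalar action of $fe$ --- but must be carried out carefully to exclude cancellations between the monomials $\Delta^a \lambda^b$ arising from different $(a,b)$. Once this is in hand, $[\chi] = c\,I_{u,v}(h,T)$ follows at once, and the theorem is proved.
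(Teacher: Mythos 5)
Your proposal is correct and takes essentially the same route as the paper: evaluate the zero mode of $\chi(w)$ on the continuum of relaxed Wakimoto \hwvs{}, note that the resulting scalars are a fixed non-zero multiple of $I_{u,v}(\lambda_{p;q},\Delta_p)$ on a Zariski-dense set of $(\lambda,\Delta)$, and conclude via the Frenkel--Zhu quotient description that $\zhu{\AdmMod{u}{v}} = \UEA\:\SLA{sl}{2}/\corrfn{I_{u,v}(h,T)}$. The injectivity step you flag as the main obstacle is genuine but standard, and your PBW leading-term argument for it is sound; the paper disposes of it even more directly by observing that the weight-zero image of $\chi$ lies in the centralizer of $h$, i.e.\ is a polynomial in $h$ and $T$, so its action on \rhwvs{} is literal polynomial evaluation and vanishing on a dense set of $(\lambda,\Delta)$ pins it down.
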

\begin{proof}
  As noted above, the ideal of $\UEA\:\SLA{sl}{2}$ by which one quotients to get the Zhu algebra $\zhu{\AdmMod{u}{v}}$ is generated by the image of the null field $\chi(z)$.  This image is a polynomial in \(e\), \(f\) and \(h\) of total degree at most \((u-1)v\). We have evaluated the action of the zero mode of \(\chi(z)\) on a continuum of \rhwvs{} and thus the image of \(\chi(z)\) in Zhu's algebra is, up to non-zero constant factors, equal to the polynomial \(I_{u,v}(h,T)\) of total degree $(u-1)v$.
\end{proof}

Before we can use this presentation of Zhu's algebra of \(\AdmMod{u}{v}\), we need to know a little more about the zeroes of the polynomial \(g_{u,v}\).
\begin{prop}\label{prop:zerosofg}
  For each $u \in \ZZ_{\ge 1}$, the polynomials $g_{u,v}(\lambda, \Delta_{r,0})$ evaluate to zero when $r = 1, 2, \ldots, u-1$ and $\lambda = r-1, r-3, \ldots, -r+3, -r+1$.
\end{prop}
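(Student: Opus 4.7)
The plan is to translate the claim into a combinatorial identity in $u$, $r$ and $m$, and then prove it by induction on $u$ using the recursion \eqref{eq:Recurse} specialised to $\Delta = \Delta_{r,0}$. The induction will bootstrap from a single base case, which I will settle by recognising a terminating hypergeometric sum and applying the Chu--Vandermonde identity.

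First, I observe that the $\lambda$-values in question are precisely $\lambda_{p_{r,0};-m}$ for $m = 0, 1, \ldots, r-1$. Substituting $p = p_{r,0}$ (so that $\alpha p = r-1$) and $q = -m$ into \eqref{eq:DefF}, and using the identity $\binom{-m-1}{\ell} = (-1)^\ell \binom{m+\ell}{\ell}$, yields the explicit expression
\begin{equation*}
g_{u,v}(r-1-2m, \Delta_{r,0}) = \sum_{\ell=0}^{u-1} (-1)^\ell \binom{m+\ell}{\ell} \binom{u-1+\ell}{u-1} \binom{u+r-1}{u-1-\ell}.
\end{equation*}
The dependence on $v$ (equivalently $t$) has dropped out entirely, so the proposition reduces to a purely combinatorial identity.

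Next, noting that $4 t \Delta_{r,0} = r^2 - 1$, specialising the recursion \eqref{eq:Recurse} at $\Delta = \Delta_{r,0}$ gives
\begin{equation*}
g_{u+2,v}(\lambda, \Delta_{r,0}) = \frac{(2u+1) \lambda}{(u+1)^2} g_{u+1,v}(\lambda, \Delta_{r,0}) + \frac{(u-r)(u+r)}{(u+1)^2} g_{u,v}(\lambda, \Delta_{r,0}).
\end{equation*}
The crucial feature is that the coefficient of $g_{u,v}$ vanishes when $u = r$, so $g_{r+2,v}(\lambda, \Delta_{r,0})$ is a scalar multiple of $\lambda \cdot g_{r+1,v}(\lambda, \Delta_{r,0})$. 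Consequently, once the base case $u = r+1$ is established, the case $u = r+2$ follows immediately, and a straightforward induction on $u$ via the recursion propagates the vanishing to all $u \ge r+1$.

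It therefore remains to handle the base case, namely to show that the explicit sum above vanishes at $u = r+1$ for all $0 \le m \le r-1$. Computing the ratio of consecutive summands, one recognises
\begin{equation*}
g_{r+1,v}(r-1-2m, \Delta_{r,0}) = \binom{2r}{r} \: {}_3F_2(m+1, \, r+1, \, -r; \, r+1, \, 1; \, 1).
\end{equation*}
The identical upper and lower parameters $r+1$ cancel, reducing this to $\binom{2r}{r} \: {}_2F_1(m+1, -r; 1; 1) = \binom{2r}{r} (-m)_r / r!$ by the Chu--Vandermonde summation. Since $0 \le m \le r-1$, the Pochhammer symbol $(-m)_r$ contains the zero factor at $j = m$, and the base case is settled.

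I expect the main obstacle to be spotting the cancellation of the matching $r+1$ parameters at $u = r+1$; without it, the base case is a bare alternating sum of four binomial coefficients with no obvious closed form. It is worth emphasising that a non-trivial hypergeometric identity really is required here: the recursion alone cannot reduce everything to a trivial base case because $g_{r,v}(\lambda, \Delta_{r,0})$ does not in general vanish at the claimed points, as already illustrated by $g_{2,v}(\lambda, \Delta_{2,0}) = \lambda$.
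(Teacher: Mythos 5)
Your argument is correct and essentially reproduces the paper's proof: you induct on $u$ via the recursion \eqref{eq:Recurse} specialised to $\Delta_{r,0}$, exploit the vanishing of the $g_{u,v}$-coefficient when $4t\Delta_{r,0}=(u-1)(u+1)$ (your $u=r$ step, the paper's $r=u$ case), and settle the one genuinely new case $u=r+1$ by evaluating the sum \eqref{eq:DefF} in closed form, where your Chu--Vandermonde summation of the terminating ${}_2F_1$ is the same computation the paper performs by rewriting $\binom{u+1+\ell}{\ell}\binom{2(u+1)}{u+1-\ell}=\binom{2(u+1)}{u+1}\binom{u+1}{u+1-\ell}$ and applying Vandermonde. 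The only differences are organisational (induction for each fixed $r$ rather than over all $r$ simultaneously) and notational (hypergeometric language for the base case).
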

\begin{proof}
This is trivial for $u=1$ as there are then no $r$ or $\lambda$ to check.  For $u=2$, \eqref{eq:SumFactorExs} gives $g_{2,v}(\lambda, \Delta) = \lambda$ and we need only check $r=1$ and $\lambda = 0$.  We may therefore assume, inductively, that the statement of the proposition is true for $g_{1,v}, g_{2,v}, \ldots, g_{u+1,v}$.  Then, the recursion relation \eqref{eq:Recurse} shows that
\begin{equation}
g_{u+2,v}(\lambda, \Delta_{r,0}) = \frac{(2u+1) \lambda}{(u+1)^2} g_{u+1,v}(\lambda, \Delta_{r,0}) - \frac{4t \Delta_{r,0} - (u-1)(u+1)}{(u+1)^2} g_{u,v}(\lambda, \Delta_{r,0})
\end{equation}
will vanish for all $r = 1, 2, \ldots, u-1$ and $\lambda = r-1, r-3, \ldots, -r+3, -r+1$, because $g_{u+1,v}$ and $g_{u,v}$ do.  Moreover, because $4t \Delta_{u,0} = (u-1)(u+1)$, $g_{u+2,v}$ also vanishes for $r=u$ and $\lambda = r-1, r-3, \ldots, -r+3, -r+1$.

The only remaining case is $r=u+1$.  Then, $4t \Delta_{u+1,0} = u(u+2)$, hence we may identify $\alpha p$ with $u$ or $-u-2$, hence $q = \frac{1}{2} (\lambda - \alpha p)$ with $\frac{1}{2} (\lambda - u)$ or $\frac{1}{2} (\lambda + u+2)$, respectively.  From \eqref{eq:DefF} and $\alpha p = u$, we now obtain
\begin{align}
g_{u+2,}(\lambda, \Delta_{u+1,0}) &= f_{u/\alpha; (\lambda - u)/2}(u+2) = \sum_{\ell=0}^{u+1} \binom{\frac{1}{2} (\lambda - u) - 1}{\ell} \binom{u+1+\ell}{\ell} \binom{2(u+1)}{u+1-\ell} \notag \\
&= \binom{2(u+1)}{u+1} \sum_{\ell=0}^{u+1} \binom{\frac{1}{2} (\lambda - u) - 1}{\ell} \binom{u+1}{u+1-\ell} = \binom{2(u+1)}{u+1} \binom{\frac{1}{2} (\lambda + u)}{u+1} \notag \\
&= \binom{2(u+1)}{u+1} \frac{(\lambda + u) (\lambda + u-2) \cdots (\lambda - u+2) (\lambda - u)}{2^{u+1} (u+1)!},
\end{align}
which clearly vanishes for $\lambda = u, u-2, \ldots, -u+2, -u$.  The result is the same for $\alpha p = -u-2$.
\end{proof}

We are now in a position to classify the simple weight modules over \(\zhu{\AdmMod{u}{v}}\).  Recall that \(\zhu{\AdmMod{u}{v}}\)-modules are automatically $\SLA{sl}{2}$-modules; by a weight module over \(\zhu{\AdmMod{u}{v}}\), we mean that it is a weight module over $\SLA{sl}{2}$.  The classification of simple \(\SLA{sl}{2}\) weight modules was summarised in \propref{prop:FinSL2WtMods}.  We also recall that the quadratic Casimir operator
\begin{equation}
Q=\frac{1}{2}h^2-ef-fe=2tT
\end{equation}
acts as a scalar multiple of the identity on any simple \(\SLA{sl}{2}\) weight module.

\begin{thm}\label{thm:ZhuClassification}
  The following $\SLA{sl}{2}$-modules provide a complete list of the inequivalent isomorphism classes of simple weight modules of \(\zhu{\AdmMod{u}{v}}\):
  \begin{itemize}
  \item The finite-dimensional highest and \lwms{} \(\FinIrr{\lambda_{r,0}}\), where \(1\le r\le u-1\).
  \item The infinite-dimensional \hwms{}
    \(\FinDisc{\lambda_{r,s}}\), where \(1\le r\le u-1\) and \(1\le s\le v-1\).
  \item The infinite-dimensional \lwms{} \(\finconjmod{\FinDisc{\lambda_{r,s}}}\), 
    where \(1\le r\le u-1\) and \(1\le s\le v-1\).
  \item The infinite-dimensional \rhwms{} \(\FinRel{\lambda,\Delta_{r,s}}\), where
    \((r,s)\in K(u,v)\) and \(4t\Delta_{r,s}\neq \mu(\mu+2)\) for all \(\mu\in \lambda+2\ZZ\).
  \end{itemize}
\end{thm}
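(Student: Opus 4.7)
By \thmref{thm:ComputeZhu}, a simple weight $\zhu{\AdmMod{u}{v}}$-module is exactly a simple weight $\SLA{sl}{2}$-module on which $I_{u,v}(h,T)$ acts as zero, where $T = Q/(2t)$ is a rescaling of the quadratic Casimir $Q$.  \propref{prop:FinSL2WtMods} lists all simple weight $\SLA{sl}{2}$-modules, so the task is to decide, for each one, when the operator $I_{u,v}(h,T)$ vanishes.  Since the Casimir acts by a scalar on any simple module, $T$ acts as some fixed $\Delta \in \CC$ and $I_{u,v}(h,T)$ is diagonal on weight vectors with eigenvalue $I_{u,v}(\mu,\Delta)$ at weight $\mu$.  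Annihilation is therefore equivalent to $I_{u,v}(\mu,\Delta)=0$ for every weight $\mu$ appearing in the module.  Recall that $I_{u,v}$ factorises as $\prod_{(r,s)\in K(u,v)}(\Delta-\Delta_{r,s}) \cdot g_{u,v}(\mu,\Delta)$.

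For the infinite-dimensional $\SLA{sl}{2}$-modules $\FinDisc{\lambda}$, $\finconjmod{\FinDisc{-\lambda}}$ and $\FinRel{\lambda;\Delta}$, the set of weights is infinite.  Since $g_{u,v}(\mu,\Delta)$ is a polynomial in $\mu$ of degree exactly $u-1$ (its leading coefficient is non-zero by a straightforward induction using the recursion \eqref{eq:Recurse}), it can vanish at only finitely many $\mu$.  Annihilation therefore forces the first factor to vanish, that is, $\Delta = \Delta_{r,s}$ for some $(r,s)\in K(u,v)$.  For highest weight modules, using $\Delta_\lambda = \lambda(\lambda+2)/(4t)$ and the identity $\lambda_{u-r,v-s} = -\lambda_{r,s}-2$ inherent in the definition of $K(u,v)$, this translates into $\lambda \in \{\lambda_{r,s} \st 1 \le r \le u-1,\ 1 \le s \le v-1\}$; coprimality of $u$ and $v$ together with $1 \le s \le v-1$ ensures $\lambda_{r,s} \notin \ZZ$, so these values lie in $\CC \setminus \ZZ_{\ge 0}$ as required by \propref{prop:FinSL2WtMods}, producing exactly the listed modules $\FinDisc{\lambda_{r,s}}$ (and their conjugates $\finconjmod{\FinDisc{\lambda_{r,s}}}$).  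For the relaxed family, there is no further constraint on $\lambda$ beyond the non-degeneracy condition $4t\Delta_{r,s} \neq \mu(\mu+2)$ for $\mu \in \lambda+2\ZZ$ already required for simplicity in \propref{prop:FinSL2WtMods}.

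For the finite-dimensional $\FinIrr{\lambda}$ with $\lambda \in \ZZ_{\ge 0}$, the weights are $\mu \in \{-\lambda,-\lambda+2,\ldots,\lambda\}$ and $\Delta = \Delta_\lambda$.  If $\lambda = \lambda_{r,0} = r-1$ for some $1 \le r \le u-1$, then \propref{prop:zerosofg} directly yields $g_{u,v}(\mu, \Delta_{r,0}) = 0$ for every such $\mu$, so $I_{u,v}$ annihilates $\FinIrr{\lambda_{r,0}}$.  Conversely, for $\lambda \ge u-1$, one checks that $\Delta_\lambda \neq \Delta_{r,s}$ for all $(r,s) \in K(u,v)$ (as the corresponding $\lambda_{r,s}$ are non-integers), so the first factor of $I_{u,v}$ does not vanish; since $g_{u,v}(\cdot,\Delta_\lambda)$ has degree $u-1$ in its first argument, it can vanish at at most $u-1$ values of $\mu$, whereas there are $\lambda+1 \ge u$ weights to annihilate.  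Hence no $\FinIrr{\lambda}$ with $\lambda \ge u-1$ appears.  Combining the three cases produces exactly the list stated in the theorem.

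The main obstacle is the bookkeeping in the third paragraph: verifying both that $g_{u,v}$ retains its full degree $u-1$ in $\lambda$ (which follows from an easy induction on the leading-coefficient recursion extracted from \eqref{eq:Recurse}, as the multiplier $(2u+1)/(u+1)^2$ never vanishes) and that $\lambda_{r,s} \notin \ZZ$ whenever $1 \le s \le v-1$ (a direct consequence of $\gcd(u,v)=1$).  Everything else is a mechanical application of \thmref{thm:ComputeZhu}, \propref{prop:FinSL2WtMods} and \propref{prop:zerosofg}, with the dichotomy between finite and infinite weight support controlling which factor of $I_{u,v}$ must be responsible for the annihilation.
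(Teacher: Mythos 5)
Your proposal is correct and follows essentially the same route as the paper's proof: both reduce the problem via \thmref{thm:ComputeZhu} to deciding when $I_{u,v}(h,T)$ annihilates a simple $\SLA{sl}{2}$ weight module with scalar Casimir, handle the finite-dimensional case with \propref{prop:zerosofg} together with the fact that $g_{u,v}(\lambda,\Delta)$ has $\lambda$-degree $u-1$ (so at most $u-1$ zeroes), and handle the infinite-dimensional case by noting that infinitely many distinct weights force the $\lambda$-independent factor to vanish, i.e.\ $T$ must act by some $\Delta_{r,s}$ with $(r,s)\in K(u,v)$. Your extra checks (non-vanishing leading coefficient from \eqref{eq:Recurse}, $\lambda_{r,s}\notin\ZZ$ from coprimality) are sound refinements of points the paper asserts more briefly, not a different argument.
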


\begin{proof}
  We first consider a simple finite-dimensional \(\zhu{\AdmMod{u}{v}}\)-module \(M\), which must therefore also   be a finite-dimensional \(\SLA{sl}{2}\)-module.  As the quadratic Casimir takes the value \(\frac{1}{2}(r^2-1)=2t \Delta_{r,0}\) on the \(r\)-dimensional simple \(\SLA{sl}{2}\) module, it follows that \(g_{u,v}(h,T)\) must act trivially on \(M\) because the remaining \(\lambda\)-independent factors of \(I_{u,v}(\lambda,\Delta)\) do not have \(\Delta_{r,0}\) as a root for any positive integer \(r\). By \propref{prop:zerosofg}, \(g_{u,v}(m,\Delta_{r,0})=0\) if \(1\le r\le u-1\) and \(m=r-1,r-3,\dots, -r+1\). Conversely if \(r\ge u\), then \(g_{u,v}(m,\Delta_{r,0})\neq0\) for some \(m=r-1,r-3,\dots, -1+r\), because the \(\lambda\)-degree of \(g_{u,v}(\lambda,\Delta_{r,0})\) is \(u-1\), so there cannot be more than \(u-1\) zeroes. Thus, \(M\) must be isomorphic to one of the \(\FinIrr{\lambda_{r,0}}\) for some \(1\le r\le u-1\).

  Next, we consider a simple infinite-dimensional \(\zhu{\AdmMod{u}{v}}\)-module \(M\), which must therefore also   be an infinite-dimensional \(\SLA{sl}{2}\) weight module. Because there must be an infinite number of weight vectors in \(M\) with distinct \(\SLA{sl}{2}\)-weights, \(g_{u,v}(h,T)\) cannot vanish identically on \(M\).  In order for \(I_{u,v}(h,T)\) to then vanish, \(T\) must act as multiplication by \(\Delta_{r,s}\) for some \((r,s)\in K(u,v)\). Referring to \propref{prop:FinSL2WtMods}, it follows that the last three cases of \thmref{thm:ZhuClassification} exhaust all the possible isomorphism classes for \(M\).
\end{proof}

\noindent The correspondence (\thmref{thm:ZhuSimples}) between simple modules of a \voa{} and its Zhu algebra then proves the following classification result:
\begin{thm} \label{thm:IrrRelMods}
  The following $\AKMA{sl}{2}$-modules provide a complete list of the inequivalent isomorphism classes of simple \rhwms{} of \(\AdmMod{u}{v}\):
  \begin{itemize}
  \item The \hwms{} \(\Irr{\lambda_{r,0}}\), where \(1\le r\le u-1\).
  \item The \hwms{} \(\Disc{\lambda_{r,s}}\), where \(1\le r\le u-1\) and \(1\le s\le v-1\).
  \item The conjugates \(\conjmod{\Disc{\lambda_{r,s}}}\), where \(1\le r\le u-1\) and \(1\le s\le v-1\).
  \item The \rhwms{} \(\Rel{\lambda,\Delta_{r,s}}\), where \((r,s)\in K(u,v)\) and \(4t\Delta_{r,s}\neq \mu(\mu+2)\) for all \(\mu\in \lambda+2\ZZ\).
  \end{itemize}
\end{thm}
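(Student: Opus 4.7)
The plan is to invoke the Zhu correspondence and transfer the classification of simple weight $\zhu{\AdmMod{u}{v}}$-modules, already established in \thmref{thm:ZhuClassification}, to a classification of simple $\AdmMod{u}{v}$-modules in the relaxed category. The key tool, discussed in \appref{sec:zhu}, is the bijection (\thmref{thm:ZhuSimples}) between isomorphism classes of simple modules of a \voa{} $\VOA{V}$ in category $\categ{R}$ and isomorphism classes of simple weight modules of its Zhu algebra $\zhu{\VOA{V}}$. Under this bijection, a simple $\zhu{\VOA{V}}$-module $\Mod{M}$ corresponds to the unique simple quotient of the relaxed Verma module $\RelVer{\Mod{M}} = \UEA \alg{g} \otimes_{\UEA \alg{g}_{\ge}} \Mod{M}$ induced from $\Mod{M}$, regarded as a module over $\alg{g}_0$ on which $\alg{g}_>$ acts trivially.

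With the bijection in hand, the proof reduces to matching the four families of simple weight $\zhu{\AdmMod{u}{v}}$-modules listed in \thmref{thm:ZhuClassification} with the corresponding simple $\AKMA{sl}{2}$-modules in the relaxed category. This matching was already carried out explicitly in \secref{sec:RelEx}: inducing $\FinIrr{\lambda_{r,0}}$ yields the quotient $\Ver{\lambda_{r,0}}/\Ver{-\lambda_{r,0}-2}$, whose simple quotient is $\Irr{\lambda_{r,0}}$; inducing $\FinDisc{\lambda_{r,s}}$ yields the Verma module $\Ver{\lambda_{r,s}}$ with simple quotient $\Disc{\lambda_{r,s}}$; inducing $\finconjmod{\FinDisc{\lambda_{r,s}}}$ yields $\conjmod{\Ver{\lambda_{r,s}}}$ with simple quotient $\conjmod{\Disc{\lambda_{r,s}}}$; and inducing $\FinRel{\lambda;\Delta_{r,s}}$ (in the non-degenerate range $4t\Delta_{r,s}\neq\mu(\mu+2)$ for all $\mu\in\lambda+2\ZZ$) yields the relaxed Verma module $\RelVer{\lambda;\Delta_{r,s}}$, whose simple quotient is $\Rel{\lambda;\Delta_{r,s}}$. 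Substituting the allowed ranges of the parameters $(r,s)$ determined by \thmref{thm:ZhuClassification} produces exactly the four families in the statement.

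The only mild subtlety is verifying that no two distinct simple $\zhu{\AdmMod{u}{v}}$-modules in the list induce to the same simple $\AdmMod{u}{v}$-module. Since the Zhu correspondence is a bijection on isomorphism classes, inequivalence on the Zhu side, which is already part of \thmref{thm:ZhuClassification}, forces inequivalence on the $\AdmMod{u}{v}$ side; in particular, the identification $\Rel{\lambda;\Delta_{r,s}} \cong \Rel{\lambda+2;\Delta_{r,s}}$ is already built into the parametrisation $\FinRel{\lambda;\Delta} \cong \FinRel{\lambda+2;\Delta}$ of \propref{prop:FinSL2WtMods}, and the symmetry $\Delta_{r,s}=\Delta_{u-r,v-s}$ is reflected in our use of the quotient set $K(u,v)$.

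The main obstacle, as I see it, is not the matching step itself but rather ensuring that the Zhu correspondence theorem (\thmref{thm:ZhuSimples}) applies here in exactly the form needed, namely that it picks out simple modules in the relaxed category $\categ{R}$ (not just those in category $\categ{O}$). Granted this, which is the content of the material in \appref{sec:zhu}, the theorem follows by direct translation of \thmref{thm:ZhuClassification} through the induction-and-take-simple-quotient functor and the explicit identifications catalogued in \secref{sec:RelEx}.
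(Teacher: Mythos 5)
Your proposal is correct and follows essentially the same route as the paper, which deduces the theorem directly from the Zhu correspondence (\thmref{thm:ZhuSimples}) applied to the classification of simple weight $\zhu{\AdmMod{u}{v}}$-modules in \thmref{thm:ZhuClassification}, with the matching of induced modules already catalogued in \secref{sec:RelEx}. Your remarks on the parametrisation (the shift $\lambda \mapsto \lambda+2$ and the quotient set $K(u,v)$) and on the correspondence operating at the level of $\NN$-graded (relaxed) modules rather than category $\categ{O}$ are exactly the points the paper relies on implicitly via \appref{sec:zhu}.
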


\noindent These are therefore the simple modules of the \voa{} $\AdmMod{u}{v}$ which belong to the category $\categ{R}$ that was introduced in \secref{sec:Relaxed}.

Zhu's correspondence also extends to non-simple $\zhu{\AdmMod{u}{v}}$-modules.  In particular, $\SLA{sl}{2}$ admits reducible, but indecomposable, modules similar to the $\FinRel{\lambda; \Delta}$ of \propref{prop:FinSL2WtMods} whenever
\begin{equation} \label{eq:IndecCond}
4t \Delta = \mu(\mu+2) \quad \text{for some} \quad \mu \in \lambda+2\ZZ.
\end{equation}
For $\Delta = \Delta_{r,s}$, where $(r,s) \in K(u,v)$, the only solutions are $\mu = r-1-ts = \lambda_{r,s}$ and $\mu = -r-1+ts = \lambda_{u-r,v-s}$.  As $0<s<v$, we find that $\lambda_{r,s} - \lambda_{u-r,v-s} \notin 2 \ZZ$, concluding that an indecomposable with $\Delta = \Delta_{r,s}$ may have at most one weight $\mu$ satisfying \eqref{eq:IndecCond}.  Thus, there are \cite{MazLec10} precisely two reducible, but indecomposable, $\SLA{sl}{2}$-modules $\FinRel{\lambda_{r,s}; \Delta_{r,s}}^+$ and $\FinRel{\lambda_{r,s}; \Delta_{r,s}}^-$ for each $1 \le r \le u-1$ and $1 \le s \le v-1$.  They are determined (up to isomorphism) by the following non-split short exact sequences:
\begin{equation}
\dses{\FinDisc{\lambda_{r,s}}}{}{\FinRel{\lambda_{r,s},\Delta_{r,s}}^+}{}{\finconjmod{\FinDisc{\lambda_{u-r,v-s}}}}, \qquad 
\dses{\finconjmod{\FinDisc{\lambda_{u-r,v-s}}}}{}{\FinRel{\lambda_{r,s},\Delta_{r,s}}^-}{}{\FinDisc{\lambda_{r,s}}}.
\end{equation}
Applying Zhu's construction now leads to the following result:

\begin{thm} \label{thm:IndecRelMods}
For each $1 \le r \le u-1$ and $1 \le s \le v-1$, there exist two reducible, but indecomposable, $\AdmMod{u}{v}$-modules $\Rel{\lambda_{r,s}; \Delta_{r,s}}^+$ and $\Rel{\lambda_{r,s}; \Delta_{r,s}}^-$, obtained by inducing $\FinRel{\lambda_{r,s}; \Delta_{r,s}}^+$ and $\FinRel{\lambda_{r,s}; \Delta_{r,s}}^-$ to level $k$ $\AKMA{sl}{2}$-modules and quotienting by the sum of all the submodules that trivially intersect the space of conformal weight $\Delta_{r,s}$.  They are determined (up to isomorphism) by the following non-split short exact sequences:
\begin{equation}
\dses{\Disc{\lambda_{r,s}}}{}{\Rel{\lambda_{r,s},\Delta_{r,s}}^+}{}{\conjmod{\Disc{\lambda_{u-r,v-s}}}}, \qquad 
\dses{\conjmod{\Disc{\lambda_{u-r,v-s}}}}{}{\Rel{\lambda_{r,s},\Delta_{r,s}}^-}{}{\Disc{\lambda_{r,s}}}.
\end{equation}
\end{thm}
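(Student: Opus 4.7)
The plan is to apply Zhu's correspondence (\appref{sec:zhu}) to the indecomposable $\SLA{sl}{2}$-modules $\FinRel{\lambda_{r,s},\Delta_{r,s}}^{\pm}$ described just above the statement, realising them as the tops of $\AdmMod{u}{v}$-modules and then extracting the asserted short exact sequences at the affine level. First, I would confirm that $\FinRel{\lambda_{r,s},\Delta_{r,s}}^{\pm}$ are modules over $\zhu{\AdmMod{u}{v}}$ as presented in \thmref{thm:ComputeZhu}. Each of these indecomposables has a single central character: both composition factors $\FinDisc{\lambda_{r,s}}$ and $\finconjmod{\FinDisc{\lambda_{u-r,v-s}}}$ have quadratic Casimir eigenvalue $\frac{1}{2}\lambda_{r,s}(\lambda_{r,s}+2) = 2t \Delta_{r,s}$ (using $\lambda_{u-r,v-s} = -\lambda_{r,s}-2$), so $T$ acts throughout as multiplication by $\Delta_{r,s}$. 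Since $(r,s)\in K(u,v)$, the factor $(T-\Delta_{r,s})$ appearing in $I_{u,v}(h,T) = \prod_{(r',s')\in K(u,v)}(T-\Delta_{r',s'})\cdot g_{u,v}(h,T)$ forces $I_{u,v}(h,T)$ to vanish on these modules.

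Second, I would construct $\Rel{\lambda_{r,s},\Delta_{r,s}}^{\pm}$ by inducing $\FinRel{\lambda_{r,s},\Delta_{r,s}}^{\pm}$ (viewed as an $\alg{g}_0$-module with $L_0$ acting as $\Delta_{r,s}$ and $\alg{g}_>$ trivially) up to a level-$k$ $\AKMA{sl}{2}$-module, and quotienting by the sum of all submodules that trivially intersect the conformal weight $\Delta_{r,s}$ subspace; semisimplicity of $L_0$ guarantees that this sum itself trivially intersects the top, so $\FinRel{\lambda_{r,s},\Delta_{r,s}}^{\pm}$ remains as the top of the quotient. The vacuum singular vector of $\UVOA{k}$ acts trivially on this top by the first step and generates within the induced module a submodule concentrated in strictly higher conformal weights, hence trivially intersecting the top and therefore absorbed into the quotient; this confirms that $\Rel{\lambda_{r,s},\Delta_{r,s}}^{\pm}$ is an $\AdmMod{u}{v}$-module. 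For the $+$ sequence, let $v_{r,s}$ be a \hwv{} of the $\FinDisc{\lambda_{r,s}}$ submodule of the top. Since $v_{r,s}$ is annihilated by $e_0$ and by $\alg{g}_>$, it generates an $\AKMA{sl}{2}$-\hwm{} that is also an $\AdmMod{u}{v}$-module; by \corref{cor:ClassHWMods} and the assumption $s\ge 1$, this must be the simple module $\Disc{\lambda_{r,s}}$. The corresponding quotient $\Rel{\lambda_{r,s},\Delta_{r,s}}^+/\Disc{\lambda_{r,s}}$ has top $\FinRel{\lambda_{r,s},\Delta_{r,s}}^+/\FinDisc{\lambda_{r,s}} \cong \finconjmod{\FinDisc{\lambda_{u-r,v-s}}}$, is generated by a conjugate \hwv{}, and an analogous argument in the conjugated \hw{} theory identifies it with $\conjmod{\Disc{\lambda_{u-r,v-s}}}$. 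The $-$ sequence is symmetric, starting from the $\finconjmod{\FinDisc{\lambda_{u-r,v-s}}}$ submodule of the top. Non-splitness in both cases is inherited: a splitting at the affine level would restrict to a splitting of the top as an $\SLA{sl}{2}$-module, contradicting the definition of $\FinRel{\lambda_{r,s},\Delta_{r,s}}^{\pm}$.

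The main obstacle is verifying that the quotient $\Rel{\lambda_{r,s},\Delta_{r,s}}^+/\Disc{\lambda_{r,s}}$ is genuinely isomorphic to the simple module $\conjmod{\Disc{\lambda_{u-r,v-s}}}$, rather than merely having $\finconjmod{\FinDisc{\lambda_{u-r,v-s}}}$ as its top with additional composition factors hidden at higher conformal weights. This should reduce to the minimality inherent in the Zhu construction: any composition factor whose top is disjoint from $\FinRel{\lambda_{r,s},\Delta_{r,s}}^+$ would lift to a non-zero submodule of the induced module trivially intersecting its top, and so would have been eliminated in the quotient. Making this precise requires a careful filtration argument along the conformal grading, combined with the uniqueness of simple $\AdmMod{u}{v}$-modules with a prescribed simple top (\thmref{thm:IrrRelMods} via Zhu's correspondence).
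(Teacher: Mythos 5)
Your overall route is the paper's own: the proof given there consists of little more than applying Zhu's construction (\appref{sec:zhu}) to the indecomposable $\zhu{\AdmMod{u}{v}}$-modules $\FinRel{\lambda_{r,s};\Delta_{r,s}}^{\pm}$, exactly as in your first two steps, and most of your elaboration is sound: the vanishing of $I_{u,v}(h,T)$ on these modules, the identification of the submodule generated by $v_{r,s}$ via \corref{cor:ClassHWMods}, the absorption of the action of the maximal ideal into the quotient, and the inheritance of non-splitness from the $\SLA{sl}{2}$ level.

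The genuine problem is your proposed resolution of what you call the main obstacle. The claim that an extra composition factor of $P=\Rel{\lambda_{r,s},\Delta_{r,s}}^{+}/\Disc{\lambda_{r,s}}$ ``would lift to a non-zero submodule of the induced module trivially intersecting its top'' is false: if $P'$ is any non-zero submodule of $P$, its preimage in the induced module contains the full preimage of $\Disc{\lambda_{r,s}}$ and therefore meets the conformal weight $\Delta_{r,s}$ space in at least $\FinDisc{\lambda_{r,s}}\neq 0$, so nothing of this kind is eliminated when one quotients by the submodules avoiding the top, and no filtration argument along these lines will close the gap (subquotients simply do not lift to submodules). The correct argument is simpler and uses only tools you have already invoked: $\Rel{\lambda_{r,s},\Delta_{r,s}}^{+}$ is generated by its weight-$\Delta_{r,s}$ space, and the weight-$\Delta_{r,s}$ space of the submodule generated by $v_{r,s}$ is $\UEA\:\SLA{sl}{2}\,v_{r,s}=\FinDisc{\lambda_{r,s}}$, so $P$ is generated by its own weight-$\Delta_{r,s}$ space, which is the \emph{simple} lowest weight $\SLA{sl}{2}$-module $\finconjmod{\FinDisc{\lambda_{u-r,v-s}}}$; hence $P$ is generated by a single vector annihilated by $f_0$ and $\alg{g}_>$. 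Twisting by the conjugation automorphism $\conjaut$, which preserves the maximal ideal of $\UVOA{k}$ and therefore sends $\AdmMod{u}{v}$-modules to $\AdmMod{u}{v}$-modules, turns $P$ into a \hwm{} over $\AdmMod{u}{v}$ of weight $\lambda_{u-r,v-s}$, and \corref{cor:ClassHWMods} then forces $\conjmod{P}\cong\Disc{\lambda_{u-r,v-s}}$, i.e.\ $P\cong\conjmod{\Disc{\lambda_{u-r,v-s}}}$, with no hidden factors possible; the $-$ case is handled symmetrically. Two smaller corrections: the sum of all submodules trivially intersecting the top again trivially intersects the top simply because each such submodule lies in the strictly positive grades (no appeal to semisimplicity of $L_0$ is needed or relevant), and the assertion that $T$ acts on $\FinRel{\lambda_{r,s};\Delta_{r,s}}^{\pm}$ as the scalar $\Delta_{r,s}$ requires the observation that the top and socle are non-isomorphic, so the Casimir cannot act with a nilpotent part.
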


A theorem of Kac and Wakimoto \cite[Prop.~1]{KacMod88} asserts that the $\AdmMod{u}{v}$-modules in category $\categ{O}$ are all semisimple.  The category $\categ{O}$ $\AdmMod{u}{v}$-modules therefore consist of finite direct sums of the \hwms{} of \thmref{thm:IrrRelMods}.  By way of contrast, a corollary of \thmref{thm:IndecRelMods} is that the $\AdmMod{u}{v}$-modules of category $\categ{R}$ need not be semisimple (when $v \neq 1$).  We remark that we have not excluded the possibility that there exist \rhwms{} over $\AdmMod{u}{v}$ that extend the $\Rel{\lambda, \Delta_{r,s}}$, or the $\Rel{\lambda_{r,s}, \Delta_{r,s}}^{\pm}$, non-trivially; this would seem to require more information about the submodule structure of the relaxed Verma modules than is currently available, see \cite{FeiEqu98,SemEmb97}.  However, the \hw{} result given in \corref{cor:ClassHWMods} and the analogous results for the Virasoro minimal models suggest the following conjecture:

\begin{conj}
The $\AdmMod{u}{v}$-modules of \thmref{thm:IrrRelMods} and \thmref{thm:IndecRelMods} exhaust the indecomposable $\AdmMod{u}{v}$-modules of category $\categ{R}$.
\end{conj}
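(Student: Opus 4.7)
The plan is to reduce the conjecture to a computation of $\operatorname{Ext}^1$ groups between the simple $\AdmMod{u}{v}$-modules of \thmref{thm:IrrRelMods}, and then to evaluate those Ext groups using the submodule structure of relaxed Verma modules. First, I would observe that any indecomposable $M \in \categ{R}$ has a finite composition series whose factors are drawn from \thmref{thm:IrrRelMods}: the $L_0$-grading of $M$ is bounded below, local nilpotency of $\alg{g}_>$ forces each minimal conformal-weight subspace to be finite-dimensional, and \thmref{thm:ComputeZhu} forces the Casimir $Q = 2tT$ to have only finitely many eigenvalues on $M_{\min}$. Since distinct Casimir eigenvalues $\Delta_{r,s}$ decompose $M$ into a direct sum of blocks, it suffices to handle each block separately.

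Within a single block, I would combine \thmref{thm:ComputeZhu} with Mazorchuk's classification \cite{MazLec10} of indecomposable weight $\SLA{sl}{2}$-modules. The vanishing of the ideal $\corrfn{I_{u,v}(h,T)}$ in $\zhu{\AdmMod{u}{v}}$ severely constrains which weight-module patterns can arise as $M_{\min}$, and the arithmetic observation $\lambda_{r,s} - \lambda_{u-r,v-s} \notin 2\ZZ$ for $0 < s < v$ keeps the indecomposable chains within a single $\lambda + 2\ZZ$-coset short. The expected conclusion is that the indecomposable weight $\zhu{\AdmMod{u}{v}}$-modules are exactly the simples of \thmref{thm:ZhuClassification} together with $\FinRel{\lambda_{r,s}; \Delta_{r,s}}^{\pm}$; inducing each such indecomposable and quotienting by the sum of submodules trivially intersecting $M_{\min}$ produces a distinguished lift, and a direct comparison should identify these lifts with the modules of \thmref{thm:IrrRelMods} and \thmref{thm:IndecRelMods}.

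The main obstacle is controlling extensions beyond these, which amounts to computing $\operatorname{Ext}^1_{\categ{R}}$ between all pairs of simples in \thmref{thm:IrrRelMods}. The crucial cases are: self-extensions of the generic $\Rel{\lambda; \Delta_{r,s}}$, which must split; extensions between $\Disc{\lambda_{r,s}}$ and $\conjmod{\Disc{\lambda_{u-r,v-s}}}$, which must be one-dimensional and realised exactly by the two modules $\Rel{\lambda_{r,s}, \Delta_{r,s}}^{\pm}$; and extensions involving the finite-dimensional $\Irr{\lambda_{r,0}}$, which must vanish. Only partial information on the relevant submodule structure of $\RelVer{\lambda; \Delta}$ is available in \cite{FeiEqu98, SemEmb97}, and what remains is essentially a Feigin-Fuchs-Malikov style analysis of embedding diagrams in the relaxed (parabolic) setting, or an affine Kazhdan-Lusztig argument adapted to admissible levels. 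With these Ext computations in hand, the block-by-block assembly of the classification should be routine, appealing to the character, $L_0$-grading, and conjugation data collected earlier in the paper.
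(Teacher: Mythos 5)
The statement you are addressing is stated in the paper as a \emph{conjecture}: the authors give no proof and explicitly remark, immediately before it, that excluding further extensions among the $\Rel{\lambda;\Delta_{r,s}}$ and $\Rel{\lambda_{r,s};\Delta_{r,s}}^{\pm}$ ``would seem to require more information about the submodule structure of the relaxed Verma modules than is currently available'' \cite{FeiEqu98,SemEmb97}. Your proposal does not close that gap --- it relocates it. The decisive step, the computation of $\operatorname{Ext}^1$ between the simples of \thmref{thm:IrrRelMods} (self-extensions of $\Rel{\lambda;\Delta_{r,s}}$ split, the $\Disc{}$--$\conjmod{\Disc{}}$ extensions are exactly the two known ones, extensions involving $\Irr{\lambda_{r,0}}$ vanish), is merely asserted as what ``must'' happen and then deferred to a Feigin--Fuchs--Malikov embedding-diagram analysis or an affine Kazhdan--Lusztig argument that you acknowledge is not available. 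That is precisely the missing ingredient the authors identify, so what you have is a plausible strategy, not a proof.

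Two further steps in your reduction are also flawed as stated. First, the claim that local nilpotency of $\alg{g}_>$ forces the minimal conformal-weight subspace of an indecomposable $\Mod{M} \in \categ{R}$ to be finite-dimensional is false: the relaxed simples $\Rel{\lambda;\Delta_{r,s}}$ themselves have infinitely many \rhwvs{} at minimal conformal weight (only the individual $(h_0,L_0)$-weight spaces are finite-dimensional), so your finiteness/block argument needs a different justification. Second, the passage from indecomposable weight $\zhu{\AdmMod{u}{v}}$-modules to indecomposable category-$\categ{R}$ modules is not automatic: Zhu's correspondence (\thmref{thm:ZhuSimples} and its non-simple extension) only sees the space of ground states, and an extension of $\AdmMod{u}{v}$-modules can be invisible at that level when the ground states of a composition factor are not ground states of the total module. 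Controlling such extensions is again a question about submodule structure of relaxed Verma (or Wakimoto) modules, i.e.\ exactly the open problem; \thmref{thm:L0ActsSemisimply} rules out non-semisimple $L_0$ but not these gluings. Until those Ext groups are actually computed, the conjecture remains a conjecture.
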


We close by demonstrating that the non-semisimplicity of \rhwms{} over $\AdmMod{u}{v}$ does not imply that the Virasoro mode $L_0$ acts non-semisimply, a fact that is of interest to \lcft{} studies.  This result requires the finite-dimensionality of the weight spaces of the category $\categ{R}$ modules, discussed in \secref{sec:Relaxed}.
\begin{thm} \label{thm:L0ActsSemisimply}
  \leavevmode
  \begin{enumerate}
  \item The image of the energy momentum tensor $T$ in \(\zhu{\AdmMod{u}{v}} \cong \UEA\:\SLA{sl}{2} / \corrfn{I_{u,v}(h,T)}\) acts semisimply on every weight module of \(\zhu{\AdmMod{u}{v}}\) with finite-dimensional weight spaces.
  \item The Virasoro zero mode \(L_0\) acts semisimply on every \rhwm{} of \(\AdmMod{u}{v}\).
  \end{enumerate}
\end{thm}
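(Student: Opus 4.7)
For part (1), my plan is to use the centrality of $T$ together with the classification of simple weight modules of $\zhu{\AdmMod{u}{v}}$. Since $T = Q/(2t)$ is the image of the quadratic Casimir $Q = \tfrac{1}{2}h^2 - ef - fe$, it is central in $\UEA\SLA{sl}{2}$ and hence commutes with the $\SLA{sl}{2}$-action on any module. The generalised $T$-eigenspace decomposition $M = \bigoplus_\Delta M^{(\Delta)}$ therefore splits $M$ into $\SLA{sl}{2}$-submodules, reducing the task to showing $T - \Delta = 0$ on each $M^{(\Delta)}$. Arguing by contradiction, assume there exist $v \in M_\mu^{(\Delta)}$ with $v_0 := (T-\Delta)v \neq 0$ and $(T-\Delta) v_0 = 0$; centrality of $T$ makes $N := \UEA\SLA{sl}{2} \cdot v_0$ into an $\SLA{sl}{2}$-submodule on which $T$ acts as the scalar $\Delta$, and the classification \thmref{thm:ZhuClassification} then forces $\Delta$ into the finite set $\{\Delta_{r,s} : (r,s) \in K(u,v)\} \cup \{\Delta_{r,0} : 1 \leq r \leq u-1\}$.

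The Zhu relation $I_{u,v}(\mu, T) v = 0$, Taylor-expanded about $T = \Delta$ and using $(T-\Delta)^2 v = 0$, collapses to
\[
I_{u,v}(\mu, \Delta)\, v + \partial_X I_{u,v}(\mu, \Delta)\, v_0 = 0,
\]
forcing $\Delta$ to be at least a double root of $I_{u,v}(\mu, X)$. The key is to propagate this Jordan pair through the $\SLA{sl}{2}$-action. Since $T$ commutes with $e_0$ and $f_0$, the same argument applied to $e_0^k v$ and $f_0^k v$ shows $\Delta$ must be a double root of $I_{u,v}(\mu', X)$ at every weight $\mu'$ in the support of $N$ for which the propagated image is nonzero. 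Factorising $I_{u,v}(\mu', X) = \prod_{(r,s) \in K(u,v)}(X - \Delta_{r,s}) \cdot g_{u,v}(\mu', X)$ and noting the distinctness of the $\Delta_{r,s}$ from each other (a consequence of $\gcd(u,v)=1$) and from the $\Delta_{r,0}$ reduces the double-root condition to the vanishing of $g_{u,v}(\mu', \Delta)$ and $\partial_X g_{u,v}(\mu', \Delta)$ at enough weights $\mu'$. When $N$ is infinite-dimensional, the non-triviality of $g_{u,v}(\cdot, \Delta)$ as a polynomial in its first argument (visible from the recursion underlying \propref{prop:zerosofg}) ensures some weight $\mu'$ in the support of $N$ violates this constraint, giving a contradiction. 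When $N$ is finite-dimensional, necessarily of the form $\FinIrr{\lambda_{r,0}}$ with $\Delta = \Delta_{r,0}$, the submodule is small enough for a direct Casimir computation: in the trivial case $r = 1$, one uses the relation at weights $\pm 2$ (where $\Delta=0$ is not a root of $I_{u,v}(\pm 2, X)$) to deduce $e_0 v = f_0 v = 0$, whence $Q v = 0$ forces $v_0 = 0$; for $r \geq 2$, the explicit form of $\partial_X g_{u,v}(\mu, X)$ at $X = \Delta_{r,0}$ is nonzero at at least one weight $\mu$ in the support of $N$, again violating the double-root condition. I expect this case analysis, particularly the handling of the trivial-submodule case via the Casimir, to be the main technical obstacle.

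For part (2), let $M$ be an \rhwm{} of $\AdmMod{u}{v}$ generated by an \rhwv{} $v_0$ of conformal weight $\Delta_0$. By definition of a conformal grading, $\alg{g}_0$ is precisely the centraliser of $L_0$ in $\alg{g}$, so $L_0$ is central in $\UEA\alg{g}_0$ and therefore acts as the scalar $\Delta_0$ on the entire submodule $\UEA\alg{g}_0 \cdot v_0$. The PBW theorem, combined with the identity $\UEA\alg{g}_> \cdot v_0 = \CC v_0$ (a consequence of $\alg{g}_> v_0 = 0$), yields $M = \UEA\alg{g}_< \cdot (\UEA\alg{g}_0 v_0)$. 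For any PBW monomial $x_{n_1}\cdots x_{n_k} \in \UEA\alg{g}_<$ (with each $n_j < 0$) and any $w \in \UEA\alg{g}_0 v_0$, iterating $[L_0, x_{n_j}] = -n_j x_{n_j}$ gives
\[
L_0\bigl(x_{n_1}\cdots x_{n_k} w\bigr) = \bigl(\Delta_0 - \textstyle\sum_j n_j\bigr)\, x_{n_1}\cdots x_{n_k} w,
\]
so every such vector is a true $L_0$-eigenvector. Hence $M$ is spanned by $L_0$-eigenvectors, and $L_0$ acts semisimply.
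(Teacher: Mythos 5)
Your part (2) is correct and is in fact simpler than the paper's argument: since a \rhwv{} is by definition a genuine eigenvector of $\alg{h} \ni L_0$, since $L_0$ commutes with $\alg{g}_0$, and since $\func{\ad}{L_0}$ acts with positive integer eigenvalues on $\alg{g}_<$, any module generated by a single \rhwv{} is spanned by honest $L_0$-eigenvectors, for any conformally graded algebra and with no input from admissibility. The paper instead deduces (2) from (1) through the action of $T$ on the space of ground states; as a proof of the literal statement your route is valid and independent of (1), though it does not recover the stronger information that part (1) carries (no Jordan blocks of $T$ on the Zhu module of \emph{any} weight module with finite-dimensional weight spaces, which is what rules out logarithmic couplings among ground states).

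For part (1) you have the paper's central mechanism, namely that a Jordan pair $(v, v_0)$ together with the relation $I_{u,v}(h,T)=0$ forces $\Delta$ to be a zero of $I_{u,v}(\mu, X)$ of order at least two, while $g_{u,v}(\,\cdot\,,\Delta)$ is a nonzero polynomial of degree $u-1$. But your endgame has a genuine gap in the finite-dimensional case. When $N = \UEA\:\SLA{sl}{2}\cdot v_0$ is finite-dimensional, so $\Delta = \Delta_{r,0}$, your argument needs $\pd_X g_{u,v}(\mu, \Delta_{r,0}) \neq 0$ at some weight $\mu$ of $\FinIrr{\lambda_{r,0}}$ for $r \ge 2$, and $g_{u,v}(\pm 2, 0) \neq 0$ for $r=1$; neither claim is proved by you or anywhere in the paper, and the derivative statement in particular does not follow from \propref{prop:zerosofg} or the recursion \eqref{eq:Recurse} without genuine additional work (the $r=1$ claim can be extracted from the recursion at $\Delta=0$, but the $r\ge2$ claim is an open step, exactly the obstacle you flag). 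A smaller issue: you propagate the Jordan pair only with $e_0^k$ and $f_0^k$, which constrains just those weights where $e_0^k v_0$ or $f_0^k v_0$ survives rather than all of the support of $N$; since $T$ is central, you should propagate with arbitrary weight-homogeneous $x \in \UEA\:\SLA{sl}{2}$, for which $(xv, xv_0)$ is again a Jordan pair whenever $x v_0 \neq 0$.

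The paper's proof closes precisely this gap without any information about $\pd_X g_{u,v}$: it works with the whole indecomposable module $M$ (single generalised eigenvalue $\Delta$, by centrality of $T$ and finite-dimensionality of weight spaces) and exploits the Zhu relation at \emph{every} weight, not only at weights carrying Jordan blocks. On a weight space $M_\mu \neq 0$ one has $I_{u,v}(\mu,T)=0$ with $T-\Delta$ nilpotent, so $I_{u,v}(\mu,\Delta)=0$, and $T$ is automatically semisimple on $M_\mu$ unless $\Delta$ is at least a double zero of $I_{u,v}(\mu,X)$, that is, unless $g_{u,v}(\mu,\Delta)=0$. Hence the span $W$ of $T$-eigenvectors contains all but finitely many weight spaces. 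If $\Delta$ is not one of the $\Delta_{r,s}$ with $(r,s) \in K(u,v)$, then $M$ itself is supported on at most $u-1$ weights, so it is finite-dimensional and Weyl's theorem gives semisimplicity of $T$; this disposes of your $\Delta = \Delta_{r,0}$ case before it arises. If $\Delta = \Delta_{r,s}$, then $M/W$ is finite-dimensional, and were it nonzero its Casimir eigenvalue would be some $\Delta_{r',0} \neq \Delta_{r,s}$ (here $\gcd(u,v)=1$ enters), a contradiction. Replacing your case analysis on $N$ by this quotient argument turns your outline into a complete proof.
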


\begin{proof}
\leavevmode
  \begin{enumerate}
  \item Let \(\Mod{M}\) be an indecomposable weight module of \(\zhu{\AdmMod{u}{v}}\) on which (the image of) \(T\) acts non-semisimply. As $T$ is proportional to the image of the quadratic Casimir $Q$, it follows that $T$ has a single (generalised) eigenvalue on $\Mod{M}$. Then, by Weyl's theorem for $\SLA{sl}{2}$, \(\Mod{M}\) must be infinite-dimensional with an infinite number of distinct \(\SLA{sl}{2}\) weights.  Let \(\Mod{W}\) be the submodule of \(\Mod{M}\) spanned by the eigenvectors of \(T\). As $\Mod{W}$ is non-zero, $\Mod{W}$     must also be infinite-dimensional as it possesses an infinite number of distinct \(\SLA{sl}{2}\)-weights.  Thus, \(\Mod{W}\) is an eigenspace of \(T\) with eigenvalue \(\Delta_{r,s}\), for some \((r,s)\in K(u,v)\).
    
  Next, assume that there exists a generalised eigenvector \(v\) of \(T\), so that $(T-\Delta_{r,s})v\neq0$, with \(\SLA{sl}{2}\)-weight \(\lambda_v\).  The existence of \(v\) would imply that \(I_{u,v}(\lambda_v, \Delta)\) has a zero of order at least $2$ at \(\Delta=\Delta_{r,s}\). Since there are only finitely many \(\SLA{sl}{2}\)-weights \(\lambda\) for which \(\Delta=\Delta_{r,s}\) is a zero of \(I_{u,v}(\lambda, \Delta)\) of order at least $2$, the quotient \(\Mod{M}/\Mod{W}\) must be finite-dimensional. But, the eigenvalue of $T$ on $v+\Mod{W}$ is then \(\Delta_{r,0}\), for some \(1\le r\le u-1\), which is a contradiction.  It follows that no such generalised eigenvectors exist, hence that $T$ acts semisimply on $\Mod{M}$.
  \item On any \rhwm{} over $\AdmMod{u}{v}$, the action of $L_0$ on the \rhwvs{} coincides with that of $T$ on the corresponding $\zhu{\AdmMod{u}{v}}$-module.  As the latter action is semisimple, so is that of $L_0$ on the \rhwvs{}.  As these generate the whole module, $L_0$ acts semisimply. \qedhere
  \end{enumerate}
\end{proof}

We stress that this result does not imply that the \cfts{} corresponding to the \voas{} $\AdmMod{u}{v}$ are non-logarithmic.  Indeed, it has been known for some time that there are models with $k=-\tfrac{4}{3}$ \cite{GabFus01} and $k=-\tfrac{1}{2}$ \cite{LesLog04,RidFus10} that are logarithmic.  The loophole is that we may also twist by the so-called spectral flow automorphisms which, when $v \neq 1$, lead to infinitely many new simple (and indecomposable) $\AdmMod{u}{v}$-modules that do not belong to category $\categ{R}$.  For $k=-\tfrac{4}{3}$ and $k=-\tfrac{1}{2}$, there exist indecomposable $\AdmMod{u}{v}$-modules that are formed from \rhwms{} from different spectral flow sectors and the action of $L_0$ on these is non-semisimple.  In fact, these modules are \emph{staggered} in the sense of \cite{RidSta09,CreLog13}; a detailed discussion may be found in \cite{CreMod12}.  We expect that there exist staggered $\AdmMod{u}{v}$-modules whenever $v \neq 1$, hence that the associated \cfts{} are logarithmic, and hope to report on this in the future.

\section*{Acknowledgements}

We thank Jim Borger for help with a question of commutative algebra, J\"{u}rgen Fuchs, Masoud Kamgarpour and Christoph Schweigert for illuminating discussions regarding parabolic Verma modules, Antun Milas for correspondence concerning the current status of higher rank generalisations, Ole Warnaar for advice on symmetric function theory, and the organisers of the Erwin Schr\"{o}dinger Institute programme ``Modern trends in topological quantum field theory'' for their hospitality.
DR's research is supported by the Australian Research Council Discovery Project DP1093910.  
SW's work is supported by the Australian Research Council Discovery Early Career Researcher Award DE140101825.

\appendix
\section{Symmetric Polynomials} \label{sec:SymmPoly}

The standard reference for the parts of symmetric function theory most applicable to the work reported here is Macdonald's book \cite{MacSym95}.  In this appendix, we summarise the results from Chapters 1 and 6 of this book that we use freely throughout.

\subsection{Partitions of integers}\label{sec:Partitions}

A number of bases of the ring of symmetric polynomials are indexed by partitions. We therefore fix some notation on partitions before going on to discuss symmetric polynomials. A partition \(\lambda=[\lambda_1,\dots,\lambda_m]\) is a weakly descending sequence of positive integers called parts. The length \(\ell(\lambda)=m\) is the length of the sequence and the weight \(\abs{\lambda}=\sum_i\lambda_i\) is the sum over all elements of the sequence. Sometimes, it is convenient to define the \(\lambda_i\) for \(i>\ell(\lambda)\) to be $0$. A partition \(\lambda\) is often also referred to as a partition of the integer \(\abs{\lambda}\); $[3,3,2,1,1,1]$ is thus a partition of $11$.  It is customary to regard the empty partition $[\:]$ as a partition of $0$.

A convenient shorthand for partitions is to indicate repeated parts using a superscript.  Thus, $[3,3,2,1,1,1]$ and $[3^2,2,1^3]$ denote the same partition.  The multiplicity in $\lambda$ of a given part \(i\), that is, the superscript in the convenient shorthand notation, will be denoted by \(m_\lambda(i)\). For every partition \(\lambda\), one may then introduce the following number:
\begin{equation}
  z_\lambda=\prod_{i\ge 1}m_\lambda(i)!\cdot i^{m_\lambda(i)}.
\end{equation}
These numbers play a small role in what follows, see \eqref{eq:DefInner} below for example.

One also associates, to each partition \(\lambda\), a diagram called a Young diagram. This consists of \(\ell(\lambda)\) rows of left-aligned boxes for which the length of the \(i\)-th row is \(\lambda_i\). We draw the first row at the top and the \(\ell(\lambda)\)-th at the bottom. With this convention, the conjugate partition \(\lambda^\prime\) is defined to be the partition whose Young diagram is the reflection of that of $\lambda$ along the diagonal line from top left to bottom right. This reflection exchanges the lengths of the columns and the rows.

Each box $b$ of the Young diagram of $\lambda$, generally written as \(b\in\lambda\), may be parametrised by a pair \(b=(i,j)\), where \(i\) is the row number counted from top to bottom and \(j\) is the column number counted from left to right. Given a partition \(\lambda\) and a box \(b\in\lambda\), the arm length \(a(b)\), the leg length \(l(b)\), the arm colength \(a^\prime(b)\) and the leg colength \(l^\prime(b)\) are the distances from \(b\) to the right, bottom, left and top edges of the Young diagram, respectively. In formulae,
\begin{equation} \label{eq:ArmLeg}
a(b)=\lambda_i-j,\qquad
a^\prime(b)=j-1,\qquad
l(b)=\lambda^\prime_j-i,\qquad
l^\prime(b)=i-1.
\end{equation}

Finally, we remark that partitions admit a number of partial orderings, among which the dominance ordering, which we denote by $\ge$, will prove useful. Two partitions \(\lambda\) and \(\mu\) satisfy \(\lambda\ge \mu\) if \(\abs{\lambda}=\abs{\mu}\) and
\begin{align}
  \sum_{i=1}^m \lambda_i \ge \sum_{i=1}^m \mu_i,\quad \text{for all \(m\ge 1\).}
\end{align}

\subsection{Symmetric polynomials}\label{sec:SubSymmPoly}

Let \(\sympol{n}\) be the ring of $n$-variable polynomials with complex coefficients that are invariant under arbitrary permutations of the variables. $\sympol{n}$ is called the ring of symmetric polynomials in \(n\) variables. This ring is graded,
\begin{equation}
  \sympol{n}=\bigoplus_{k\ge 0}\sympol{n}^k,
\end{equation}
where \(\sympol{n}^k\) is the space of homogeneous symmetric polynomials of degree \(k\).

Examples of symmetric polynomials include:
\begin{enumerate}
\item \emph{Power sums}: The symmetric polynomial
  \begin{equation}
    \fpowsum{k}{x_1,\dots,x_n}=\sum_{i=1}^n x_i^k, \qquad k\ge 1,
  \end{equation}
  is called the \(k\)-th power sum. For each partition \(\lambda=(\lambda_1,\dots,\lambda_m)\), we define
  \begin{equation}
    \powsum{\lambda}=\powsum{\lambda_1}\cdots \powsum{\lambda_m}.
  \end{equation}
\item \emph{Elementary symmetric polynomials}: The symmetric polynomials
  \begin{equation}
      \felsym{0}{x_1,\dots,x_n}=1, \qquad 
      \felsym{k}{x_1,\dots,x_n}=\sum_{1\le i_1<\cdots<i_k\le n} x_{i_1}\cdots x_{i_k}, \quad 1\le k\le n,
  \end{equation}
  are called the elementary symmetric polynomials. The elementary symmetric polynomials appear in the expansion of the generating function
  \begin{equation}
    \prod_{i=1}^n(y+x_i)=\sum_{i=0}^n \felsym{i}{x_1,\dots,x_n} y^{n-i}.
  \end{equation}
  \item \emph{Monomial symmetric polynomials}: Let
  \(\alpha=(\alpha_1,\dots,\alpha_n)\in \mathbb{Z}_{\ge 0}^n\) be an \(n\)-tuple of non-negative integers.  The symmetric polynomials
  \begin{equation}
    \fmonsym{\alpha}{x_1,\dots,x_n}=\sum_{\sigma}x_1^{\sigma_1}\cdots x_n^{\sigma_n},
  \end{equation}
  where the sum runs over all distinct permutations \(\sigma\) of \(\alpha\),
  are called the monomial symmetric polynomials or symmetric monomials for short. Given \(\alpha\), there is always precisely one distinct permutation $\sigma$ whose entries are in descending order, hence by omitting any trailing zeroes, we may assume that \(\alpha\) is a partition of length at most \(n\). 
\end{enumerate}

\begin{prop}
  \leavevmode
  \begin{enumerate}
  \item The power sums $\powsum{k}$ are algebraically independent for \(k\le n\) and generate \(\sympol{n}\):
    \begin{equation}
      \sympol{n} =\mathbb{C}[\powsum{1},\dots,\powsum{n}].
    \end{equation}
  \item The elementary symmetric polynomials are algebraically independent and generate \(\sympol{n}\):
    \begin{equation}
      \sympol{n} =\mathbb{C}[\elsym{1},\dots,\elsym{n}].
    \end{equation}
  \item The symmetric monomials \(\monsym{\lambda}\) with $\ell(\lambda) \le n$ form a basis of \(\sympol{n}\).
  \end{enumerate}
\end{prop}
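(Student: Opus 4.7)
My plan is to establish the three parts in reverse order, since each follows easily from its successor.

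For part (3), I would observe that any $f \in \sympol{n}$ expands uniquely as $f = \sum_\alpha c_\alpha x_1^{\alpha_1} \cdots x_n^{\alpha_n}$ with $\alpha \in \ZZ_{\ge 0}^n$. Symmetry under $\symgp{n}$ forces $c_\alpha = c_{\sigma(\alpha)}$ for every permutation $\sigma$, so collecting the monomials into $\symgp{n}$-orbits identifies $f$ as a unique linear combination of the $\monsym{\lambda}$ with $\ell(\lambda) \le n$. Linear independence is immediate from the pairwise disjoint monomial supports, so the $\monsym{\lambda}$ with $\ell(\lambda) \le n$ form a basis.

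For part (2), my strategy is to show that the products $\elsym{\lambda_1} \elsym{\lambda_2} \cdots \elsym{\lambda_m}$, taken over partitions $\lambda$ with $\lambda_1 \le n$, constitute an alternative basis of $\sympol{n}$ via a unitriangular transition into the monomial basis from (3). Specifically, I would derive the expansion
\begin{equation*}
  \elsym{\lambda_1} \cdots \elsym{\lambda_m} = \monsym{\lambda'} + \sum_{\mu < \lambda'} c_{\lambda \mu} \monsym{\mu},
\end{equation*}
where $\lambda'$ denotes the conjugate partition and $<$ is the dominance order on partitions of $\abs{\lambda}$. The leading term arises because selecting the monomial $x_1 x_2 \cdots x_{\lambda_j}$ from each factor $\elsym{\lambda_j}$ contributes $x_i^{\lambda'_i}$ to the product, yielding precisely $\monsym{\lambda'}$; all other selections give rise to partitions strictly dominated by $\lambda'$. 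Since conjugation restricts to a bijection between partitions of $\abs{\lambda}$ with all parts bounded by $n$ and those with length at most $n$, combining this unitriangular change of basis with (3) proves that the products in question form a basis of $\sympol{n}$. In particular, $\elsym{1}, \ldots, \elsym{n}$ generate $\sympol{n}$, and algebraic independence follows because distinct monomials in the $\elsym{i}$ correspond to distinct basis elements $\elsym{\lambda_1} \cdots \elsym{\lambda_m}$.

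For part (1), I would invoke Newton's identities
\begin{equation*}
  k \elsym{k} = \sum_{i=1}^{k} (-1)^{i-1} \elsym{k-i} \powsum{i}, \qquad k \ge 1,
\end{equation*}
which constitute a triangular, hence invertible, polynomial change of generators over $\QQ$ between $\set{\powsum{1}, \ldots, \powsum{n}}$ and $\set{\elsym{1}, \ldots, \elsym{n}}$. This yields $\CC[\powsum{1}, \ldots, \powsum{n}] = \CC[\elsym{1}, \ldots, \elsym{n}] = \sympol{n}$ at once, and algebraic independence of the $\powsum{k}$ transfers from that of the $\elsym{k}$ across this invertible change. The only substantive point in the entire argument is the combinatorial verification of the unitriangular expansion underpinning (2); everything else is then essentially bookkeeping plus the standard Newton recursion.
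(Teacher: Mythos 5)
Your argument is correct and is essentially the classical proof that the paper itself relies on: the proposition is stated without proof and referred to Macdonald's book, where parts (2) and (3) are proved exactly via the orbit decomposition into the $\monsym{\lambda}$ and the unitriangular (dominance-ordered) expansion $\elsym{\lambda_1}\cdots\elsym{\lambda_m} = \monsym{\lambda'} + \sum_{\mu<\lambda'} c_{\lambda\mu}\monsym{\mu}$, and part (1) via Newton's identities. The only point to state a touch more carefully is the triangularity claim: the standard argument shows every monomial occurring in the product has exponent-partition dominated by $\lambda'$ and that $x_1^{\lambda'_1}\cdots x_n^{\lambda'_n}$ arises from the distinguished selection alone (coefficient exactly $1$), which is what your "strictly dominated" phrasing is implicitly invoking.
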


When working with symmetric polynomials, it is a remarkable fact that the number of variables is often irrelevant, assuming only that this number is sufficiently large. For this reason, it is convenient to work with symmetric polynomials in infinitely many variables. The ring of symmetric polynomials in infinitely many variables, $x_i$ say, is given by an inverse limit:
\begin{equation}
  \symfunc=\varprojlim_{n}\sympol{n}.
\end{equation}
Often the elements of $\symfunc$ are distinguished by referring to them as symmetric functions.  The projection
\begin{equation}
  \pi_n\colon\symfunc \rightarrow \sympol{n},
\end{equation}
defined by setting \(x_i=0\) for all \(i>n\), recovers the case of finitely many variables.

\begin{prop}
  \leavevmode
  \begin{enumerate}
  \item The power sums are algebraically independent for all \(k\ge1\) and generate \(\symfunc\):
    \begin{equation}
      \symfunc =\mathbb{C}[\powsum{1},\powsum{2},\dots].
    \end{equation}
    In addition, the \(\powsum{\lambda}\) form a basis of \(\symfunc\) as \(\lambda\) runs over all partitions of all non-negative integers.
  \item The symmetric monomials \(\monsym{\lambda}\) form a basis of \(\symfunc\) as \(\lambda\) runs over all partitions of all non-negative integers. The projection \(\pi_n\colon\symfunc\rightarrow \sympol{n}\) satisfies \(\pi_n(\monsym{\lambda})=0\) if and only if \(\ell(\lambda)>n\).
  \end{enumerate}
\end{prop}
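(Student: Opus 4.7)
The plan is to transfer the finite-variable statements recalled just above the proposition to \(\symfunc\) through the inverse-limit projections \(\pi_n\colon\symfunc\to\sympol{n}\). The structural input to emphasise at the outset is that \(\symfunc\) should be understood as the \emph{graded} inverse limit, giving a decomposition \(\symfunc=\bigoplus_k\symfunc^k\) with \(\symfunc^k=\varprojlim_n\sympol{n}^k\), and that the projection \(\pi_n\) restricts in each fixed degree \(k\) to an isomorphism \(\symfunc^k\to\sympol{n}^k\) as soon as \(n\ge k\). This stabilisation should be verified first: in degree \(k\), the map \(\sympol{n}^k\to\sympol{n-1}^k\) obtained by setting \(x_n=0\) annihilates precisely those \(\monsym{\lambda}\) with \(\ell(\lambda)=n\), and no partition of weight \(k\) satisfies \(\ell(\lambda)=n\) once \(n>k\).

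I would then prove part 2, since part 1 relies on the grading obtained from it. For any partition \(\lambda\), the family whose \(n\)-th entry is the \(n\)-variable symmetric monomial \(\monsym{\lambda}\) when \(n\ge\ell(\lambda)\) and zero otherwise is visibly compatible under the \(\pi_{n,n-1}\), so defines an element \(\monsym{\lambda}\in\symfunc\) whose \(\pi_n\)-image satisfies the last sentence of the proposition directly. Fixing a degree \(k\) and any \(n\ge k\), every partition \(\lambda\) of weight \(k\) automatically satisfies \(\ell(\lambda)\le k\le n\), so the finite-variable result exhibits \(\set{\pi_n\monsym{\lambda}\st\tabs{\lambda}=k}\) as a basis of \(\sympol{n}^k\). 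Pulling back along the isomorphism \(\pi_n\colon\symfunc^k\to\sympol{n}^k\) gives \(\set{\monsym{\lambda}\st\tabs{\lambda}=k}\) as a basis of \(\symfunc^k\), and direct summing over \(k\) yields part 2.

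For part 1, algebraic independence is essentially formal: any hypothetical nontrivial polynomial relation \(P(\powsum{1},\ldots,\powsum{N})=0\) in \(\symfunc\) would, upon applying \(\pi_n\) for some \(n\ge N\), yield the same relation among \(\powsum{1},\ldots,\powsum{N}\) inside \(\sympol{n}\), contradicting the algebraic independence of the generating set \(\powsum{1},\ldots,\powsum{n}\) of \(\sympol{n}\). For generation I would again work degree by degree: each \(\powsum{\lambda}\) with \(\tabs{\lambda}=k\) lies in \(\symfunc^k\), and every part satisfies \(\lambda_i\le k\le n\), so by the finite-variable basis statement (the polynomial ring \(\CC[\powsum{1},\ldots,\powsum{n}]\) in algebraically independent generators admits as a linear basis the monomials in these generators) the set \(\set{\pi_n\powsum{\lambda}\st\tabs{\lambda}=k}\) is a basis of \(\sympol{n}^k\). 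Comparing with part 2 shows that its cardinality \(p(k)\) equals \(\dim\symfunc^k\), so \(\set{\powsum{\lambda}\st\tabs{\lambda}=k}\) is a basis of \(\symfunc^k\); summing over \(k\) gives both the basis assertion and, as a formal consequence, the equality \(\symfunc=\CC[\powsum{1},\powsum{2},\ldots]\).

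The only point requiring genuine care is the graded inverse limit convention: the naive inverse limit of ungraded rings would contain formal power series such as \(\prod_i(1+x_i)\), which is not polynomial in the \(\powsum{k}\) and would obstruct the generation statement. Once this convention is fixed, the stabilisation \(\symfunc^k\cong\sympol{n}^k\) for \(n\ge k\) reduces both parts of the proposition to clean applications of the finite-variable results already recorded in the text.
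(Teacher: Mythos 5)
Your proof is correct. The paper itself offers no argument for this proposition --- it is quoted as standard material from Macdonald's book --- so there is nothing to compare against beyond noting that your degree-by-degree stabilisation argument (the graded pieces $\symfunc^k \to \sympol{n}^k$ becoming isomorphisms for $n \ge k$, then transporting the finite-variable bases back) is precisely the standard treatment. Your closing remark correctly identifies the one genuine subtlety: $\symfunc$ must be the inverse limit taken in the graded category, since the ungraded limit contains compatible families such as $\prod_i (1+x_i) = \sum_k \elsym{k}$ that are not polynomials in the $\powsum{k}$, and the generation statement in part 1 would fail for it.
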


The basis of \(\symfunc\) which is most interesting for the purposes of this article is that consisting of the Jack symmetric polynomials. These polynomials are orthogonal with respect to the following inner product on \(\symfunc\):
\begin{align} \label{eq:DefInner}
  \inprod{\powsum{\lambda}}{\powsum{\mu}}_t=z_\lambda t^{\ell(\lambda)}\delta_{\lambda,\mu}.
\end{align}
In fact, there are infinitely many families of Jack polynomials, each forming a basis of $\symfunc$, parametrised by the complex number \(t\in\mathbb{C}\setminus\mathbb{Q}_{\le 0}\) appearing in this inner product.
\begin{prop}
  \leavevmode
  \begin{enumerate}
  \item Let \(\{u_m\}\) be a basis of \(\symfunc\) and let \(\{v_m\}\) be the dual basis with respect to the inner product \eqref{eq:DefInner}. Then, the following identities hold:
    \begin{equation} \label{eq:Cauchy}
      \prod_{k\ge1} \exp \left( \frac{1}{t} \frac{\fpowsum{k}{x_1,x_2,\dots} \fpowsum{k}{y_1,y_2,\dots}}{k} \right)
      = \prod_{i,j\ge 1} (1-x_i y_j)^{-1/t}
      = \sum_m u_m(x_1,x_2,\dots) v_m(y_1,y_2,\dots).
    \end{equation}
  Analogous identities hold for finite numbers of variables $x_i$ and/or $y_j$ by projection.
  \item For every partition \(\lambda\) and every $t \in \CC \setminus \QQ_{\le 0}$, there exists a unique basis of symmetric polynomials \(\jack{\lambda}{t}\) such that $\inprod{\jack{\lambda}{t}}{\jack{\mu}{t}}_t=0$, whenever $\lambda\neq\mu$, and such that they satisfy upper-triangular decompositions into symmetric monomials,
  \begin{equation} \label{eq:JackMonomials}
    \jack{\lambda}{t}=\sum_{\mu\le\lambda} u_{\lambda,\mu}(t) \monsym{\mu},
  \end{equation}
  where \(u_{\lambda,\mu}(t)\in\mathbb{C}\) and \(u_{\lambda,\lambda}(t)=1\).  Here, the sum runs over all \(\mu\) that are dominated by $\lambda$.  It follows that the $\jack{\lambda}{t}$ are homogeneous symmetric polynomials of degree $\abs{\lambda}$.
  \end{enumerate}
\end{prop}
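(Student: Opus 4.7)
The proposition splits into two claims, which I address in turn.

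For the Cauchy identity, the strategy is to pass through power sums. Starting from $-\log(1-z)=\sum_{k\ge1}z^k/k$ gives
\begin{equation*}
\prod_{i,j\ge 1}(1-x_iy_j)^{-1/t}=\exp\biggl(\frac{1}{t}\sum_{k\ge1}\frac{\fpowsum{k}{x}\fpowsum{k}{y}}{k}\biggr)=\prod_{k\ge1}\exp\biggl(\frac{1}{t}\frac{\fpowsum{k}{x}\fpowsum{k}{y}}{k}\biggr),
\end{equation*}
which establishes the first equality in \eqref{eq:Cauchy}. Expanding each exponential as a power series and collecting terms indexed by partitions yields $\sum_\lambda \fpowsum{\lambda}{x}\fpowsum{\lambda}{y}/(z_\lambda t^{\ell(\lambda)})$, which is precisely $\sum_m u_m(x)v_m(y)$ for the specific dual pair $u_\lambda=\powsum{\lambda}$, $v_\lambda=\powsum{\lambda}/(z_\lambda t^{\ell(\lambda)})$ (dual by \eqref{eq:DefInner}). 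Independence of this sum from the choice of dual basis is then a standard linear-algebra observation: if $\{u'_m\}, \{v'_m\}$ is another dual pair, the transition matrices $A$ and $B$ expressing them in terms of $\{u_m\}$ and $\{v_m\}$ satisfy $A^\top B=I$ by compatibility with the pairing, so $\sum_m u'_m(x)v'_m(y)=\sum_m u_m(x)v_m(y)$. The finite-variable versions follow by applying the projection $\pi_n$.

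For existence and uniqueness of the Jack polynomials, the plan is operator-theoretic, based on the Laplace-Beltrami (Sekiguchi) operator
\begin{equation*}
D(t)=\sum_i x_i^2\frac{\pd^2}{\pd x_i^2}+\frac{2}{t}\sum_{i\neq j}\frac{x_i^2}{x_i-x_j}\frac{\pd}{\pd x_i},
\end{equation*}
which preserves $\symfunc$ (after passing to the inverse limit in the number of variables). I would verify three properties: (i) $D(t)$ is self-adjoint with respect to $\inprod{\cdot}{\cdot}_t$, checked most easily by evaluating both sides on the power-sum basis; (ii) its action on the monomial basis is upper-triangular in the dominance order,
\begin{equation*}
D(t)\monsym{\lambda}=e_\lambda(t)\monsym{\lambda}+\sum_{\mu<\lambda}c_{\lambda,\mu}(t)\monsym{\mu},
\end{equation*}
with diagonal eigenvalues $e_\lambda(t)$ that are explicit integer-linear combinations of $1$ and $1/t$ determined by $\lambda$; and (iii) these eigenvalues are pairwise distinct whenever $t\notin\QQ_{\le 0}$. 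Granted (i)--(iii), the $\jack{\lambda}{t}$ are defined as the unique eigenvectors of $D(t)$ of the form $\monsym{\lambda}+\sum_{\mu<\lambda}u_{\lambda,\mu}(t)\monsym{\mu}$, which yields \eqref{eq:JackMonomials} immediately, while self-adjointness combined with distinct eigenvalues forces the orthogonality $\inprod{\jack{\lambda}{t}}{\jack{\mu}{t}}_t=0$ for $\lambda\neq\mu$. Uniqueness of any family satisfying the stated conditions is automatic: the transition matrix between two such families is unipotent upper-triangular in dominance, and orthogonality reduces it to the identity.

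The main obstacle is step (ii): verifying the triangularity against the dominance partial order (not merely a linear refinement) and extracting a usable formula for $e_\lambda(t)$. The rational piece of $D(t)$ is the delicate part, since at first glance its action on a polynomial produces rational functions. The standard device is to symmetrise over the pairs $(i,j)\leftrightarrow(j,i)$, so that apparent poles cancel and only polynomial terms remain; one then shows by inspection that every surviving monomial in $D(t)\monsym{\lambda}-e_\lambda(t)\monsym{\lambda}$ is indexed by a partition obtained from $\lambda$ by moving a single box to a strictly lower row, hence strictly smaller in dominance order. Distinctness in step (iii) then reduces to the observation that $e_\lambda(t)-e_\mu(t)$, as an affine-linear function of $1/t$ with rational coefficients, vanishes identically only when $\lambda=\mu$; otherwise its unique zero in $1/t$ is rational, and one checks it lies in $\QQ_{\le 0}$, precisely the range excluded by hypothesis. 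Once (ii) and (iii) are in hand, everything else in the proposition is formal.
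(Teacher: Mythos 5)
Part 1 of your proposal is fine: the logarithm/power-sum computation and the change-of-dual-basis argument (applied degree by degree, where everything is finite-dimensional) is the standard proof of \eqref{eq:Cauchy}, and the paper itself simply refers this material to Macdonald's book rather than proving it. The problem is in part 2, specifically your step (iii). It is \emph{not} true that the eigenvalues $e_\lambda(t)$ of the single Laplace--Beltrami operator are pairwise distinct for $t \notin \QQ_{\le 0}$: they depend on $\lambda$ only through $\abs{\lambda}$, $n(\lambda)=\sum_i(i-1)\lambda_i$ and $n(\lambda')=\sum_i\binom{\lambda_i}{2}$, and the dominance-incomparable pair $\lambda=[3,1,1,1]$, $\mu=[2,2,2]$ has all three invariants equal ($6$, $6$ and $3$), so $e_\lambda(t)=e_\mu(t)$ identically in $t$ and in the number of variables. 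Consequently your final step, ``self-adjointness combined with distinct eigenvalues forces orthogonality,'' breaks down exactly for such incomparable pairs: you get no information about $\inprod{\jack{\lambda}{t}}{\jack{\mu}{t}}_t$ when the eigenvalues coincide. Your argument does correctly establish existence and uniqueness of the triangular eigenvectors (for $\mu<\lambda$ the difference $e_\lambda-e_\mu$ is of the form $A+B/t$ with $A>0$, $B\ge 0$, so it vanishes only for $t\in\QQ_{<0}$), but orthogonality, which is half of the statement \eqref{eq:JackMonomials} is paired with, is not proved.

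The standard repair is to replace the single operator by the full commuting family of Sekiguchi--Debiard operators (the Jack degeneration of Macdonald's operators $D_n^r$, $1\le r\le n$): each member is self-adjoint for $\inprod{\cdot}{\cdot}_t$ and dominance-triangular on the $\monsym{\mu}$, and their \emph{joint} eigenvalues, essentially $\prod_i\bigl(X+\lambda_i+\tfrac{1}{t}(n-i)\bigr)$, do separate distinct partitions for $t\notin\QQ_{\le 0}$; orthogonality for all pairs then follows by your self-adjointness argument applied to a separating member of the family. (Equivalently, one can follow Macdonald's own route and obtain the Jack case as the $q\to 1$ limit of the Macdonald polynomials, where orthogonality is proved with this operator family.) Two smaller points: your operator does not literally act on $\symfunc$ with $n$-independent eigenvalues, so the passage to the inverse limit needs the usual stabilisation (e.g.\ correcting by a multiple of the Euler operator), and your uniqueness argument implicitly uses that $\inprod{\jack{\nu}{t}}{\jack{\nu}{t}}_t\neq 0$ for the relevant $\nu$, which holds for $t\notin\QQ_{\le 0}$ but should be said.
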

\begin{defn}
The \(\jack{\lambda}{t}\) determined by the previous proposition are called the \emph{Jack symmetric polynomials} or, when working in the infinite-variable ring $\symfunc$, the \emph{Jack symmetric functions}.
\end{defn}
\noindent We remark that if \(t\in\mathbb{Q}_{\le 0}\), then some of the coefficients \(u_{\lambda,\mu}(t)\) in \eqref{eq:JackMonomials} will diverge for certain \(\lambda\), hence some of the Jack polynomials will not be defined.

The Jack polynomials \(\jack{\lambda}{t}\) satisfy a number of remarkable properties. Before we can list those that we shall require, we need to introduce one more inner product, this time that of two symmetric polynomials \(f,g \in \sympol{n}\) of $n$ variables:
\begin{equation}\label{eq:innerprod}
  \intprod{f}{g}_t^n=\int_{[\Delta_n]}\prod_{1\le i\neq j\le n}\brac{1-\frac{x_i}{x_j}}^{1/t} \overline{f(x_1,\dots,x_n)} g(x_1,\dots,x_n)\: \frac{\dd x_1\cdots \dd x_n}{x_1\cdots x_n}.
\end{equation}
Here, the overline indicates that the arguments of the function have been inverted:  $\overline{f(x_1,\dots,x_n)} = f(x_1^{-1},\dots,x_n^{-1})$.  We remark that the integral in \eqref{eq:innerprod} may be thought of as an $n$-variable generalisation of taking the residue at $0$ of a meromorphic function.  In particular, the integral vanishes if $f$ and $g$ are homogeneous of different degrees.

In \cite{MacSym95}, an inner product is defined for a deformation of the Jack polynomials (now) called the Macdonald polynomials. The definition is almost identical to \eqref{eq:innerprod}, utilising a cycle $[\Delta_n]$ that is just a (normalised) product of \(n\) unit circles. However, there are some subtle ties when passing to the Jack polynomial limit and, in particular, this cycle is no longer suitable. Instead, we will use the (normalised) cycles constructed by Tsuchiya and Kanie which also have the advantage of being supported in the domains required for radially-ordered expansions of screening operators.
\begin{cthm}[\protect{\cite{TsuFoc86}}]\label{thm:tkthm}
Let \(r\in\mathbb{Z}_{\ge 1}\) and \(t\in\mathbb{C}^\ast\) and suppose that \(d(d+1)/t\notin\mathbb{Z}\) and \(d(r-d)/t \notin \mathbb{Z}\), for all integers \(d\) satisfying \(1\le d\le r-1\).  Then, there exists a cycle \(\Delta_r\) such that for each symmetric Laurent polynomial \(f(z_1,\dots,z_r)\), the integral
\begin{equation} \label{eq:Contour+}
\int_{\Delta_r} G_r(z;t) f(z_1,\dots,z_r)\: \frac{\dd z_1\cdots\dd z_r}{z_1\cdots z_r}
\end{equation}
is equal to
\begin{equation}
\int_{\abs{z}=1}\int_{\sigma_{r-1}}\prod_{i=1}^{r-1}(1-y_i)^{2/t}\cdot G_{r-1}(y;t) f(z,zy_1,zy_2,\dots,zy_{r-1}) \: \frac{\dd y_1 \cdots \dd y_{r-1} \: \dd z}{y_1 \cdots y_{r-1} z},
\end{equation}
where \(\sigma_{r-1}\) is (a regularisation of) the \((r-1)\)-simplex \(\{1>y_1>\cdots>y_{r-1}>0\}\).  If \(r=1\), then \(G_1(z;t)=1\) and \(\Delta_1\) is just the unit circle.  In particular, if \(f(z_1,\dots,z_r)=1\), then \eqref{eq:Contour+} may be evaluated as a Selberg integral:
\begin{equation} \label{eq:Selberg}
  S_n(t)=\int_{\Delta_r} G_r(z;t) \: \frac{\dd z_1\cdots\dd z_r}{z_1\cdots z_r} = \frac{2\pi \ii}{(r-1)!} \prod_{j=1}^{r-1}\frac{\Gamma(1+(j+1)/t)\Gamma(-j/t)}{\Gamma(1+1/t)}.
\end{equation}
This integral is non-zero, hence the cycle $\Delta_r$ is non-trivial.
\end{cthm}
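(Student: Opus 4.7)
The plan is to prove both assertions by induction on $r$, using a fibration that writes the $r$-dimensional cycle $\Delta_r$ as (roughly) a unit circle in a single variable $z$ twisted with a regularised $(r-1)$-simplex in the ratios $y_i=z_{i+1}/z_1$. For the base case $r=1$, one has $G_1=1$ and $\Delta_1$ is the unit circle, so both claims are immediate: the reduction formula is vacuous and the Selberg formula reads $2\pi\ii$, matching the empty product on the right-hand side of \eqref{eq:Selberg}.

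For the inductive step, I would introduce the coordinate change $z_1=z$, $z_{i+1}=zy_i$ for $i=1,\ldots,r-1$. A short computation gives
\begin{equation*}
\frac{\dd z_1\cdots\dd z_r}{z_1\cdots z_r}=\frac{\dd z}{z}\cdot\frac{\dd y_1\cdots\dd y_{r-1}}{y_1\cdots y_{r-1}},
\end{equation*}
and, separating the factors of $G_r$ involving the index $1$ from those with indices $\ge 2$,
\begin{equation*}
G_r(z_1,\ldots,z_r;t)=G_{r-1}(y_1,\ldots,y_{r-1};t)\cdot\prod_{j=1}^{r-1}(1-y_j)^{2/t}\cdot\Phi(y;t),
\end{equation*}
where $\Phi(y;t)$ collects the branch factors coming from $(1-y_j^{-1})^{1/t}$ and is single-valued on the simplex $1>y_1>\cdots>y_{r-1}>0$ after a standard choice of branch. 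Since $f$ is a symmetric Laurent polynomial, $f(z,zy_1,\ldots,zy_{r-1})$ is likewise a Laurent polynomial in $z$ whose coefficients are polynomials in the $y_j$, so the $z$-integration reduces to a residue at $0$ and produces the desired unit-circle factor in the reduction formula.

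The genuinely non-trivial content is the construction of $\Delta_r$ itself, which I would carry out as a Pochhammer-type regularisation of the iterated-simplex fibration built above. The conditions $d(d+1)/t\notin\ZZ$ and $d(r-d)/t\notin\ZZ$ for $1\le d\le r-1$ are precisely what is needed to ensure that the local system defined by $G_r(z;t)^{1/t}$ has trivial boundary monodromy contributions when the simplex edges $y_j=y_{j+1}$ and $y_j=0,1$ are regularised by small twisted loops; without this assumption, the boundary cycles fail to vanish homologically and the iterated integral is ill-defined or picks up residues at the offending subvarieties. One can either cite Aomoto's theory of twisted cycles or give a direct inductive argument that the Pochhammer contour in the $y_{r-1}$ fiber is well-defined provided the exponents avoid the stated integers, and then assemble the full cycle by successive fibrations.

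For the Selberg evaluation, I would iterate the reduction formula with $f=1$. Each application of the reduction reduces the dimension by one and extracts an integration of the form
\begin{equation*}
\int_0^1 y^{a}(1-y)^{2/t}\,\frac{\dd y}{y},
\end{equation*}
where the exponent $a$ depends on the step and is tracked through the definition of $G_{r-1}$. These are beta integrals, yielding ratios of gamma functions $\Gamma(1+(j+1)/t)\Gamma(-j/t)/\Gamma(1+1/t)$ at the $j$-th step after using $\Gamma(z+1)=z\Gamma(z)$ to absorb the arising linear factors; the combinatorial prefactor $1/(r-1)!$ comes from unsymmetrising the simplex relative to the original symmetric product. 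The main obstacle throughout is the homological bookkeeping needed to make the regularised simplex construction airtight; the exponent-tracking in the Selberg reduction is routine once the cycle itself is rigorously in hand.
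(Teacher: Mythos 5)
Note first that the paper does not prove this statement at all: it is quoted from Tsuchiya and Kanie \cite{TsuFoc86}, and the text explicitly defers to that reference (and to \cite{CohCyc03} for the uniqueness of the cycle class), so there is no in-paper argument to compare yours with. On its own merits, your overall strategy is the right one and is essentially the classical route: fibre the cycle as a unit circle in $z=z_1$ times a regularised ordered simplex in the ratios $y_i=z_{i+1}/z$, build $\Delta_r$ inductively by Pochhammer-type regularisation (the hypotheses $d(d+1)/t\notin\ZZ$, $d(r-d)/t\notin\ZZ$ are indeed the non-resonance conditions that make the relevant twisted homology one-dimensional), and evaluate the $f=1$ case by iterated beta/Selberg integrals. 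The measure computation $\dd z_1\cdots\dd z_r/(z_1\cdots z_r)=(\dd z/z)\,\dd y_1\cdots\dd y_{r-1}/(y_1\cdots y_{r-1})$ is correct.

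The gap is at the one step you dismiss as routine. Under $z_1=z$, $z_{i+1}=zy_i$, the pairings of index $1$ with $j+1$ give $(1-y_j^{-1})^{1/t}(1-y_j)^{1/t}=\bigl(-y_j^{-1}\bigr)^{1/t}(1-y_j)^{2/t}$, so
\begin{equation*}
G_r(z;t)=G_{r-1}(y;t)\cdot\prod_{j=1}^{r-1}(1-y_j)^{2/t}\cdot\prod_{j=1}^{r-1}\bigl(-y_j^{-1}\bigr)^{1/t},
\end{equation*}
and your $\Phi(y;t)$ is exactly the last product. Being single-valued on the open simplex does not entitle you to drop it from the integrand, and in fact it cannot be dropped: without the $y_j^{-1/t}$ the reduced integrand behaves like $\dd y_j/y_j$ at $y_j\to0$, an exponent of exactly $-1$ that no twisted-cycle (Pochhammer) regularisation can render finite, so the inductive construction and the $f=1$ evaluation both break. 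With it, the inner integral for $f=1$ is a Selberg integral over the ordered simplex with parameters $\alpha=-(r-1)/t$ (the $-1/t$ per variable from $\Phi$ plus $-(r-2)/t$ coming from $G_{r-1}$ itself), $\beta=1+2/t$, $\gamma=1/t$, and only with these values does one recover $\tfrac{2\pi\ii}{(r-1)!}\prod_{j=1}^{r-1}\Gamma(1+(j+1)/t)\Gamma(-j/t)/\Gamma(1+1/t)$ as in \eqref{eq:Selberg}; your version, which tracks only the powers of $y$ inside $G_{r-1}$, yields the wrong Gamma-function arguments (indeed a divergent integral). So the exponent bookkeeping is precisely where the content of the reduction lies: as written, the displayed reduction formula without $\prod_j y_j^{-1/t}$ can only be understood with these factors (and a constant phase) absorbed into the conventions for the regularised cycle of \cite{TsuFoc86}; in a self-contained proof such as yours they must be carried along explicitly all the way to \eqref{eq:Selberg}.
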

\noindent The normalised cycle $[\Delta_n]$ is then given by \([\Delta_n]=\Delta_n/S_n(t)\), so that $\intprod{1}{1}^n_t=1$.

We finish by summarising the properties of the Jack symmetric polynomials that we will use in this article.
\begin{prop} \label{prop:jackprops}
  \leavevmode
  \begin{enumerate}
  \item The elementary symmetric polynomials are Jack polynomials (for all values of \(t\)):  $\elsym{k} = \jack{[1^k]}{t}$.
  \item The norm squared of \(\jack{\lambda}{t}\) with respect to the infinite variable inner product \(\inprod{\cdot}{\cdot}_t\) is
    \begin{align}
      \inprod{\jack{\lambda}{t}}{\jack{\lambda}{t}}_t
      =\prod_{b\in\lambda} \frac{t(a(b)+1)+l(b)}{t a(b)+l(b)+1}.
    \end{align}
    The corresponding dual basis of the Jack polynomials will be denoted by
    \begin{align} \label{eq:DefQ}
      \djack{\lambda}{t} = \frac{\jack{\lambda}{t}}{\inprod{\jack{\lambda}{t}}{\jack{\lambda}{t}}_t}.
    \end{align}
  \item\label{item:specialisation} For any \(X\in\mathbb{C}\), one defines a ring homomorphism $\Xi_X \colon \symfunc \to \mathbb{C}$, called the \emph{specialisation map}, by $\Xi_X(\powsum{k})=X$, for all $k\ge1$. The Jack polynomials and their duals
    specialise to
    \begin{equation}
      \Xi_X(\jack{\lambda}{t})=\prod_{b\in\lambda}\frac{X+ta^\prime(b)-l^\prime(b)}{ta(b)+l(b)+1}, \qquad
      \Xi_X(\djack{\lambda}{t})=\prod_{b\in\lambda}\frac{X+t a^\prime(b)-l^\prime(b)}{t(a(b)+1)+l(b)}.
    \end{equation}
  \item The projection onto \(n\) variables satisfies \(\pi_n(\jack{\lambda}{t})=0\) if and only if \(\ell(\lambda)>n\).
  \item In \(\sympol{n}\), one has $\fjack{[m^n]}{t}{x_1, \ldots, x_n} = \fmonsym{[m^n]}{x_1, \ldots, x_n} = \prod_{i=1}^n x_i^m$. \label{item:jackmonomials}
  \item Suppose that \(\lambda\) satisfies \(\ell(\lambda) \le n\) and let \(\lambda+[m^n]\) denote the partition \([\lambda_1+m,\dots,\lambda_n+m]\). Then, in \(\sympol{n}\),
    \begin{align} \label{eq:JackByRectJack}
      \prod_{i=1}^n x_i^m \cdot \jack{\lambda}{t}=\jack{\lambda+[m^n]}{t}.
    \end{align}
  \item Suppose that \(\lambda\) satisfies \(\ell(\lambda) \le n\).  Then, the norm squared of the \(\jack{\lambda}{t}\) with respect to the finite variable inner product \(\intprod{\cdot}{\cdot}^n_t\) is
    \begin{align}
      \intprod{\jack{\lambda}{t}}{\jack{\lambda}{t}}^n_t=\prod_{b\in\lambda}
      \frac{\tbrac{t(a(b)+1)+l(b)}\tbrac{n+ta^{\prime}(b)-l^\prime(b)}}
      {\tbrac{ta(b)+l(b)+1}\tbrac{n+t(a^{\prime}(b)+1)-l^\prime(b)-1}}.
    \end{align}
  \end{enumerate}
\end{prop}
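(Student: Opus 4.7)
The proposition is a compendium of seven standard facts about Jack symmetric functions, and my plan is to prove each in the spirit of Macdonald's Chapter VI. The backbone of the approach is the alternative characterisation of \(\jack{\lambda}{t}\) as the unique simultaneous eigenfunction, with leading monomial \(\monsym{\lambda}\), of the commuting family of Sekiguchi--Debiard operators on \(\sympol{n}\) (equivalently, the Calogero--Sutherland Hamiltonian). The key feature is that the eigenvalues on \(\monsym{\lambda}\) are explicit polynomials in the box statistics of \(\lambda\) and separate distinct partitions, which makes the subsequent calculations tractable. I would first verify this characterisation and check its compatibility with the inverse limit defining \(\symfunc\).

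Items 1, 5 and 6 reduce to dominance-order bookkeeping. For item 1, \([1^k]\) is the unique minimum partition of \(k\) in dominance, so the upper-triangular expansion \eqref{eq:JackMonomials} collapses to the single term \(\jack{[1^k]}{t}=\monsym{[1^k]}=\elsym{k}\) by the normalisation \(u_{\lambda,\lambda}(t)=1\). For item 5, the monomial \(\prod_i x_i^m\) in \(\sympol{n}\) equals \(\fmonsym{[m^n]}{x_1,\ldots,x_n}\), and every partition of \(mn\) that is strictly dominated by \([m^n]\) has length greater than \(n\), so its monomial vanishes upon projection to \(n\) variables. Item 6 then observes that \(\prod_i x_i^m\cdot\jack{\lambda}{t}\) has leading monomial \(\monsym{\lambda+[m^n]}\) and remains an eigenfunction of the Sekiguchi operator (with shifted eigenvalue), so uniqueness identifies it with \(\jack{\lambda+[m^n]}{t}\).

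Items 2 and 3 form the analytic heart of the proposition. I would establish item 3 first on the principal specialisation \(X=N\in\ZZ_{\ge1}\), where \(\Xi_N\) is evaluation at \(x_1=\cdots=x_N=1\): computing \(\fjack{\lambda}{t}{1^N}\) by iterating Pieri-type rules for multiplication by \(\powsum{1}\) yields a telescoping product indexed by boxes \(b\in\lambda\) whose numerators and denominators are precisely the arm/leg (co)length expressions of the statement. As both sides are polynomial in \(X\) (the left via the definition \(\Xi_X(\powsum{k})=X\), the right tautologically) and agree at infinitely many points, the identity extends to all \(X\in\CC\). For item 2, I would then compute the norm by specialising the Cauchy identity \eqref{eq:Cauchy} principally in both alphabets: expanding \(\prod_{i,j}(1-x_iy_j)^{-1/t}\) in terms of \(\jack{\lambda}{t}\) and \(\djack{\lambda}{t}\) on opposite sides and evaluating via item 3 gives the norm as a ratio of two hook-box products, which simplifies to the stated formula. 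The dual specialisation in item 3 then follows from \eqref{eq:DefQ} and item 2.

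Item 4 is an inverse-limit compatibility: the \(\ell(\lambda)\le n\) direction is immediate since the leading monomial \(\monsym{\lambda}\) survives \(\pi_n\); the \(\ell(\lambda)>n\) direction follows by noting that \(\pi_n(\jack{\lambda}{t})\) is an eigenfunction of the \(n\)-variable Sekiguchi operator with an eigenvalue that, by the injectivity of the eigenvalue-partition map, is not attained by any \(n\)-variable Jack polynomial, forcing \(\pi_n(\jack{\lambda}{t})=0\). Item 7 combines items 2 and 3 with the Selberg integral \eqref{eq:Selberg}: a residue calculation identifies \(\intprod{\jack{\lambda}{t}}{\jack{\lambda}{t}}_t^n\) as the infinite-variable norm \(\inprod{\jack{\lambda}{t}}{\jack{\lambda}{t}}_t\) multiplied by a ratio of \(\Xi_n\) specialisations of \(\jack{\lambda}{t}\) and \(\djack{\lambda}{t}\), which simplifies via item 3 to the claimed product. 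The main obstacle throughout is the norm formula in item 2, since its cleanest derivation requires the full Sekiguchi-operator machinery (or descent from Macdonald polynomials via the Jack limit); both routes are technical but entirely routine given the framework in \cite{MacSym95}.
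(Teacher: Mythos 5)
The paper offers no proof of this proposition: the appendix states explicitly that it is a summary of standard results quoted from Chapters 1 and 6 of Macdonald's book \cite{MacSym95} (with the normalised cycle for item 7 imported from Tsuchiya--Kanie \cite{TsuFoc86}), so there is no internal argument to compare yours against. Judged on its own terms, your sketch follows the standard route --- Sekiguchi--Debiard eigenoperators for existence, uniqueness, and items 4 and 6; dominance-order bookkeeping for items 1 and 5; principal specialisation at $X=N\in\ZZ_{\ge1}$ followed by polynomiality in $X$ for item 3 --- and those parts are sound.

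The one genuine soft spot is the norm formula in item 2. Applying $\Xi_X\otimes\Xi_Y$ to the Cauchy identity \eqref{eq:Cauchy} does not converge as stated (the exponent involves $\sum_k 1/k$), and even working degree by degree it only yields $\sum_{\abs{\lambda}=m}\Xi_X(\jack{\lambda}{t})\,\Xi_Y(\djack{\lambda}{t})=\binom{XY/t+m-1}{m}$, a single identity summed over all partitions of $m$ that cannot isolate the individual constants $\inprod{\jack{\lambda}{t}}{\jack{\lambda}{t}}_t$. The norm formula genuinely requires the Pieri-rule induction of Macdonald VI.6 (or descent from the Macdonald-polynomial norm), as you partly concede at the end; note also that the dual specialisation in item 3 depends on \eqref{eq:DefQ} and hence on item 2, so the logical order matters. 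A smaller quibble: in item 7 the factor relating the finite- and infinite-variable norms is $\prod_{b\in\lambda}\tbrac{n+ta'(b)-l'(b)}/\tbrac{n+t(a'(b)+1)-l'(b)-1}$, which is not a ratio of $\Xi_n$-specialisations of $\jack{\lambda}{t}$ and $\djack{\lambda}{t}$ (that ratio just reproduces the infinite-variable norm); it is instead $\Xi_n(\jack{\lambda}{t})/\Xi_{n+t-1}(\jack{\lambda}{t})$.
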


The specialisation map of \propref{prop:jackprops}, item \ref{item:specialisation}, seems somewhat mysterious at first glance and deserves some additional explanation. Our chief use for it is to expand products of the form \(\prod_{i\ge1}(1-x_i)^{-X/t}\) in terms of Jack polynomials.  To do this, we consider \eqref{eq:Cauchy} with $y_1 = 1$ and $y_i = 0$ for all $i>1$:
\begin{equation}
\prod_{i\ge1} \brac{1-x_i}^{-1/t} = \prod_{k\ge1} \exp \brac{\frac{1}{t} \frac{\fpowsum{k}{x_1, \ldots, x_n}}{k}}.
\end{equation}
It follows that we may now write
\begin{align} \label{eq:SpecialisingToJacks}
\prod_{i\ge1}(1-x_i)^{-X/t} &= \prod_{k\ge1} \exp \brac{\frac{X}{t} \frac{\fpowsum{k}{x_1, x_2, \ldots}}{k}} = \func{\Xi_X}{\prod_{k\ge1} \exp \brac{\frac{1}{t} \frac{\fpowsum{k}{x_1, x_2, \ldots} \fpowsum{k}{y_1, y_2, \ldots}}{k}}} \notag \\
&= \func{\Xi_X}{\sum_{\lambda} \fjack{\lambda}{t}{x_1, x_2, \ldots} \fdjack{\lambda}{t}{y_1, y_2, \ldots}} = \sum_{\lambda} \fjack{\lambda}{t}{x_1, x_2, \ldots} \func{\Xi_X}{\fdjack{\lambda}{t}{y_1, y_2, \ldots}} \notag \\
&= \sum_{\lambda} \sqbrac{\prod_{b\in\lambda}\frac{X+t a^\prime(b)-l^\prime(b)}{t(a(b)+1)+l(b)}} \fjack{\lambda}{t}{x_1, x_2, \ldots},
\end{align}
where we take the specialisation map $\Xi_X$ to act only on the symmetric polynomials in the \(y_i\).

\section{Zhu's algebra} \label{sec:zhu}

From any \voa{} \(\VOA{V}\), one can construct a unital associative algebra \(\zhu{\VOA{V}}\) called Zhu's algebra. The representation theory of this associative algebra is closely related that of \(\VOA{V}\) and is a crucial tool for classifying \(\VOA{V}\)-modules, especially simple \(\VOA{V}\)-modules.  In this appendix, we motivate Zhu's algebra by using generalised commutation rules to relate it to the algebra of zero modes of the fields of $\VOA{V}$.  From this point of view, the technology of Zhu may be regarded as an abstract formalisation of the ``annihilating ideals'' discussed in the physics literature \cite{FeiAnn92}.  A part of this motivational discussion may be found in \cite{KacBom88}; another means of motivating Zhu's algebra via deforming the standard conventions for normal ordering is the subject of \cite{BruAss99}.

A weight module $\Mod{M}$ of \(\VOA{V}\) is said to be \(\NN\)-graded if its decomposition into generalised \(L_0\)-eigenspaces is bounded below, that is, if there exists an \(h\in\mathbb{C}\) so that
\begin{equation}
  \Mod{M}=\bigoplus_{n\ge0} \Mod{M}_n,\quad \Mod{M}_n=\{u\in \Mod{M}\st (L_0-h-n)^mu=0 \text{ for some } m\in\ZZ_{\ge 1}\}.
\end{equation}
The space \(\overline{\Mod{M}}=\Mod{M}_0\) is often called the space of ground states; these states are clearly annihilated by every positive mode of every field.  We remark that in the language of \secref{sec:Relaxed}, the space $\overline{\Mod{M}}$ is spanned by \rhwvs{} (though $\Mod{M}$ may contain other such vectors that are not in $\overline{\Mod{M}}$).

Our preferred means to motivate the construction of Zhu's algebra is the observation that a simple weight module $\Mod{M}$ of $\VOA{V}$ is, rather generally, completely determined by its space of ground states $\overline{\Mod{M}}$.  This space admits an action of the zero modes of the fields and Zhu's algebra is an abstract realisation of this action.  To be more precise, consider \cite{KacBom88} the generalised commutation relation obtained from the contour integral
\begin{equation}
\oint_0 \oint_w \frac{A(z) B(w) \: z^{h_A} w^{h_B-1}}{z-w} \: \frac{\dd z}{2 \pi \ii} \frac{\dd w}{2 \pi \ii},
\end{equation}
where $A(z)$ and $B(z)$ are two fields of $\VOA{V}$ of conformal weights $h_A$ and $h_B$, respectively.  (We denote their modes by $A_j$ and $B_j$, respectively, and the corresponding states by $A$ and $B$.)  Without the denominator in the integrand, the usual procedure of breaking the inner contour in two or employing the \ope{} would lead to the commutation rules of the modes $A_1$ and $B_0$.  With the denominator, the generalised commutation relation is
\begin{equation}
\sum_{j\ge 0} \sqbrac{A_{-j} B_j + B_{-j-1} A_{j+1}} = \sum_{j\ge -h_A} \binom{h_A}{j+h_A} (A_jB)_0,
\end{equation}
where $(A_jB)_0$ denotes the zero mode of the field corresponding to the state $A_j B$.  Letting this generalised commutation relation act on a ground state $v \in \overline{\Mod{M}}$ (or a \rhwv{}), we arrive at
\begin{equation} \label{eq:GCR1}
A_0 B_0 v = \sum_{j\ge -h_A} \binom{h_A}{j+h_A} (A_jB)_0 v.
\end{equation}

This motivates the definition \cite{ZhuMod96} of Zhu's product $\ast$ on the \voa{} $\VOA{V}$ (we identify its elements with the states of the vacuum module for convenience):
\begin{equation} \label{eq:Zhu1}
A \ast B = \sum_{j\ge -h_A} \binom{h_A}{j+h_A} A_jB = \oint_0 A(z) B \: \frac{(1+z)^{h_A}}{z} \: \frac{\dd z}{2 \pi \ii}.
\end{equation}
The vacuum $\Omega$ is easily verified to be a (two-sided) unit with respect to this operation.  To illustrate, we tabulate the products of the generators $e$, $h$ and $f$ of the level $k$ universal \voa{} $\UVOA{k}$ of $\AKMA{sl}{2}$:
\begin{equation}
\begin{tabular}{C|CCC}
\ast & e & h & f \\
\hline
e & \normord{ee} & \normord{eh}-2e & \normord{ef}-h \\
h & \normord{he}+2e & \normord{hh} & \normord{hf}-2f \\
f & \normord{fe}+h & \normord{fh}+2f & \normord{ff}
\end{tabular}
.
\end{equation}
For example, $h \ast e = h_{-1} e + h_0 e = \normord{he} + 2e$.  We see immediately that, unlike the algebra of zero modes on the ground states, this $\ast$ operation does not respect the $\SLA{sl}{2}$ commutation rules, for example, $h \ast e - e \ast h = 4e + 2 \pd e$.  Moreover, it is not even associative, for example, $(h \ast e) \ast e - h \ast (e \ast e) = 2 \normord{\pd e e} + 2 \normord{ee}$.

Consider now, for each $\NN$-graded $\VOA{V}$-module $\Mod{M}$, the map taking a \voa{} element $A \in \VOA{V}$ to the linear endomorphism $A_0$, restricted to the space of ground states $\overline{\Mod{M}}$.  From \eqref{eq:GCR1} and \eqref{eq:Zhu1}, we see that
\begin{equation}
\pi_{\Mod{M}}(A \ast B) = \pi_{\Mod{M}}(A) \pi_{\Mod{M}}(B).
\end{equation}
Because the algebra of zero modes is associative, it follows that the failure of $\ast$ to be associative on $\VOA{V}$ must be explained by the discrepancies being mapped to zero:
\begin{equation}
\pi_{\Mod{M}} \bigl( (A \ast B) \ast C - A \ast (B \ast C) \bigr) = 0,
\end{equation}
for all \(A, B, C \in \VOA{V}\) and all $\NN$-graded $\VOA{V}$-modules $\Mod{M}$.  Similarly, discrepancies between the commutation rules of the zero modes and the $\ast$-commutation rules must also map to zero.  In accord with these observations, it is easy to check that $\pi_{\Mod{M}}$ does map $h*e - e*h - 2e = 2(e + \pd e)$  and $(h \ast e) \ast e - h \ast (e \ast e) = 2 (\normord{\pd e e} + \normord{ee})$ to zero.

To accurately reflect the algebra of zero modes acting on the ground states, we should therefore quotient the \voa{} by the intersection (over all $\Mod{M}$) of the kernels of the $\pi_{\Mod{M}}$.  One can obtain many elements of this kernel by again appealing to generalised commutation relations, in this case that obtained from 
\begin{equation}
\oint_0 \oint_w \frac{A(z) B(w) \: z^{h_A+1} w^{h_B-1}}{(z-w)^2} \: \frac{\dd z}{2 \pi \ii} \frac{\dd w}{2 \pi \ii}.
\end{equation}
The resulting generalised commutation relation, applied to $v \in \overline{\Mod{M}}$, takes the form
\begin{equation} \label{eq:GCR2}
A_0 B_0 v = \sum_{j\ge -h_A-1} \binom{h_A+1}{j+h_A+1} (A_jB)_0 v \qquad \Ra \qquad \sum_{j\ge -h_A-1} \binom{h_A}{j+h_A+1} (A_jB)_0 v = 0,
\end{equation}
where we have combined this relation with that of \eqref{eq:GCR1} in order to obtain the vanishing condition.

This motivates the definition \cite{ZhuMod96} of Zhu's other product $\circ$ on $\VOA{V}$:
\begin{equation} \label{eq:Zhu2}
A \circ B = \sum_{j\ge -h_A-1} \binom{h_A}{j+h_A+1} A_jB = \oint_0 A(z) B \: \frac{(1+z)^{h_A}}{z^2} \: \frac{\dd z}{2 \pi \ii}.
\end{equation}
It is clear from \eqref{eq:GCR2} and \eqref{eq:Zhu2} that all elements of the form $A \circ B$ belong to the kernel of every $\pi_{\Mod{M}}$.  Indeed, Zhu showed that the space $O(\VOA{V})$ spanned by the elements of this form is a two-sided ideal of $\VOA{V}$.  There appears to be some confusion in the literature as to whether Zhu proved explicitly that $O(\VOA{V})$ is in fact the intersection of the kernels of the $\pi_{\Mod{M}}$ (for example, a remark amounting to this is stated without proof in \cite{FreVer92}), but this result may be found in \cite[App.~A.2]{MatZer05}.  In any case, the quotient $\zhu{\VOA{V}} = \VOA{V} / O(\VOA{V})$ is what is now referred to as Zhu's algebra.  It is a unital associative algebra with respect to Zhu's $\ast$ product.

Consideration of the elements of $O(\VOA{V})$, for example, $T + \pd T = T \circ \Omega \in O(\VOA{V})$, shows that Zhu's algebra is not graded by conformal weight.  It is, however, filtered by conformal weight in the sense that it has an increasing sequence of subspaces $\filtzhu{0}{\VOA{V}} \subseteq \filtzhu{1}{\VOA{V}} \subseteq \filtzhu{2}{\VOA{V}} \subseteq \cdots$, where $\filtzhu{m}{\VOA{V}}$ is the image in $\zhu{\VOA{V}}$ of $\bigoplus_{n=0}^m \VOA{V}_n$ (and $\VOA{V}$ is here regarded as an $\NN$-graded module over itself).

It should now be clear that the space \(\overline{\Mod{M}}\) of ground states of a \(\VOA{V}\)-module \(\Mod{M}\) is an \(\zhu{\VOA{V}}\)-module. The converse to this statement is of great importance to the representation theory of \voas{}. 
\begin{cthm}[\protect{\cite[Thms.~2.2.1--2]{ZhuMod96}}] \label{thm:ZhuSimples}
  There is a bijective correspondence between isomorphism classes of simple \(\zhu{\VOA{V}}\)-modules \(\overline{\Mod{M}}\) and simple \(\NN\)-graded \(\VOA{V}\)-modules. More precisely, for every simple \(\NN\)-graded \(\VOA{V}\)-module \(\Mod{M}\), the ground states form a simple \(\zhu{\VOA{V}}\)-module $\overline{\Mod{M}}$ and for every simple \(\zhu{\VOA{V}}\)-module \(\overline{\Mod{M}}\), there exists a simple \(\NN\)-graded \(\VOA{V}\) module $\Mod{M}$ with \(\overline{\Mod{M}}\) as its space of ground states. 
\end{cthm}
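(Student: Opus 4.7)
The plan is to prove the two directions of the correspondence separately and then show that the constructions are mutually inverse on isomorphism classes. For the first (easier) direction, suppose $\Mod{M}$ is a simple $\NN$-graded $\VOA{V}$-module and let $\overline{\Mod{M}} = \Mod{M}_0$ be its ground state space. The motivating discussion preceding the theorem shows that the zero modes of fields of $\VOA{V}$ act on $\overline{\Mod{M}}$ and that this action factors through the $\ast$-product, with every element of $O(\VOA{V})$ mapping to zero; hence $\overline{\Mod{M}}$ is naturally a $\zhu{\VOA{V}}$-module. To show simplicity, let $\overline{\Mod{W}} \subseteq \overline{\Mod{M}}$ be a nonzero $\zhu{\VOA{V}}$-submodule and let $\Mod{W}$ be the $\VOA{V}$-submodule of $\Mod{M}$ generated by $\overline{\Mod{W}}$. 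By simplicity of $\Mod{M}$, $\Mod{W} = \Mod{M}$, so it suffices to show $\Mod{W} \cap \overline{\Mod{M}} = \overline{\Mod{W}}$. This is the crucial computation: any element of $\Mod{W} \cap \overline{\Mod{M}}$ is obtained by acting on $\overline{\Mod{W}}$ with a monomial in modes of total conformal-weight degree zero, and by repeated application of the generalised commutation relations \eqref{eq:GCR1} together with the fact that positive modes raise and negative modes lower the $L_0$-grading, any such action reduces to a sequence of zero-mode actions on $\overline{\Mod{W}}$, which by construction preserves $\overline{\Mod{W}}$.

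For the converse direction, given a simple $\zhu{\VOA{V}}$-module $\overline{\Mod{M}}$, I would construct an $\NN$-graded $\VOA{V}$-module $\Mod{M}$ whose ground states recover $\overline{\Mod{M}}$ by an induction-and-quotient procedure. First, extend $\overline{\Mod{M}}$ to a module over the subspace of $\VOA{V}$ spanned by modes of non-negative $L_0$-degree: positive modes are declared to act as zero, and zero modes act via the given $\zhu{\VOA{V}}$-structure. This extension is well-defined precisely because $O(\VOA{V})$ (which encodes the generalised commutation relations \eqref{eq:GCR2} forcing certain combinations of zero modes to vanish on ground states) is in the kernel of the action. Next, induce this to obtain a universal $\NN$-graded $\VOA{V}$-module $\widetilde{\Mod{M}}$, which one constructs as a suitable quotient of a tensor product involving the negative mode algebra and $\overline{\Mod{M}}$, defined by the Borcherds identities of $\VOA{V}$. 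Finally, let $\Mod{N} \subseteq \widetilde{\Mod{M}}$ be the sum of all graded $\VOA{V}$-submodules that intersect the bottom copy of $\overline{\Mod{M}}$ trivially; set $\Mod{M} = \widetilde{\Mod{M}} / \Mod{N}$.

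It remains to verify that $\Mod{M}$ is a simple $\NN$-graded $\VOA{V}$-module with ground state space canonically isomorphic to $\overline{\Mod{M}}$. The $\NN$-grading is inherited from the induction, and the space of ground states is $\overline{\Mod{M}}$ by construction of $\Mod{N}$. For simplicity: any nonzero graded submodule of $\Mod{M}$ must, by maximality of $\Mod{N}$, intersect the image of $\overline{\Mod{M}}$ nontrivially; the intersection is a $\zhu{\VOA{V}}$-submodule of $\overline{\Mod{M}}$ (by the first direction applied to $\Mod{M}$), hence equals $\overline{\Mod{M}}$, and then the submodule contains all of $\Mod{M}$ because $\overline{\Mod{M}}$ generates $\widetilde{\Mod{M}}$ as a $\VOA{V}$-module. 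Bijectivity at the level of isomorphism classes then follows by showing that the two constructions are mutually inverse: the ground states of the constructed $\Mod{M}$ are $\overline{\Mod{M}}$ by design, while the $\VOA{V}$-module built from the ground states of a simple $\Mod{M}$ must agree with $\Mod{M}$ by the universal property of the induction followed by the same maximal-quotient argument.

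The main obstacle is the construction of the induced module $\widetilde{\Mod{M}}$ and, above all, verifying that the Borcherds (Jacobi) identity survives the quotienting process that defines the $\VOA{V}$-action on $\widetilde{\Mod{M}}$. Concretely, one must show that the relations forced upon the zero-mode action of $\overline{\Mod{M}}$ (namely that $O(\VOA{V})$ acts as zero) are exactly the right relations needed to guarantee that the full mode algebra acts consistently on the induced space, with no further unintended identifications appearing at higher conformal weight. This is essentially the content of Zhu's original technical work \cite{ZhuMod96}: one has to analyse combinations like \eqref{eq:GCR2} not only for zero modes on ground states but for all modes on arbitrary homogeneous components, and show that no additional constraints are needed beyond the $O(\VOA{V})$-relations at the bottom.
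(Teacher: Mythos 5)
The paper does not prove this statement at all --- it is quoted as a cited theorem from \cite{ZhuMod96}, with the surrounding discussion in \appref{sec:zhu} only sketching Zhu's construction of the universal $\NN$-graded module and its maximal quotient. Your outline follows essentially that same route (zero modes on ground states via \eqref{eq:GCR1} for one direction, induction plus quotient by the maximal graded submodule meeting $\overline{\Mod{M}}$ trivially for the other), and, like the paper, it defers the genuinely hard technical point --- well-definedness of the $\VOA{V}$-action on the induced module --- to Zhu's original work, which is appropriate here.
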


Zhu actually gives a construction of a universal \(\NN\)-graded \(\VOA{V}\)-module \(\Mod{M}\) from a simple \(\zhu{\VOA{V}}\)-module \(\overline{\Mod{M}}\) so that any \(\NN\)-graded \(\VOA{V}\)-module \(\Mod{M}^\prime\) with \(\overline{\Mod{M}}\) as its space of ground states is a quotient of \(\Mod{M}\). If the \voa{} \(\VOA{V}\) is rational, then its modules are semisimple and thus \(\Mod{M}\) is simple. However, for more general \voas{} the universal \(\Mod{M}\) that Zhu constructs may be reducible.  Moreover, one may also start with a non-simple $\zhu{\VOA{V}}$-module; the result is then always reducible. 

We close this appendix with two results of Frenkel and Zhu that will be used in this paper.
\begin{cprop}[\protect{\cite[Prop.~1.4.2]{FreVer92}}] \label{prop:zhuideal}
  Let \(\VOA{V}\) be a \voa{} with an ideal \(\VOA{I}\) that does not contain the vacuum $\Omega$ or the conformal vector $T$. Then, the image $\zhu{\VOA{I}}$ of $\VOA{I}$ in $\zhu{\VOA{V}}$ is a two-sided ideal satisfying 
  \begin{align}
    \zhu{\VOA{V}/\VOA{I}}=\frac{\zhu{\VOA{V}}}{\zhu{\VOA{I}}}.
  \end{align}
\end{cprop}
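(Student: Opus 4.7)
The plan is to work directly with the defining quotient $\zhu{\VOA{V}} = \VOA{V}/O(\VOA{V})$ and show that passage to a VOA quotient $\VOA{V}/\VOA{I}$ commutes with the construction of Zhu's algebra. The hypothesis that $\VOA{I}$ contains neither $\Omega$ nor $T$ is precisely what is needed so that $\VOA{V}/\VOA{I}$ retains the structure of a \voa{} (so its Zhu algebra makes sense to talk about).

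First I would verify that $\zhu{\VOA{I}}$, defined as the image of $\VOA{I}$ under the natural map $\VOA{V}\to\zhu{\VOA{V}}$, is a two-sided ideal of $\zhu{\VOA{V}}$. By the integral formula \eqref{eq:Zhu1}, for any $A\in\VOA{V}$ and $B\in\VOA{I}$ the product $A\ast B$ is a finite linear combination of states of the form $A_j B$ with $j\ge -h_A$. Since $\VOA{I}$ is a \voa{} ideal, it is closed under the action of every mode $A_j$, hence $A\ast B\in\VOA{I}$ and similarly $B\ast A\in\VOA{I}$. Passing to $\zhu{\VOA{V}}$, the image $\zhu{\VOA{I}}$ is therefore stable under left and right $\ast$-multiplication by $\zhu{\VOA{V}}$.

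Next I would establish the algebra isomorphism. The \voa{} projection $\pi\colon\VOA{V}\twoheadrightarrow\VOA{V}/\VOA{I}$ sends modes to modes and conformal weights to conformal weights, so it intertwines both the $\ast$ and $\circ$ products, sending $O(\VOA{V})$ into $O(\VOA{V}/\VOA{I})$. It therefore descends to a surjective algebra map
\begin{equation*}
\bar\pi\colon \zhu{\VOA{V}}\longrightarrow \zhu{\VOA{V}/\VOA{I}},
\end{equation*}
and by construction $\VOA{I}\subseteq\ker\bar\pi$, so $\zhu{\VOA{I}}\subseteq\ker\bar\pi$. The nontrivial point is the reverse inclusion: an element of $\VOA{V}$ whose image lies in $O(\VOA{V}/\VOA{I})$ must lie in $\VOA{I}+O(\VOA{V})$. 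This is where I expect the main obstacle to sit. One checks it by lifting: any generator $\bar A\circ \bar B$ of $O(\VOA{V}/\VOA{I})$ equals $\pi(A\circ B)$ for arbitrary lifts $A,B$, so if $v\in\VOA{V}$ satisfies $\pi(v)\in O(\VOA{V}/\VOA{I})$, then $\pi(v)=\pi(w)$ for some $w\in O(\VOA{V})$ and hence $v-w\in\VOA{I}$, giving $v\in O(\VOA{V})+\VOA{I}$. This shows $\ker\bar\pi=\zhu{\VOA{I}}$.

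Combining the two steps, the first isomorphism theorem delivers $\zhu{\VOA{V}/\VOA{I}}\cong\zhu{\VOA{V}}/\zhu{\VOA{I}}$ as unital associative algebras, as required. The conditions $\Omega,T\notin\VOA{I}$ enter implicitly to guarantee that the target $\zhu{\VOA{V}/\VOA{I}}$ is the Zhu algebra of a genuine \voa{} (with well-defined vacuum and conformal structure), so that the two sides of the identity are meaningful.
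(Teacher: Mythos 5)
The paper does not prove this statement at all: it is quoted as a cited result (Proposition~1.4.2 of \cite{FreVer92}), so there is no in-text argument to compare against. Your proof is correct and is essentially the standard Frenkel--Zhu argument: show $O(\VOA{V}/\VOA{I})=\pi\bigl(O(\VOA{V})\bigr)$ by lifting the generators $\bar A\circ\bar B$, deduce $\ker\bar\pi=(\VOA{I}+O(\VOA{V}))/O(\VOA{V})=\zhu{\VOA{I}}$, and apply the first isomorphism theorem. Two small points are worth making explicit: the lifts $A$ must be chosen homogeneous of the same conformal weight as $\bar A$ (possible because an ideal is stable under all modes of $T$, hence $L_0$-graded, so the quotient grading is induced and $\pi$ genuinely intertwines $\ast$ and $\circ$); and the claim $B\ast A\in\VOA{I}$ for $B\in\VOA{I}$, $A\in\VOA{V}$ uses that $\VOA{I}$ is a two-sided VOA ideal, i.e.\ closed under $B_jA$ as well as $A_jB$, which follows from skew-symmetry together with $L_{-1}\VOA{I}\subseteq\VOA{I}$.
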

\noindent This result allows us to describe Zhu's algebra for the simple admissible level \voas{} $\AdmMod{u}{v}$ of $\AKMA{sl}{2}$ in terms of that of the universal \voas{} $\UVOA{k}$ and the generators (singular vectors) of their maximal ideals.  The next result determines the latter Zhu algebras.
\begin{cprop}[\protect{\cite[Thm.~3.1.1]{FreVer92}}] \label{prop:zhusl2}
  Zhu's algebra for the universal \(\AKMA{sl}{2}\) \voa{} at level \(k \neq -2\) is isomorphic to the \uea{} of non-affine \(\SLA{sl}{2}\):
  \begin{equation}
    \zhu{\UVOA{k}}\cong \UEA\:\SLA{sl}{2}.
  \end{equation}
  Moreover, the image of $T$ in $\zhu{\UVOA{k}}$ may be identified with the quadratic Casimir of \(\SLA{sl}{2}\): 
  \begin{equation}
    T = \frac{1}{2(k+2)}Q,\qquad Q=\frac{1}{2}h^2-ef-fe.
  \end{equation}
\end{cprop}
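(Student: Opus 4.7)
The plan is to establish the isomorphism $\zhu{\UVOA{k}} \cong \UEA\:\SLA{sl}{2}$ by constructing a surjective algebra homomorphism from $\UEA\:\SLA{sl}{2}$, and then proving its injectivity via the existence of sufficiently many $\UVOA{k}$-modules whose spaces of ground states realise a separating family of $\UEA\:\SLA{sl}{2}$-modules. Throughout, I write $\bar{g}$ for the image of $g \in \UVOA{k}$ in $\zhu{\UVOA{k}}$ and identify $\bar{e}, \bar{h}, \bar{f}$ with the images of the generating fields.

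First I would construct the candidate map. For a conformal-weight-one field $A(z)$ applied to any $B \in \UVOA{k}$, \eqref{eq:Zhu1} gives $A \ast B = A_{-1}B + A_0 B$, so Zhu's $\ast$-product on weight-one fields reduces modulo lower-filtered terms to the normally-ordered product. Using this with $A,B \in \{e,h,f\}$, a direct computation shows
\begin{equation*}
\bar{h} \ast \bar{e} - \bar{e} \ast \bar{h} = 2 \bar{e}, \qquad \bar{h} \ast \bar{f} - \bar{f} \ast \bar{h} = -2 \bar{f}, \qquad \bar{e} \ast \bar{f} - \bar{f} \ast \bar{e} = -\bar{h},
\end{equation*}
the last sign being a consequence of our convention \eqref{eq:SL2OPE}. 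These relations are isomorphic to those of $\SLA{sl}{2}$ (after renaming $\bar{f} \mapsto -\bar{f}$), so the assignment $e,h,f \mapsto \bar{e},\bar{h},\bar{f}$ extends to an algebra homomorphism $\Phi\colon \UEA\:\SLA{sl}{2} \to \zhu{\UVOA{k}}$.

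Next I would prove that $\Phi$ is surjective. Recall from \eqref{eq:Zhu2} that for any weight-one field $A$ and any $B$, the element $A \circ B = A_{-2}B + A_{-1}B$ lies in $O(\UVOA{k})$, so in $\zhu{\UVOA{k}}$ one has the reduction $\overline{A_{-2}B} \equiv -\overline{A_{-1}B}$. Iterating with the relations arising from $A \circ (A' \circ B)$, one obtains by induction on $n \geq 1$ that each $\overline{A_{-n-1}B}$ can be expressed modulo $O(\UVOA{k})$ in terms of products $\overline{A_{-1}A'_{-1}\cdots \Omega}$ of only $(-1)$-modes. Since $\UVOA{k}$ is spanned by PBW monomials $g^{(1)}_{n_1} \cdots g^{(r)}_{n_r}\Omega$ in the modes of the generators $e,h,f$, this reduction forces every element of $\zhu{\UVOA{k}}$ to be a $\ast$-polynomial in $\bar{e}, \bar{h}, \bar{f}$, proving $\Phi$ is surjective.

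For injectivity, I would invoke Zhu's correspondence (\thmref{thm:ZhuSimples}). Via the construction of relaxed Verma modules $\RelVer{\lambda;\Delta}$ for $\AKMA{sl}{2}$ discussed in \secref{sec:RelEx}, every simple weight $\SLA{sl}{2}$-module listed in \propref{prop:FinSL2WtMods} arises as the space of ground states of an $\NN$-graded $\UVOA{k}$-module, and hence carries a natural $\zhu{\UVOA{k}}$-action compatible with its $\UEA\:\SLA{sl}{2}$-action through $\Phi$. Since the family of simple weight $\UEA\:\SLA{sl}{2}$-modules separates points of $\UEA\:\SLA{sl}{2}$ (for example, the Verma modules $\Disc{\lambda}$ as $\lambda$ varies over generic values already do so), the kernel of $\Phi$ vanishes. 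I expect this injectivity step to be the main obstacle: one must verify carefully that the $\zhu{\UVOA{k}}$-action on each such space of ground states really is the pullback along $\Phi$ of the tautological $\SLA{sl}{2}$-action, which requires matching zero-mode actions as prescribed by the derivation of Zhu's algebra in \appref{sec:zhu}.

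Finally, for the identification of $\bar{T}$ with the quadratic Casimir, I would use the formula \eqref{eq:DefT} for $T$ together with the weight-one translations $\normord{AB} = A \ast B - A_0 B$ recorded above. Direct computation gives $\overline{\normord{hh}} = \bar{h}^2$ (since $h_0 h = 0$), $\overline{\normord{ef}} = \bar{e}\bar{f} + \bar{h}$ and $\overline{\normord{fe}} = \bar{f}\bar{e} - \bar{h}$; substituting into $T = \frac{1}{2t}\bigl( \frac{1}{2}\normord{hh} - \normord{ef} - \normord{fe} \bigr)$ yields the claimed expression $\bar{T} = \frac{1}{2(k+2)}(\frac{1}{2}\bar{h}^2 - \bar{e}\bar{f} - \bar{f}\bar{e})$.
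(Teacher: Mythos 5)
This statement is quoted from Frenkel--Zhu (\cite[Thm.~3.1.1]{FreVer92}) and the paper offers no proof of it, so there is nothing internal to compare your argument against; what you have written is essentially the standard Frenkel--Zhu argument, and its overall architecture (homomorphism from $\UEA\:\SLA{sl}{2}$, surjectivity by reduction modulo $O(\UVOA{k})$, injectivity by realising enough $\SLA{sl}{2}$-modules as ground states of induced $\UVOA{k}$-modules, then the Sugawara computation for $T$) is sound. Your computations of the $\ast$-commutators and of $\overline{\normord{hh}}$, $\overline{\normord{ef}}$, $\overline{\normord{fe}}$ are correct in the paper's conventions, and the injectivity step does work: inducing from any weight $\SLA{sl}{2}$-module gives an $\NN$-graded $\UVOA{k}$-module (universality of $\UVOA{k}$ is what makes this automatic) whose zero modes act on the ground states by the original $\SLA{sl}{2}$-action, and the $\SLA{sl}{2}$ Verma modules $\FinDisc{\lambda}$ (not $\Disc{\lambda}$, which in this paper denotes affine modules) separate points of $\UEA\:\SLA{sl}{2}$.

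The one step whose mechanism, as stated, does not work is the surjectivity reduction. Elements of the form $A \circ (A' \circ B)$ give you nothing new: $A' \circ B$ already lies in $O(\UVOA{k})$, and $\circ$-multiplying it again does not produce the relations needed to rewrite $\overline{A_{-n-1}B}$ for $n \ge 2$. The relation $A_{-2}B \equiv -A_{-1}B$ only handles depth two. To push deeper one needs the larger family of elements of $O(\UVOA{k})$ given by $\oint_0 A(z) B \, (1+z)^{h_A} z^{-2-n} \, \frac{\dd z}{2\pi\ii}$ for all $n \ge 0$ (Zhu's lemma), or equivalently $\circ$-products whose left factor is a derivative or composite field such as $\pd A$, together with an induction on the conformal-weight filtration to reduce arbitrary PBW monomials $g^{(1)}_{-n_1}\cdots g^{(r)}_{-n_r}\Omega$ to $\ast$-polynomials in $\bar{e},\bar{h},\bar{f}$. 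With that replacement the argument closes up and reproduces the cited theorem.
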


\flushleft

\end{document}